\synctex=1
\documentclass[aps,11pt,twoside, nofootinbib]{revtex4}
\usepackage{amsthm}
\usepackage{amsmath,latexsym,amssymb,verbatim,enumerate,graphicx}
\usepackage{hyperref}
\usepackage{color}

\newtheorem{thm}{Theorem}
\newtheorem*{thm*}{Theorem}
\newtheorem{definition}[thm]{Definition} 
\newtheorem{prop}[thm]{Proposition}
\newtheorem*{prop*}{Proposition}
\newtheorem{lem}[thm]{Lemma}
\newtheorem*{lem*}{Lemma}

\newtheorem*{fact*}{Fact}
\newtheorem{cor}[thm]{Corollary}
\newtheorem*{cor*}{Corollary}

\newtheorem*{rep@theorem}{\rep@title}
\newcommand{\newreptheorem}[2]{%
\newenvironment{rep#1}[1]{%
 \def\rep@title{#2 \ref{##1} (restatement)}%
 \begin{rep@theorem}}%
 {\end{rep@theorem}}}
\makeatother

\newreptheorem{thm}{Theorem}
\newreptheorem{lem}{Lemma}
\newreptheorem{cor}{Corollary}

\usepackage{prettyref}
\newcommand{\savehyperref}[2]{\texorpdfstring{\hyperref[#1]{#2}}{#2}}

\newrefformat{eq}{\savehyperref{#1}{\textup{(\ref*{#1})}}}
\newrefformat{lem}{\savehyperref{#1}{Lemma~\ref*{#1}}}
\newrefformat{def}{\savehyperref{#1}{Definition~\ref*{#1}}}
\newrefformat{thm}{\savehyperref{#1}{Theorem~\ref*{#1}}}
\newrefformat{cor}{\savehyperref{#1}{Corollary~\ref*{#1}}}
\newrefformat{cha}{\savehyperref{#1}{Chapter~\ref*{#1}}}
\newrefformat{sec}{\savehyperref{#1}{Section~\ref*{#1}}}
\newrefformat{app}{\savehyperref{#1}{Appendix~\ref*{#1}}}
\newrefformat{tab}{\savehyperref{#1}{Table~\ref*{#1}}}
\newrefformat{fig}{\savehyperref{#1}{Figure~\ref*{#1}}}
\newrefformat{hyp}{\savehyperref{#1}{Hypothesis~\ref*{#1}}}
\newrefformat{alg}{\savehyperref{#1}{Algorithm~\ref*{#1}}}
\newrefformat{rem}{\savehyperref{#1}{Remark~\ref*{#1}}}
\newrefformat{item}{\savehyperref{#1}{Item~\ref*{#1}}}
\newrefformat{step}{\savehyperref{#1}{step~\ref*{#1}}}
\newrefformat{conj}{\savehyperref{#1}{Conjecture~\ref*{#1}}}
\newrefformat{fact}{\savehyperref{#1}{Fact~\ref*{#1}}}
\newrefformat{prop}{\savehyperref{#1}{Proposition~\ref*{#1}}}
\newrefformat{prob}{\savehyperref{#1}{Problem~\ref*{#1}}}
\newrefformat{claim}{\savehyperref{#1}{Claim~\ref*{#1}}}
\newrefformat{relax}{\savehyperref{#1}{Relaxation~\ref*{#1}}}
\newrefformat{red}{\savehyperref{#1}{Reduction~\ref*{#1}}}
\newrefformat{part}{\savehyperref{#1}{Part~\ref*{#1}}}



\newcommand{\MYstore}[2]{%
  \global\expandafter \def \csname MYMEMORY #1 \endcsname{#2}%
}

\newcommand{\MYload}[1]{%
  \csname MYMEMORY #1 \endcsname%
}

\newcommand{\MYnewlabel}[1]{%
  \newcommand\MYcurrentlabel{#1}%
  \MYoldlabel{#1}%
}

\newcommand{\MYdummylabel}[1]{}

\newcommand{\torestate}[1]{%
  \let\MYoldlabel\label%
  \let\label\MYnewlabel%
  #1%
  \MYstore{\MYcurrentlabel}{#1}%
  \let\label\MYoldlabel%
}

\newcommand{\restatethm}[1]{%
  \let\MYoldlabel\label
  \let\label\MYdummylabel
  \begin{repthm}{#1}
    \MYload{#1}
  \end{repthm}
  \let\label\MYoldlabel
}

\newcommand{\restatelem}[1]{%
  \let\MYoldlabel\label
  \let\label\MYdummylabel
  \begin{replem}{#1}
    \MYload{#1}
  \end{replem}
  \let\label\MYoldlabel
}

\newcommand{\restatelemalt}[1]{%
  \let\MYoldlabel\label
  \let\label\MYdummylabel
  \begin{lem*}[Restatement of \prettyref{#1}]
    \MYload{#1}
  \end{lem*}
  \let\label\MYoldlabel
}

\newcommand{\restatecor}[1]{%
  \let\MYoldlabel\label
  \let\label\MYdummylabel
  \begin{repcor}{#1}
    \MYload{#1}
  \end{repcor}
  \let\label\MYoldlabel
}

\newcommand{\restateprop}[1]{%
  \let\MYoldlabel\label
  \let\label\MYdummylabel
  \begin{prop*}[Restatement of \prettyref{#1}]
    \MYload{#1}
  \end{prop*}
  \let\label\MYoldlabel
}

\newcommand{\restatefact}[1]{%
  \let\MYoldlabel\label
  \let\label\MYdummylabel
  \begin{fact*}[Restatement of \prettyref{#1}]
    \MYload{#1}
  \end{fact*}
  \let\label\MYoldlabel
}

\newcommand{\restate}[1]{%
  \let\MYoldlabel\label
  \let\label\MYdummylabel
  \MYload{#1}
  \let\label\MYoldlabel
}

\def\ba#1\ea{\begin{align}#1\end{align}}
\def\ban#1\ean{\begin{align*}#1\end{align*}}

\newcommand{\ot}{\otimes}
\newcommand{\be}{\begin{equation}}
\newcommand{\ee}{\end{equation}}
\def\hcal{{\cal H}}

\def\benum{\begin{enumerate}}
\def\eenum{\end{enumerate}}

\def\squareforqed{\hbox{\rlap{$\sqcap$}$\sqcup$}}
\def\qed{\ifmmode\squareforqed\else{\unskip\nobreak\hfil
\penalty50\hskip1em\null\nobreak\hfil\squareforqed
\parfillskip=0pt\finalhyphendemerits=0\endgraf}\fi}
\def\endenv{\ifmmode\;\else{\unskip\nobreak\hfil
\penalty50\hskip1em\null\nobreak\hfil\;
\parfillskip=0pt\finalhyphendemerits=0\endgraf}\fi}


\newcommand{\bra}[1]{\langle #1|}
\newcommand{\ket}[1]{|#1\rangle}
\newcommand{\braket}[2]{\langle #1|#2\rangle}
\DeclareMathOperator{\tr}{tr}

\newcommand{\bbC}{\mathbb{C}}
\newcommand{\bbU}{\mathbb{U}}
\newcommand{\id}{\mathbb{I}}

\newcommand{\ben}{\begin{equation}}
\newcommand{\een}{\end{equation}}

\newcommand{\<}{\langle}
\renewcommand{\>}{\rangle}
\newcommand{\I}{{\rm I}}
\def\L{\left}
\def\R{\right}

\def\id{{\operatorname{id}}}
\DeclareMathOperator{\ad}{ad}
\DeclareMathOperator{\diag}{diag}
\DeclareMathOperator{\sym}{sym}
\DeclareMathOperator{\rank}{rank}
\DeclareMathOperator{\Par}{Par}
\DeclareMathOperator{\supp}{supp}
\def\be{\begin{equation}}
\def\ee{\end{equation}}
\def\ben{\begin{eqnarray}}
\def\een{\end{eqnarray}}
\def\ot{\otimes}

\def\bei{\begin{itemize}}
\def\eei{\end{itemize}}

\def\E{{\mathbb{E}}}

\def\F{{\mathbb{F}}}
\def\I{{\mathbb{I}}}


\def\ep{\epsilon}
\def\eps{\epsilon}

\def\hcal{{\cal H}}
\def\gcal{{\cal G}}
\def\cN{{\cal N}}
\def\cP{{\cal P}}
\def\cQ{{\cal Q}}
\def\cS{{\cal S}}

\mathchardef\ordinarycolon\mathcode`\:
\mathcode`\:=\string"8000
\def\vcentcolon{\mathrel{\mathop\ordinarycolon}}
\begingroup \catcode`\:=\active
  \lowercase{\endgroup
  \let :\vcentcolon
  }

\newcommand{\nc}{\newcommand}
\nc{\rnc}{\renewcommand} \nc{\beq}{\begin{equation}}
\nc{\eeq}{{\end{equation}}} \nc{\bea}{\begin{eqnarray}}
\nc{\eea}{\end{eqnarray}} \nc{\beqa}{\begin{eqnarray}}
\nc{\eeqa}{\end{eqnarray}} \nc{\lbar}[1]{\overline{#1}}
 \nc{\proj}[1]{|#1\rangle\!\langle #1 |} 
\nc{\avg}[1]{\langle#1\rangle}

\nc{\conv}{\operatorname{conv}}
\nc{\smfrac}[2]{\mbox{$\frac{#1}{#2}$}} \nc{\Tr}{\operatorname{Tr}}
\nc{\ox}{\otimes} \nc{\dg}{\dagger} \nc{\dn}{\downarrow}
\nc{\lmax}{\lambda_{\text{max}}}
\nc{\lmin}{\lambda_{\text{min}}}

\nc{\csupp}{{\operatorname{csupp}}}
\nc{\qsupp}{{\operatorname{qsupp}}} \nc{\var}{\operatorname{var}}
\nc{\rar}{\rightarrow} \nc{\lrar}{\longrightarrow}
\nc{\poly}{\operatorname{poly}}
\nc{\polylog}{\operatorname{polylog}} \nc{\Lip}{\operatorname{Lip}}
\nc{\Om}{\Omega}
\nc{\wt}[1]{\widetilde{#1}}

\def\>{\rangle}
\def\<{\langle}

\def\d{\delta}

\nc{\glneq}{{\raisebox{0.6ex}{$>$}  \hspace*{-1.8ex} \raisebox{-0.6ex}{$<$}}}
\nc{\gleq}{{\raisebox{0.6ex}{$\geq$}\hspace*{-1.8ex} \raisebox{-0.6ex}{$\leq$}}}


\nc{\vholder}[1]{\rule{0pt}{#1}}
\nc{\wh}[1]{\widehat{#1}}
\nc{\h}[1]{\widehat{#1}}

\nc{\ob}[1]{#1}

\def\beq{\begin {equation}}
\def\eeq{\end {equation}}

\def\be{\begin{equation}}
\def\ee{\end{equation}}

\nc{\eq}[1]{(\ref{eq:#1})} 
\nc{\eqs}[2]{\eq{#1} and \eq{#2}}

\nc{\eqn}[1]{Eq.~(\ref{eqn:#1})}
\nc{\eqns}[2]{Eqs.~(\ref{eqn:#1}) and (\ref{eqn:#2})}

\newcommand{\secref}[1]{Section~\ref{sec:#1}}
\newcommand{\appref}[1]{Appendix~\ref{sec:#1}}
\newcommand{\lemref}[1]{Lemma~\ref{lem:#1}}
\newcommand{\thmref}[1]{Theorem~\ref{thm:#1}}
\newcommand{\propref}[1]{Proposition~\ref{prop:#1}}

\newcommand{\defref}[1]{Definition~\ref{def:#1}}
\newcommand{\corref}[1]{Corollary~\ref{cor:#1}}

\nc{\region}{\cS\cW}

\begin{document}

\title{{\Large Local random quantum circuits are approximate polynomial-designs}}

\author{Fernando G.S.L. Brand\~ao}
\email{fgslbrandao@gmail.com}
\affiliation{Department of Computer Science, University College London, London, UK}

\author{Aram W. Harrow}
\email{aram@mit.edu}
\affiliation{Center for Theoretical Physics, Massachusetts Institute
  of Technology, Cambridge, MA, USA}

\author{Micha\l{} Horodecki}
 \email{fizmh@ug.edu.pl}
\affiliation{Institute for Theoretical Physics and Astrophysics, University of Gda\'nsk, 80-952 Gda\'nsk, Poland}

\begin{abstract}

We prove that local random quantum circuits acting on $n$ qubits composed of $ O(t^{10} n^2)$ many nearest neighbor two-qubit gates form an approximate unitary $t$-design. Previously it was unknown whether random quantum circuits were a $t$-design for any $t > 3$.

The proof is based on an interplay of techniques from quantum many-body theory, representation theory, and the theory of Markov chains. In particular we employ a result of Nachtergaele for lower bounding the spectral gap of frustration-free quantum local Hamiltonians; a quasi-orthogonality property of permutation matrices; a result of Oliveira which extends to the unitary group the path-coupling method for bounding the mixing time of random walks; and a result of Bourgain and Gamburd showing that dense subgroups of the special unitary group, composed of elements with algebraic entries, are $\infty$-copy tensor-product expanders.

We also consider pseudo-randomness properties of local random quantum circuits of small depth and prove that circuits of depth $ O(t^{10}n)$ constitute a quantum $t$-copy tensor-product expander. The proof also rests on techniques from quantum many-body theory, in particular on the detectability lemma of Aharonov, Arad, Landau, and Vazirani. 

We give applications of the results to cryptography, equilibration of closed quantum dynamics, and the generation of topological order. In particular we show the following pseudo-randomness property of generic quantum circuits:
Almost every circuit $U$ of size $O(n^{k})$ on $n$ qubits cannot be distinguished from a Haar uniform unitary by circuits of size
$O(n^{(k-9)/11})$ that are given oracle access to $U$.

\end{abstract}

\maketitle

\parskip .75ex


\section{Introduction}

Random unitary matrices are an important resource in quantum information theory and quantum computing. Examples of the use of random unitaries, drawn from the Haar measure on the unitary group, include the encoding for almost every known protocol for sending information down a quantum channel \cite{ADHW06}, approximate encryption of quantum information \cite{HLSW04}, quantum data-hiding \cite{HLSW04}, information locking \cite{HLSW04}, and solving certain instances of the hidden subgroup problem over non-abelian groups \cite{Sen06}. Yet random unitary matrices are unreasonable from a computational point of view: To implement a random Haar unitary one needs an exponential number of two-qubit gates and random bits \cite{Knill95}. Thus it is interesting to explore constructions of \textit{pseudo}-random unitaries, which can be efficiently implementable and can replace random unitaries in some respects.

An approximate unitary $t$-design is a distribution of unitaries which
mimic properties of the Haar measure for polynomials of degree up to
$t$ (in the entries of the unitaries) \cite{TothGR07,DCEL09, GAE07, ELL05,
  ODP06, DOP07, HL09, DJ10, AB08, HP07, Zni08, HL08, RS09, BV10,
  BH10}. Approximate designs have a number of interesting applications
in quantum information theory replacing the use of truly random
unitaries (see e.g. \cite{EWS03, ODP06, DOP07, HP07, BH10, HH08,
  Low09}). It has been a conjecture in the theory of quantum
pseudo-randomness that polynomial-size random quantum circuits on $n$
qubits form an approximate unitary $\poly(n)$-design
\cite{HL09}. Analogously, polynomial-size {\em reversible} circuits
are known to form approximately $\poly(n)$-wise independent
permutations \cite{BH08} (see also \cite{KNR09}). 
However, up to now, the best result known for quantum circuits was that
polynomial random quantum circuits are approximate unitary $3$-designs
\cite{BH10}, which improved on a series of papers establishing that
random circuits are approximate unitary $2$-designs \cite{ODP06, DOP07,
  HL09, DJ10, AB08, Zni08}. Moreover, efficient constructions of
quantum $t$-designs, using a polynomial number of quantum gates and
random bits, were only known for $t = O(n/ \log(n))$ \cite{HL08}. In
this paper we make progress in the problem of unitary $t$-designs. We
prove that local random quantum circuits acting on $n$ qubits composed
of polynomially many nearest neighbour two-qubit gates form an
approximate unitary $\poly(n)$-design, settling the conjecture in the
affirmative. An important conceptual advance, which we will build on in this paper,
was the realization that one can connect the problem of showing that random circuits
are $t$-unitary designs to lower bounding the spectral gap of a many-body quantum
Hamiltonian \cite{BV10}.

In the remainder of this section, we will give the definitions and
notation used in this paper.  Then we will state the main result in
\secref{main} and outline a few applications in \secref{app}.  The rest of the paper is
devoted to the proof, with an overview in \secref{proof-OV} and
the details in \secref{proof-details}.

\subsection{Approximate Unitary Designs and Quantum Tensor-Product Expanders}

We start with the definition of tensor-product expanders \cite{HH09},
which are objects similar to approximate unitary designs, but with the
approximation to the Haar measure quantified differently. Let $\mu_{\text{Haar}}$ be the Haar measure 
on $\mathbb{U}(N)$ (the group of $N \times N$ unitary matrices).

\begin{definition}\label{def:TPE}
Let $\nu$ be a distribution on $\mathbb{U}(N)$. Then $\nu$ is a $(N, \lambda, t)$ quantum
$t$-copy tensor-product expander (or TPE for short) if  
\be
g(\nu,t) := 
\left \Vert \int_{\mathbb{U}(N)} U^{\otimes t, t} \nu({\rm d}U) -
  \int_{\mathbb{U}(N)} U^{\otimes t, t} \mu_{\text{Haar}}({\rm d}U)  \right
\Vert_{\infty} \leq \lambda,
\label{eq:TPE-def}\ee
with $U^{\otimes t, t} := U^{\otimes t} \otimes (U^{*})^{\otimes t}$.
We say $\nu$ is a $(N,\lambda,\infty)$-TPE if it is a
$(N,\lambda,t)$-TPE for all $t$.
\end{definition}
This definition is meant to generalize the spectral characterization
of expander graphs, and has the similar advantage that $|\supp(\nu)|$ can be
constant even for constant $\lambda$ and unbounded $N,t$
(c.f. \cite{HH09}).    Another advantage is that the TPE condition can
be naturally amplified.
For a distribution $\nu$, let $\nu^{*k}$ be the $k$-fold convolution of $\nu$, i.e. 
\be
\nu^{*k} = \int  \delta_{U_1...U_k} \nu({\rm d}U_1)...\nu({\rm d}U_k).
\ee
Then it follows immediately from \eq{TPE-def} and the fact that 
$  \int_{\mathbb{U}(N)} U^{\otimes t, t} \mu_{\text{Haar}}({\rm d}U)$
is a projector that
\be g(\nu^{*k},t) =
g(\nu,t)^k \label{eq:amplify-expander}. \ee

 \defref{TPE} can also be expressed in terms
of quantum operations.
Define $\ad_U[X] := UXU^\dag$. 
For a distribution $\nu$ on $\mathbb{U}(N)$ let
\begin{equation}
\Delta_{\nu, t}(\rho) := \int_{\mathbb{U}(N)} \ad_{U^{\ot t}}[\rho] \nu({\rm d}U)
= \int_{\mathbb{U}(N)}  U^{\otimes t} \rho
(U^{\cal y})^{\otimes t} \nu({\rm d}U).
\end{equation}

Define, for any $p\geq 1$, the superoperator norms
\be
\Vert {\cal T} \Vert_{p \rightarrow p}
 := \sup_{X \neq 0}  \frac{\Vert {\cal T}(X) \Vert_p} {\Vert X \Vert_p},
\ee
where $\|X\|_p := (\tr |X|^p)^{1/p}$ are the Schatten norms.
An alternate definition of the TPE condition is then:
\be g(\nu, t) = \| \Delta_{\nu, t} - \Delta_{\mu_{\text{Haar},
    t}}\|_{2\rar 2}. 
\label{eq:TPE-def-22}\ee

In many applications, however, it is often more natural to work
with measures such as  the trace distance.  For example, we would like to argue that
sampling $U$ from $\nu$ and using it $t$ times results in a state that
is $\eps$-close to one that would be obtained by sampling $U$ from the
Haar measure.  This will lead us to the notion of an {\em approximate
  unitary design} (also called an $\eps$-approximate $t$-design when
we want to emphasize the parameters)

Previous research has used several definitions of $\eps$-approximate
$t$-designs, such as replacing the $\|\cdot\|_\infty$ and $\lambda $ in \eq{TPE-def}
with $\|\cdot\|_1$ and $\eps$, or replacing the $2\rar 2$ norm in
\eq{TPE-def-22} with the diamond norm (defined below).  See
\cite{Low10} for a comparison of these, and other, ways of defining
approximate unitary designs.

Here we propose a stronger definition of approximate designs, which
was suggested to us by Andreas Winter.  First, if $\cN_1, \cN_2$ are
superoperators, then we say that $\cN_1\preceq \cN_2$ iff
$\cN_2-\cN_1$ is completely positive, or equivalently if 
\be
(\cN_1 \ot \id)(\Phi_N) \preceq (\cN_2 \ot \id)(\Phi_N),
\ee
where $N$ is the input dimension of $\cN_{1,2}$, $\preceq$ here
denotes the usual semidefinite ordering, and $\ket{\Phi_N} = N^{-1/2}
\sum_{i=1}^N \ket{i,i}$ is the standard maximally entangled state on
$N\times N$ dimensions.
 
\begin{definition}\label{def:design}
Let $\nu$ be a distribution on $\mathbb{U}(N)$. Then $\nu$ is an $\epsilon$-approximate unitary $t$-design if 
\be (1-\eps) \Delta_{\mu_{\text{Haar}, t}} \preceq
\Delta_{\nu, t} \preceq
 (1+\eps) \Delta_{\mu_{\text{Haar}, t}} 
\label{eq:design-def}\ee
or equivalently
\be (1-\eps) (\Delta_{\mu_{\text{Haar}, t}}\ot \id)(\Phi_N^{\ot t}) \preceq
(\Delta_{\nu, t}\ot\id)(\Phi_N^{\ot t}) \preceq
 (1+\eps) (\Delta_{\mu_{\text{Haar}, t}} \ot \id)(\Phi_N^{\ot t})
\label{eq:design-def-J}\ee
For brevity, let $G(\nu,t)$ denote the smallest $\eps$ for which
\eq{design-def} holds.
\end{definition}

The advantage of \defref{design} is that for any state on $t$ systems that
is acted upon by a random $U^{\ot t}$ and then measured, the
probability of any measurement outcome will change by only a small
multiplicative factor whether $U$ is drawn from $\nu$ or the Haar
measure.  To relate our design definition to the distinguishability of quantum
operations, we first define the diamond norm~\cite{Kitaev:02a} of a superoperator ${\cal T}$ as follows
\be
\left \Vert  {\cal T}  \right \Vert_{\diamond} := \sup_{d} \Vert {\cal T} \otimes \id_d \Vert_{1 \rightarrow 1},
\ee
\begin{lem}\label{lem:design-defs}
If $\nu$ is an $\eps$-approximate unitary $t$-design, then 
$\| \Delta_{\mu_{\text{Haar}, t}} -\Delta_{\nu, t}\|_\diamond \leq
2\eps$.
Conversely, if $\| \Delta_{\mu_{\text{Haar}, t}} -\Delta_{\nu, t}\|_\diamond \leq
\eps$ then $\nu$ is an $\eps N^{2t}$-approximate $t$-design.
\end{lem}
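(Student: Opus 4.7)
I would unpack the multiplicative ordering in \prettyref{def:design}: it says $-\epsilon\Delta_{\mu_{\text{Haar}},t}\preceq\Delta_{\nu,t}-\Delta_{\mu_{\text{Haar}},t}\preceq\epsilon\Delta_{\mu_{\text{Haar}},t}$. Applied to any state $\rho$ on $(\mathbb{C}^N)^{\otimes t}\otimes\mathbb{C}^d$ (with an ancilla of arbitrary dimension $d$), this yields the PSD inequality $-\epsilon Y\leq X\leq\epsilon Y$ with $X=(\Delta_{\nu,t}-\Delta_{\mu_{\text{Haar}},t})\otimes\id(\rho)$ and $Y=(\Delta_{\mu_{\text{Haar}},t}\otimes\id)(\rho)$. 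Both twirls are trace-preserving, so $\tr X=0$ and $\tr Y=1$. Taking the Jordan decomposition $X=X_+-X_-$ and pairing with the projector $P_+$ onto the support of $X_+$, one obtains $\tr X_+=\tr(XP_+)\leq\epsilon\tr(YP_+)\leq\epsilon$, and symmetrically $\tr X_-\leq\epsilon$. Hence $\|X\|_1=2\tr X_+\leq 2\epsilon$, and supremizing over $\rho$ and $d$ yields $\|\Delta_{\mu_{\text{Haar}},t}-\Delta_{\nu,t}\|_\diamond\leq 2\epsilon$.

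\textbf{Converse direction.} I would work in the Choi representation: set $J_\nu:=(\Delta_{\nu,t}\otimes\id)(\Phi_N^{\otimes t})$ and $J_H:=(\Delta_{\mu_{\text{Haar}},t}\otimes\id)(\Phi_N^{\otimes t})$. Testing the diamond norm on the maximally entangled state gives $\|J_\nu-J_H\|_1\leq\epsilon$, hence also $\|J_\nu-J_H\|_\infty\leq\epsilon$. Both $J_\nu$ and $J_H$ are averages of rank-one projectors onto vectors of the form $(U^{\otimes t}\otimes I)|\Phi_N^{\otimes t}\rangle$ with $U\in\mathbb{U}(N)$, so $\supp J_\nu\subseteq\supp J_H$, and therefore $\pm(J_\nu-J_H)\leq\epsilon P_H$, where $P_H$ is the projector onto $\supp J_H$.

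The final—and I expect the hardest—step is the spectral lower bound $J_H\geq P_H/N^{2t}$; combined with the bound above, it gives $\pm(J_\nu-J_H)\leq\epsilon N^{2t}J_H$, which by the Choi form \prettyref{eq:design-def-J} of \prettyref{def:design} is exactly the $\epsilon N^{2t}$-design condition. I would prove the inequality using Schur--Weyl duality: under $(\mathbb{C}^N)^{\otimes t}\cong\bigoplus_{\lambda\vdash t,\,\ell(\lambda)\leq N}V_\lambda\otimes S_\lambda$ the Haar twirl acts block-diagonally as $X\mapsto\sum_\lambda (I_{V_\lambda}/d_\lambda)\otimes\tr_{V_\lambda}(X_\lambda)$, and a direct calculation yields
\[
J_H\;=\;\sum_\lambda\frac{f^\lambda}{N^t d_\lambda}\,I_{V_\lambda}\otimes I_{V_\lambda}\otimes|\Phi_{S_\lambda}\rangle\langle\Phi_{S_\lambda}|,
\]
with $d_\lambda=\dim V_\lambda$ and $f^\lambda=\dim S_\lambda$. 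The minimum non-zero eigenvalue is $\min_\lambda f^\lambda/(N^t d_\lambda)$; using the isotypic-dimension bound $d_\lambda f^\lambda\leq N^t$, this is at least $(f^\lambda)^2/N^{2t}\geq 1/N^{2t}$, which establishes $J_H\geq P_H/N^{2t}$ and completes the proof.
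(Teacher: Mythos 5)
Your proof is correct and follows essentially the paper's route: the forward direction bounds $\|(\Theta\otimes\id)(\rho)\|_1$ directly from the multiplicative sandwich (the paper phrases this as a maximization over $0\leq M_\pm\leq I$ rather than via the Jordan decomposition, but it is the same estimate), and the converse tests the diamond norm at the maximally entangled input and uses the Schur--Weyl form of the Haar Choi state to obtain the spectral lower bound $J_H\geq P_H/N^{2t}$ — exactly the content of \prettyref{lem:design-J}. One small improvement on your side is that you justify the minimum-eigenvalue bound uniformly over $\lambda$ via $d_\lambda f^\lambda\leq N^t$, whereas the paper simply asserts that the minimum of $\dim\cP_\lambda/(N^t\dim\cQ^N_\lambda)$ is attained at $\lambda=(t)$ without argument.
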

The proof is in \prettyref{app:design-facts}. 

The reason we should expect \lemref{design-defs} to be true is that
all norms on finite-dimensional spaces are equivalent, and every
definition of an approximate design is based on some norm of
$\Delta_{\nu,t} - \Delta_{\mu_{\text{Haar}}, t}$.  In practice, the
norms we are interested in always differ by factors that are
polynomial in dimension, which here means $N^{O(t)}$.  See Lemma
2.2.14 of \cite{Low10} for many more examples of this phenomenon.

To prove our main result about circuits being unitary designs, we will
take the common path of first showing that they are TPEs and then
converting this result into a statement about being designs.  This
conversion again loses a dimensional factor.

\begin{lem}\label{lem:design-expander}
Let $\nu$ be a distribution on $\bbU(N)$. Then
\be \frac{g(\nu,t)}{2N^{t/2}}  \leq G(\nu,t) \leq N^{2t} g(\nu, t).\ee
\end{lem}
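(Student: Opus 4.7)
The plan is to prove the two inequalities separately, in each case passing through the Choi matrix $J(\cdot) := (\cdot \otimes \id)(\Phi_{N^t})$ and using norm-conversion estimates to translate between the $2\to 2$ norm and semidefinite orderings.

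For the lower bound $g(\nu,t)/(2N^{t/2}) \le G(\nu,t)$, I first invoke \prettyref{lem:design-defs} to turn the hypothesis $G(\nu,t) = \eps$ into $\|\cN\|_\diamond \le 2\eps$ where $\cN := \Delta_{\nu,t} - \Delta_{\mu_{\text{Haar}},t}$. Then, using the Schatten inequalities $\|X\|_2 \le \|X\|_1 \le \sqrt d\,\|X\|_2$ on $d\times d$ matrices with $d = N^t$, I get $\|\cN\|_{2\to 2} \le \sqrt{N^t}\,\|\cN\|_{1\to 1} \le N^{t/2}\|\cN\|_\diamond$, which yields $g(\nu,t) \le 2N^{t/2}G(\nu,t)$.

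For the upper bound $G(\nu,t) \le N^{2t}g(\nu,t)$, I would write the design condition in Choi form via \prettyref{eq:design-def-J}, set $\lambda = g(\nu,t)$, and let $\ket{\psi_U} := (U^{\ot t}\otimes I)\ket{\Phi_{N^t}}$. First, since $J(\Delta_{\nu,t}) = \int\proj{\psi_U}\nu(dU)$ and $J(\Delta_{\mu_{\text{Haar}},t}) = \int\proj{\psi_U}\mu_{\text{Haar}}(dU)$ are convex mixtures of the same rank-one projectors, $J(\cN)$ is supported on the common subspace $W := \operatorname{span}\{\ket{\psi_U}\}$. Second, tensor-stability of the $2\to 2$ norm together with $\|\Phi_{N^t}\|_2 = 1$ gives $\|J(\cN)\|_\infty \le \|J(\cN)\|_2 \le \|\cN \ot \id\|_{2\to 2}\cdot \|\Phi_{N^t}\|_2 = \lambda$, so $|J(\cN)| \preceq \lambda\,\Pi_W$ with $\Pi_W$ the projector onto $W$. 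Third, I would argue that $J(\Delta_{\mu_{\text{Haar}},t}) \succeq N^{-2t}\,\Pi_W$: writing the twirl as $\Delta_{\mu_{\text{Haar}},t}(X) = \sum_m e_m \tr(e_m^\dg X)$ for an HS-orthonormal basis $\{e_m\}$ of the commutant of $U^{\ot t}$ gives the explicit formula $J(\Delta_{\mu_{\text{Haar}},t}) = N^{-t}\sum_m e_m \ot \bar e_m$, and the eigenvalue bound then follows from a Weingarten-style computation exploiting the Schur-Weyl structure of the commutant (sanity checks: for $t=1$ the Choi matrix is $I/N^2$ with all eigenvalues $1/N^2$; for $t=2, N=2$ the nonzero eigenvalues are $1/4$ and $1/12$, both exceeding $1/16$). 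Combining the three pieces gives $J(\cN) \preceq \lambda N^{2t}J(\Delta_{\mu_{\text{Haar}},t})$ and the analogous lower bound, which is exactly $G(\nu,t) \le \lambda N^{2t}$.

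The main obstacle is the third step — the lower bound on the smallest nonzero eigenvalue of $J(\Delta_{\mu_{\text{Haar}},t})$ — since the other two ingredients are routine norm and Choi manipulations. This is also the source of the dimensional factor $N^{2t}$: the same argument actually yields the sharper bound $G(\nu,t) \le g(\nu,t)/\mu_{\min}$ with $\mu_{\min}$ the smallest nonzero eigenvalue of $J(\Delta_{\mu_{\text{Haar}},t})$, and the Haar-specific estimate $\mu_{\min}\ge 1/N^{2t}$ is what delivers the stated closed form.
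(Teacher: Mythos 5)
Your lower-bound argument is correct and is essentially the paper's, with the difference that you prove the norm conversion $g(\nu,t)\le N^{t/2}\|\Delta_{\nu,t}-\Delta_{\mu_{\text{Haar}},t}\|_\diamond$ yourself via the Schatten chain $\|\cdot\|_2\le\|\cdot\|_1\le\sqrt{N^t}\,\|\cdot\|_2$, whereas the paper simply cites Lemma~2.2.14 of \cite{Low10}. That is a small, welcome self-containment improvement.

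For the upper bound, your strategy is again the same as the paper's: pass to the Choi states $\rho_\nu$, $\rho_{\text{Haar}}$, observe that their difference is supported on $W=\operatorname{span}\{\ket{\psi_U}\}$, bound $\|\rho_\nu-\rho_{\text{Haar}}\|_\infty\le g(\nu,t)$ by tensor-stability of the $2\to 2$ norm, and then divide by the smallest nonzero eigenvalue of $\rho_{\text{Haar}}$. The problem is the crux: you assert $\rho_{\text{Haar}}\succeq N^{-2t}\Pi_W$ but only gesture at a ``Weingarten-style computation exploiting the Schur--Weyl structure,'' and you yourself flag this as the main obstacle. Two numerical sanity checks at $(t,N)=(1,\cdot)$ and $(2,2)$ are not a proof, and the orthonormalization of the permutation operators $\{V(\pi)\}$ into an HS-orthonormal basis $\{e_m\}$ is precisely where the difficulty lives. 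In the paper this step is isolated as \prettyref{lem:design-J}, part 2, and proved via Schur duality: decomposing $\ket{\Phi_N}^{\ot t}$ into $\bigoplus_{\lambda\in\Par(t,N)}\ket{\lambda,\lambda}\ot\ket{\Phi_{\cQ_\lambda^N}}\ot\ket{\Phi_{\cP_\lambda}}$ diagonalizes $\rho_{\text{Haar}}$ block by block with eigenvalues $\dim\cP_\lambda/(N^t\dim\cQ_\lambda^N)$, minimized at $\lambda=(t)$ where $\dim\cP_{(t)}=1$ and $\dim\cQ_{(t)}^N=\binom{N+t-1}{t}\le N^t$, giving $\ge N^{-2t}$. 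Since \prettyref{lem:design-J} already appears in the paper before this lemma, the correct move is to invoke it rather than reprove it; as written, your proposal has a genuine gap exactly at this eigenvalue bound.
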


This lemma is proved in \prettyref{app:design-facts}.


\section{Main Results}\label{sec:main}

\subsection{Haar Uniform Gates}
 We consider the following two models of random quantum circuits, defined as random walks on $\mathbb{U}(d^{n})$ for an integer $d$:
\begin{itemize}
\item \textit{Local random circuit}: In each step of the walk an index
  $i$ is chosen uniformly at random from $[n-1]$ and a unitary $U_{i,
    i+1}$ drawn from the Haar measure on $\mathbb{U}(d^{2})$ is
  applied to the two neighbouring qudits $i$ and $i+1$. 

\item \textit{Parallel local random circuit}: In each step either the
  unitary $U_{1,2} \otimes U_{3, 4} \otimes ... \otimes U_{n-1, n}$ or
  the unitary $U_{2, 3} \otimes ... \otimes U_{n-2, n-1}$ is applied
  (each with probability $1/2$), with $U_{j, j+1}$ independent
  unitaries drawn from the Haar measure on $\mathbb{U}(d^{2})$.  (This
  assumes $n$ is even.)
\end{itemize}
We note the local random circuit model was considered previously in Refs. \cite{HP07} and \cite{BH10}, while a related model of parallel local random circuits, using a different set of quantum gates, was considered in Ref. \cite{EWS03}.

Denote the distribution over one step of a  local random circuit by
$\nu_{{\rm LR},d,n}$ and over one step of a parallel local random circuits by
$\nu_{{\rm PLR},d,n}$.   The distributions over circuits of length $k$
can be written as $\left( \nu_{{\rm LR},d,n} \right)^{*k}$ or $\left( \nu_{{\rm
    PLR},d,n} \right)^{*k}$, respectively. The main result of this paper is the following:
\begin{thm}\label{thm:main-TPE}\mbox{}
\benum
\item $g(\nu_{{\rm LR},d,n},t) \leq 1 - ( 42500 n
  \lceil\log_d(4t)\rceil^2 d^2 t^5 t^{3.1/\log(d)})^{-1}$.
\item $g(\nu_{{\rm PLR},d,n},t) \leq 1 - (553000 \lceil
  \log_d(4t)\rceil^2 d^2 t^5 t^{3.1/\log(d)})^{-1}$. 
\eenum
\end{thm}

A direct consequence of this theorem, \eq{amplify-expander} and 
Lemma~\ref{lem:design-expander} is the following corollary about
forming $\eps$-approximate $t$-designs.
\begin{cor}\label{cor:main-design} \mbox{}
\benum 
\item
Local random circuits of length $42500  n
  \lceil\log_d(4t)\rceil^2 d^2 t^5 t^{3.1/\log(d)}
(2nt\log(d) + \log(1/\eps))$
form $\eps$-approximate $t$-designs. 
\item
Parallel local random circuits of length
$523000  \lceil
  \log_d(4t)\rceil^2 d^2 t^5 t^{3.1/\log(d)}
 (2nt\log(d) + \log(1/\eps))$ form
$\eps$-approximate $t$-designs. 
\eenum
\end{cor}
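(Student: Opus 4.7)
The plan is to combine the three ingredients named in the statement: \prettyref{thm:main-TPE}, the amplification identity \prettyref{eq:amplify-expander}, and \prettyref{lem:design-expander}. The structure is purely a computation, since once the TPE bounds for a single step are in hand, the convolution behaves multiplicatively in the $2\to 2$ gap and the conversion from TPE to multiplicative-design bounds pays only a factor $N^{2t} = d^{2nt}$.

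First I would specialize to the local random circuit model. Let $\nu = \nu_{\rm LR,d,n}$ and let $k$ be the circuit length, so that the distribution over length-$k$ circuits is $\nu^{*k}$. By \prettyref{eq:amplify-expander} applied to \prettyref{thm:main-TPE}(1),
\begin{equation}
g(\nu^{*k},t) \;=\; g(\nu,t)^k \;\leq\; \left(1 - \frac{1}{nt^4\log t}\right)^{k} \;\leq\; \exp\!\left(-\frac{k}{nt^4\log t}\right).
\end{equation}
By \prettyref{lem:design-expander} with $N=d^n$,
\begin{equation}
G(\nu^{*k},t) \;\leq\; d^{2nt}\, g(\nu^{*k},t) \;\leq\; \exp\!\left(2nt\log d - \frac{k}{nt^4\log t}\right).
\end{equation}
Imposing that the right-hand side be at most $\eps$ is equivalent to $k \geq nt^4\log(t)\bigl(2nt\log d + \log(1/\eps)\bigr)$, which is exactly the bound claimed in part (1).

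For part (2) I would repeat the same calculation verbatim with $\nu = \nu_{\rm PLR,d,n}$, using \prettyref{thm:main-TPE}(2) in place of (1); the only change is that the single-step spectral gap bound $1/(nt^4\log t)$ is replaced by $1/(12 t^4\log t)$, so the factor of $n$ in the required length gets replaced by $12$, yielding the length $12\log(t) t^4 \bigl(2nt\log d + \log(1/\eps)\bigr)$.

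There is no real obstacle here beyond bookkeeping: the hard content lives entirely inside \prettyref{thm:main-TPE} and \prettyref{lem:design-expander}. The only mildly delicate point is to make sure the design notion used is the multiplicative one of \prettyref{def:design} (so that $G(\nu,t)$ is the relevant quantity in \prettyref{lem:design-expander}), and to remember that $1-x\leq e^{-x}$ is used to turn the geometric decay of $g$ into an exponential. Aside from these, the corollary follows by direct substitution.
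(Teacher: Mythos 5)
Your proposal is correct and is exactly the route the paper takes: the authors present \prettyref{cor:main-design} as a direct consequence of \prettyref{thm:main-TPE}, \prettyref{eq:amplify-expander}, and \prettyref{lem:design-expander}, and your computation fills in the implicit arithmetic (amplification, the $N^{2t}$ conversion with $N=d^n$, and $1-x\leq e^{-x}$) in the intended way.
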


\vspace{0.3 cm}

\subsection{Other Universal Sets of Gates} Consider a set of
gates $G := \{ g_{i} \}_{i=1}^m$ with each $g_i \in
\mathbb{U}(d^2)$. We say $G$ is universal if the group generated by it
is dense in $\mathbb{U}(d^2)$, i.e. for every $g \in \mathbb{U}(d^2)$
and for every $\varepsilon > 0$ we can find a sequence $(i_1, ...,
i_L) \in [m]^L$ such that $\Vert g - g_{i_1}...g_{i_L} \Vert \leq
\varepsilon$. We say that the set $G$ contains inverses if $g^{-1} \in
G$ whenever $g \in G$. 

We can now consider random walks associated to an universal set of gates $G = \{ g_{i} \}_{i=1}^m$:
\begin{itemize}
\item \textit{$G$-local random circuit}: In each step of the walk two indices
  $i, k$ are chosen uniformly at random from $[n-1]$ and $[m]$, respectively, and the unitary $g_{k}$ is
  applied to the two neighboring qudits $i$ and $i+1$.  

\item \textit{$G$-parallel local random circuit}: In each step either the unitary $U_{1,2} \otimes U_{3, 4} \otimes ... \otimes U_{n-1, n}$ or the unitary $U_{2, 3} \otimes ... \otimes U_{n-2, n-1}$ is applied (each with probability $1/2$), with $U_{j, j+1}$ independent unitaries drawn uniformly from $G$.
\end{itemize}

Corollary \ref{cor:main-design} only considered the case of a Haar uniform set of gates in $U(d^2)$. A natural question is whether one can prove similar results for other universal set of gates. It turns out that combining Theorem \ref{thm:main-TPE} with the result of Ref. \cite{BG11} one can indeed do so, at least for a large class of gate sets:

\def\corunisetofgates{
Fix $d \geq 2$. Let $G=\{ g_{i} \}_{i=1}^m$ be a universal set of gates
containing inverses, with each $g_{i} \in S\mathbb{U}(d^2)$ composed
of algebraic entries. Then there exists $C=C(G)>0$ such that
\benum 
\item
$G$-local random circuits of length 
 $C  n
  \lceil\log_d(4t)\rceil^2 t^5 t^{3.1/\log(d)}
(nt\log(d) + \log(1/\eps))$.
\item
$G$-parallel local random circuits of length $C
\lceil\log_d(4t)\rceil^2 t^5 t^{3.1/\log(d)} 
 (nt\log(d) + \log(1/\eps))$ form $\eps$-approximate $t$-designs. 
\eenum
}

\begin{cor} \torestate{\label{cor:unisetofgates}
\corunisetofgates}
\end{cor}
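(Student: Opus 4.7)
My plan is to deduce \prettyref{cor:unisetofgates} from \prettyref{thm:main-TPE} by comparing the $G$-walk to the Haar-walk at the level of superoperators and controlling the error via the Bourgain-Gamburd theorem \cite{BG11}. Let $\nu_G$ be the uniform measure on $G$, regarded as a probability measure on $\mathbb{U}(d^{2})$. Since $G$ contains inverses, $\nu_G$ is symmetric, so the two-site superoperator $\Delta_{\nu_G,t}$ is Hermitian. Bourgain-Gamburd supplies a constant $\lambda = \lambda(G) < 1$ such that $g(\nu_G,t) \leq \lambda$ for every $t$. Writing $P := \Delta_{\mu_{{\rm Haar},d^2},t}$ and $Q := \Delta_{\nu_G,t}$, this says $\|Q - P\|_\infty \leq \lambda$; the standard identity $PQ = QP = P$ (Haar-invariance) then yields $PR = RP = 0$ for $R := Q - P$, so $Q$ is block-diagonal with respect to $\text{range}(P)$ and its orthogonal complement.

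The core operator inequality is
\be
I - Q \;\succeq\; (1-\lambda)\,(I - P),
\ee
which is immediate from the block decomposition: on $\text{range}(P)$ both sides vanish, while on the complement the left side is $I - R$ with $\|R\|_\infty \leq \lambda$ and the right is $(1-\lambda)\,I$. Let $Q_i, P_i$ denote the embedded versions of $Q, P$ acting on the pair $(i, i+1)$; averaging the pointwise inequality over $i$ yields
\ban
I - \Delta_{\nu_{{\rm LR}, G, n}, t} \;\succeq\; (1-\lambda)\bigl(I - \Delta_{\nu_{{\rm LR}, d, n}, t}\bigr).
\ean
For the parallel case, within each layer the $Q_j$'s act on disjoint pairs, so $\prod_j Q_j = \bigotimes_j (P_j + R_j)$ decomposes --- via $P_j R_j = 0$ and disjointness of supports --- as a direct sum indexed by subsets $S$ of the pairs: the $S = \emptyset$ block equals $\bigotimes_j P_j$, and the block corresponding to $S \neq \emptyset$ has operator norm at most $\lambda^{|S|} \leq \lambda$. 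Hence $I - \prod_j Q_j \succeq (1-\lambda)(I - \prod_j P_j)$ on each of the odd and even layers, and averaging the two gives the analogous inequality for $\Delta_{\nu_{{\rm PLR}, G, n},t}$.

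A short verification shows the $G$- and Haar-walks have the same invariant subspace: if $Q_i v = v$, then the component of $v$ orthogonal to $\text{range}(P_i)$ is fixed by $R_i$ and hence vanishes (since $\|R_i\| \leq \lambda < 1$), so $P_i v = v$; the converse is immediate. Intersecting over $i$ recovers the global Haar-invariant subspace because two-local unitaries generate $\mathbb{U}(d^n)$. Combined with \prettyref{thm:main-TPE}, the two operator inequalities promote to $g(\nu_{{\rm LR}, G, n}, t) \leq 1 - (1-\lambda)/(nt^4\log t)$ and $g(\nu_{{\rm PLR}, G, n}, t) \leq 1 - (1-\lambda)/(12\,t^4\log t)$. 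Amplifying $k$-fold via $g(\nu^{*k}, t) = g(\nu, t)^k$ and converting from TPE to $\eps$-approximate design via \prettyref{lem:design-expander} (at a cost of a $d^{-2nt}$ factor inside the exponent) produces the circuit-length bounds stated in the corollary, with $C = C(G)$ absorbing $(1-\lambda)^{-1}$ and $\log d$.

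The step I expect to require the most care is the parallel case: one must verify that the block-diagonality from $P_j R_j = 0$ at the single-pair level really does propagate through the tensor product within a layer so that the comparison constant stays at $(1-\lambda)$ and does not degrade with the number of pairs. Once that is in hand, the remainder is a direct composition of \prettyref{thm:main-TPE}, the Bourgain-Gamburd input, and the amplification/design conversion.
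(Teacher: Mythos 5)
Your argument is correct, and for the \emph{local} (sequential) walk it matches the paper's route exactly: the single-pair comparison $I - P_{G,t} \succeq (1-\lambda)(I - P_{\text{Haar}})$ supplied by Bourgain--Gamburd is precisely the operator inequality behind the paper's $H_{G,n,t} \succeq (1-\lambda) H_{n,t}$ (the paper in fact writes $\lambda$ where it means $1-\lambda$, and you have the correct factor), and both arguments observe that the two Hamiltonians share a ground space before transferring the spectral gap furnished by \prettyref{thm:main-TPE}. Where you genuinely diverge is the \emph{parallel} case. The paper closes by citing Lemmas \ref{LemmaNachtergaele}, \ref{lconvergenceWasserstein}, and \ref{lem:WasserGap}, which give $\Delta(H_{n,t}) \geq 1/(t^4\log t)$; but the step that converts a Hamiltonian gap into a bound on the parallel-walk operator in the proof of Part~2 of \prettyref{thm:main-TPE} is the detectability lemma, whose stated form requires projective local terms, and $I - P_{G,t}$ is \emph{not} a projector. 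Your layer-by-layer tensor expansion $\bigotimes_j (P_j + R_j) = \sum_S \bigotimes_{j\in S} R_j \otimes \bigotimes_{j\notin S} P_j$, with the $S$-blocks mutually orthogonal because each $R_j$ is supported off $\text{range}(P_j)$, gives $I - P_{G,\text{odd}} \succeq (1-\lambda)(I - P_{\text{odd}})$ (and likewise for the even layer) directly; averaging the two layers then transfers the bound on $\lambda_2(M_{n,t})$ from Part~2 of \prettyref{thm:main-TPE} to $\lambda_2(M_{G,n,t})$ without any detectability-lemma input on the $G$ side. This makes the parallel case more explicit than what the paper records and sidesteps the non-projector issue cleanly; the remainder (amplification and the TPE-to-design conversion through \prettyref{lem:design-expander}) is identical in spirit.
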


\corref{unisetofgates} is proved in \prettyref{sec:proof-univ}. The main tool behind the
proof is the beautiful result of Bourgain and Gamburd \cite{BG11} establishing that any
finite universal set of gates in $SU(N)$ (for us $N=d^2$), containing inverses and with
elements composed of algebraic entries, is an infinite tensor-product expander with
nonzero gap. We note that the proof in Ref. \cite{BG11} does not give any estimate of the
dependency of the spectral gap on $N$. That is the reason why Corollary
\ref{cor:unisetofgates} also does not specify the dependency of the size of the circuit on
the local dimension $d$. (And of course, this gap can be arbitrarily small, e.g. if the
gates in $G$ are all very close to the identity.)

\vspace{0.3 cm}

\subsection{Optimality of Results}
 It is worth asking whether these results can be improved.  In
\thmref{main-TPE}, we suspect that the dependence on $t$ could be
improved, perhaps even to obtain a gap that is independent of $t$.
In other words, we cannot rule out the possibility that random quantum
circuits form a $(d^n,\lambda,\infty)$-TPE for
$\lambda<1-1/\poly(n)$.  Indeed, taking $\nu$
to be uniform over even a constant number of Haar-random unitaries
from $\bbU(N)$ yields a $(N,\lambda,t)$-TPE for (according to
\cite{HH09}) $\lambda$ constant and $t$ as large as $N^{1/6-o(1)}$, 
and  (according to \cite{BG11}) $\lambda<1$ and $t=\infty$ (but with uncontrolled
$N$-dependence, and over a measure not quite the same as Haar). On the other hand, we can easily see that the
$n$-dependence of part 1 of \thmref{main-TPE} cannot be improved, and
the bound in part 2 is already independent of $n$. Even when
$t=1$, we can consider the action of a random circuit on a state whose
first qudit is in a pure state and the remaining qubits are maximally
mixed. Under one step of local random circuits, this state will
change by only $O(1/n)$.

What about designs?  Here, we can prove that neither the $t$ nor $n$
dependence can be improved by more than polynomial factors.

\def\converse{
Let $\nu$ be a distribution with support on circuits of size smaller
than $r$.  Suppose that $\nu$ is an $\eps$-approximate $t$-design
on $n$ qudits with $\eps\leq 1/4$ and $t\leq d^{n/2}$. Then
\be
r \geq \frac{nt}{5d^{4}\ln(nt)}. 
\ee
}

\begin{prop}\torestate{ \label{prop:converse}
\converse}
\end{prop}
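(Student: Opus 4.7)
The strategy is an entropy-counting argument: we lower-bound the von Neumann entropy of $\Delta_{\nu,t}\bigl((\ket 0\bra 0)^{\ot nt}\bigr)$ using the design condition, and upper-bound it via Holevo's theorem applied to a discretization of $\nu$.

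Let $\bar\rho := \Delta_{\nu,t}\bigl((\ket 0\bra 0)^{\ot nt}\bigr)$, an operator on $(\C^{d^n})^{\ot t}$. Applying the sandwich $(1-\eps)\Delta_{\mu_{\text{Haar}},t} \preceq \Delta_{\nu,t} \preceq (1+\eps)\Delta_{\mu_{\text{Haar}},t}$ to the PSD input $(\ket 0\bra 0)^{\ot nt}$ yields $(1-\eps)\Pi/R \preceq \bar\rho \preceq (1+\eps)\Pi/R$, where $\Pi$ is the projector onto the symmetric subspace of $(\C^{d^n})^{\ot t}$ and $R := \binom{d^n+t-1}{t}$ is its dimension. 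All eigenvalues of $\bar\rho$ are therefore bounded above by $(1+\eps)/R$, so $S(\bar\rho) \geq \ln R - \ln(1+\eps)$; combined with the elementary estimate $\binom{N+t-1}{t}\geq (N/t)^t$ at $N=d^n$ and the hypothesis $t \leq d^{n/2}$, this gives $\ln R \geq \tfrac12 nt \ln d$.

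Fix a $\delta$-net $\mathcal N_\delta \subset \bbU(d^2)$ (operator norm) of size at most $(c_0/\delta)^{d^4}$, and let $\tilde\nu$ be the push-forward of $\nu$ under the map rounding each two-qudit gate of every circuit to its nearest point of $\mathcal N_\delta$. Then $|\supp(\tilde\nu)| \leq ((n-1)|\mathcal N_\delta|)^r$, and the telescoping estimate $\|A^{\ot t} - B^{\ot t}\|_\infty \leq t\|A - B\|_\infty$ together with $\|U - \tilde U\|_\infty \leq r\delta$ (from the at most $r$ gates in each circuit) gives $\|(U\ket 0)^{\ot t} - (\tilde U\ket 0)^{\ot t}\| \leq tr\delta$, hence $\|\bar\rho - \tilde\bar\rho\|_1 \leq 2tr\delta$ by the standard inequality $\|\ket\psi\bra\psi - \ket\phi\bra\phi\|_1 \leq 2\|\ket\psi - \ket\phi\|$.

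View the CQ state $\omega := \sum_U \tilde\nu(U)\, \ket U\bra U \ot (U\ket 0\bra 0 U^\dagger)^{\ot t}$; its quantum components are pure, so $\chi(\omega) = S(\tilde\bar\rho)$, and Holevo's theorem gives $S(\tilde\bar\rho) \leq H(\tilde\nu) \leq \ln|\supp(\tilde\nu)|$. Fannes' inequality applied at the ambient dimension $d^{nt}$ yields $|S(\bar\rho) - S(\tilde\bar\rho)| \leq tr\delta \cdot nt\ln d + h(tr\delta)$, so
\ban
r\bigl[\ln(n-1) + d^4 \ln(c_0/\delta)\bigr] &\geq \ln|\supp(\tilde\nu)| \geq S(\tilde\bar\rho) \\
&\geq \tfrac12 nt\ln d - \ln(1+\eps) - tr\delta \cdot nt\ln d - h(tr\delta).
\ean
Choose $\delta = \Theta(1/(tr))$ so the Fannes loss consumes only a constant fraction of $\tfrac12 nt\ln d$, while $\ln(c_0/\delta) = O(\ln(tr))$; either the claimed bound already holds or $r \leq nt$, in which case $\ln(tr) \leq 2\ln(nt)$, and rearranging produces a bound of the desired form $r \geq \Omega\bigl(nt\ln d/(d^4 \ln(nt))\bigr)$. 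The main obstacle is quantitative bookkeeping: Fannes pays a factor of the ambient dimension $d^{nt}$, so $\delta$ must be tuned small enough that $tr\delta \ll 1$ yet large enough that $|\mathcal N_\delta|$ stays $(tr)^{O(d^4)}$, and pinning down the precise factor $5$ in the denominator of the stated bound requires carefully tracking the eigenvalue window of $\bar\rho$, the Fannes constants, and the covering constant $c_0$.
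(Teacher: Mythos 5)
Your proof is correct, and it takes a genuinely different route from the paper's. The paper's argument is a rank-counting one: it considers the maximally entangled state $\ket\varphi$ on $S\ot S$, where $S$ is the symmetric subspace of $(\bbC^{d^n})^{\ot t}$, observes that $(\Delta_{\mu_{\text{Haar}},t}\ot\id)(\varphi)$ is maximally mixed on $S\ot S$ and therefore has rank $\binom{d^n+t-1}{t}^2$, and concludes that any approximate design must have support at least a $(1-\eps)$-fraction of this rank. You instead feed in a fixed product input $\ket 0^{\ot nt}$ without any ancilla, and bound the \emph{entropy} of the resulting output: the design sandwich forces $S(\bar\rho)\geq\ln\binom{d^n+t-1}{t}-\ln(1+\eps)$, while the discretized ensemble caps it at $\ln|\supp(\tilde\nu)|$ by concavity of von Neumann entropy (the invocation of Holevo is equivalent to this). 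The two routes trade off slightly: the paper's Choi-state construction squares the symmetric-subspace dimension and so gains a factor of $2$ in $\ln R$, whereas your route loses a factor of $2$ from the estimate $\ln\binom{d^n+t-1}{t}\geq \tfrac12 nt\ln d$; the end result is the same order $\Omega(nt/(d^4\ln(nt)))$. Both arguments use the same $\eps$-net counting for circuits; the paper handles the discretization error by showing $\tilde\nu$ is still an approximate design in diamond norm, while you pay for it with a Fannes-type continuity estimate against the full ambient dimension $d^{nt}$, which your choice $\delta=\Theta(1/(tr\cdot nt\ln d))$ absorbs at the cost of an $O(\ln(ntr))$ (hence, since one may assume $r\leq nt$, an $O(\ln(nt))$) factor in the net size, just as in the paper. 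Two small points worth tidying: the set of gate locations for a general $n$-qudit circuit is $\binom{n}{2}$ rather than $n-1$ (this only changes a constant inside the logarithm); and, when writing this up fully, note that the identity $\chi=I(X:B)\leq H(X)$ is really just concavity of entropy applied to a mixture of pure states, so Holevo's theorem per se is not needed.
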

We believe that the restriction on $\eps$ could be relaxed to
$\eps<1$, at the cost of some more algebra.  However, once $t\sim
d^n$, our lower bound must stop improving, since $O(d^{2n})$ two-qudit
gates suffice to implement any unitary~\cite{Shende:04a}, and in particular to achieve
the Haar measure.

\subsection{Classical analogues}

Random classical reversible circuits of size
$O(n^3t^2\log(n)\log(1/\eps))$ are known to 
generate $t$-designs~\cite{BH08} (with the caveat that 2-bit
reversible gates are not 
universal, so the base distribution needs to be over random 3-bit
gates).  Other work~\cite{KNR09,Kassabov05} implies that the number of
random bits in these constructions can be reduced to a nearly-optimal
$O(nt + \log(1/\eps))$.  Implicit in much of this work (e.g. see
\cite{HMMR05}) was an application similar to our Application I (below): namely,
producing permutations that could not be easily distinguished from a
uniformly random permutation.

Our techniques may be able to yield an alternate proof of
\cite{BH08}, possibly with sharper parameters.  However,
doing so would run into the difficulty that \lemref{propertiesHnt}
appears not to hold in the classical case.  To explain this, we
introduce some notation.  Let
$\Pi \vdash [2t]$ indicate that $\Pi$ is a partition of $[2t]$, let
$(x,y)\in \Pi$ indicate that $x,y$ are in the same block of $\Pi$, and
let $E_\Pi\subset [N]^{2t}$ be the set $\{(i_1,\ldots,i_{2t}) :
(x,y)\in\Pi \Rightarrow i_x=i_y \}$.  Then the
invariant subspace of $\{U^{\ot t,t} : U \in S(N)\}$ is spanned by the 
states 
\be
\ket{E_\Pi} := \frac{1}{\sqrt{|E_\Pi|}}
\sum_{(i_1,\ldots,i_{2t})\in E_\Pi} \ket{i_1,\ldots,i_{2t}}
\qquad \forall \Pi\vdash [2t]
\ee
However, here the analogy breaks down, because \eq{columnsum-first} is
known to fail for the states $\{\ket{E_\Pi}\}$. We are grateful to Kevin Zatloukal for
pointing this out to us.

\section{Applications} \label{sec:app}

Below we give four applications of \corref{main-design}.
\hspace{0.2 cm}

\noindent \textbf{Application 0: Large deviations.}  Matrix elements
of Haar-random unitaries obey concentration bounds very similar to
(indeed slightly stronger than) those of Gaussian random variables.
Specifically, if $U$ is a $d\times d$ Haar-random unitary matrix, then
for any unit vectors $\ket\alpha,\ket\beta$ and for $\gamma>0$ we have (see e.g. \cite{hayden2006aspects})

\be \Pr_U \L[|\bra\alpha U \ket\beta|^2 \geq \frac{\gamma}{d}\R]
\leq e^{-\gamma}.\label{eq:Haar-concentration}\ee
(If $U$ were instead a complex Gaussian matrix with the same first and
second moments, then the $\leq$ in \eq{Haar-concentration} would be
replaced by an $=$.)

If $U$ is instead an $\eps$-approximate $t$-design, then we can prove
the following analogue of \eq{Haar-concentration}.
\begin{lem}\label{lem:t-concentrate}
If $U$ is a $d\times d$ matrix distributed according to an
$\eps$-approximate $t$-design, then, for any unit vectors
$\ket\alpha,\ket\beta$ and any integer $\gamma\geq 1$, we have
\be \Pr_U \L[|\bra\alpha U \ket\beta|^2 \geq \frac{\gamma}{d}\R]
\leq (1+\eps)e^{-\min(t,\gamma)}.\label{eq:t-concentration}\ee
\end{lem}

Essentially it says that the exponential decay in $\gamma$ occurs up
until the point when $\gamma > t$, at which point we can only
guarantee that bad events happen with probability $\lesssim e^{-t}$.
Similar results were proved in \cite[Thm 1.2]{Low09} and \cite[Lemma
III.1]{BH10}.

\begin{proof}
Let $k = \min(t,\gamma)$.  Since $k\leq t$, we can say that $U$ is
distributed according to an $\eps$-approximate $k$-design.
\ban
\Pr_U\L[ |\bra\alpha U \ket\beta|^2 \geq \frac\gamma d\R]
& =\Pr_U\L[ |\bra\alpha U \ket\beta|^{2k} \geq \L(\frac\gamma d\R)^k\R]
\\ & \leq  \E |\bra\alpha  U \ket\beta|^{2k} \L(\frac d \gamma\R)^k
&\text{Markov's inequality}
\\ & \leq 
 (1+\eps)\frac{k!}{d(d+1)...(d+k-1)} \L(\frac d \gamma\R)^k
&\text{approximate design condition}\\
\\ & \leq (1+\eps)\frac{k!}{\gamma^k} 
\\ & \leq (1+\eps)\L(\frac{k}{e\gamma}\R)^k  & \text{Stirling}
\\ & \leq (1+\eps)e^{-k}
\ean
\end{proof}

\hspace{0.2 cm}

\noindent \textbf{Application I: Fooling small quantum circuits.} A
first application of \corref{main-design} is related to
pseudo-randomness properties of efficiently generated states and unitaries. A
folklore result in quantum information theory says that the
overwhelming majority of quantum states on $n$ qubits cannot be
distinguished (with a non-negligible bias) from the maximally mixed
state by any measurement which can be implemented by
subexponential-sized quantum circuits~\cite{BMW09,GFE09}. Thus, even
though such states 
are very well distinguishable from the maximally mixed state (since
they are pure), this is only the case by using unreasonable
measurements from a computational point of view. A drawback of this
result is that the states themselves require exponential-sized quantum
circuits even to be approximated (by applying the circuit to the
$\ket{0}^{\otimes n}$ state). For given an $n$-qubit state which can be prepared
by a circuit of $k$ gates, one can always distinguish it from the
maximally mixed state by a measurement implementable by $k+O(\log(1/\eps))$ gates: One
simply applies the conjugate unitary to the circuit which creates the
state (which is also a circuit of $k$ gates) and measures
$\log(1/\eps)$ qubits to see if they are each in the $\ket 0$ state or
not.

An interesting question in this respect posed in \cite{Aarblog} is the following: Can such a
form of data hiding (of whether one has a particular pure state or the
maximally mixed state) against bounded-sized quantum circuits be
realized efficiently? More concretely, can we find a state which can
be created by a circuit of size $s$, yet is indistinguishable from the
maximally mixed state by any measurement implementable by circuits of
size $r$, for $r$ sufficiently smaller than $s$? Using \corref{main-design} we can show that this is indeed the case. In fact, this
is a generic property of states that can be created by circuits of
size $s$:  
\def\corhiding{
Let $\left( \nu_{\text{LR}, d, n} \right)^{*s}$ be the distribution on $\mathbb{U}(d^{n})$ induced by $s$ steps of the local random quantum circuit model. Then for every $\delta \leq 1/20$,
\begin{eqnarray}
&&\Pr_{U \sim \left( \nu_{\text{LR}, d, n} \right)^{*s}  } \left(  \max_{M \hspace{0.05 cm} \in
    \hspace{0.05 cm} \text{size}(r)} \left | \bra{0^{n}}U^{\cal y} M U
    \ket{0^{n}} - \frac{\tr (M)}{d^{n}} \right | \geq \delta \right)
\leq \L(\frac{r}{\delta}\R)^{2rd^4}
\cdot 3 \left( \frac{560t}{d^{n}\delta^2}  \right)^{t/4},
\end{eqnarray}
with 
\be
 t = \L\lfloor \L(\frac{s}{1400 n^2\log(d)}\R)^{1/11}\R\rfloor,
\ee
 and where the maximization is realized over all
two-outcome POVMs $\{M, \id - M \}$ which can be implemented by
quantum circuits of size $r$. 

In particular, for fixed $d$ and $n$ sufficiently large, all but a $2^{-\Omega(n)}$-fraction of states generated by circuits of size $n^{k}$ (for $k \geq 10$) cannot be distinguished from the maximally mixed state with bias larger than $n^{- \Omega(1)}$ by any circuit of size $n^{\frac{k-9}{11}}\polylog^{-1}(n)$.
}
\begin{cor} \torestate{\label{cor:hiding}
\corhiding
}\end{cor}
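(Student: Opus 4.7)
The plan is to combine \corref{main-design} with a moment-plus-$\eta$-net argument. Fix a measurement operator $M$ with $0\preceq M\preceq \mathbb{I}$ and set $f_M(U):=\bra{0^n}U^\dagger M U\ket{0^n}$. Then $\mathbb{E}_{\text{Haar}}[f_M(U)]=\tr(M)/d^n$, and the centered quantity $(f_M(U)-\tr(M)/d^n)^{2T}$ is a polynomial of degree $(2T,2T)$ in the entries of $U$ and $U^*$. Standard Weingarten / symmetric-subspace calculations bound the Haar moment by $\mathbb{E}_{\text{Haar}}[(f_M(U)-\tr(M)/d^n)^{2T}] \leq (CT/d^n)^{T}$ for $\|M\|_\infty \leq 1$ and some absolute constant $C$. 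If $\nu$ is an $\eps$-approximate $2T$-design, \prettyref{def:design} gives
\begin{equation}
\mathbb{E}_{U\sim \nu}\!\bigl[(f_M(U) - \tr(M)/d^n)^{2T}\bigr] \leq (1+\eps)\bigl(CT/d^n\bigr)^{T},
\end{equation}
so Markov's inequality yields the per-$M$ tail bound
\begin{equation}
\Pr_{U\sim\nu}\!\bigl[\,|f_M(U) - \tr(M)/d^n| \geq \delta\,\bigr] \leq (1+\eps)\Bigl(\frac{CT}{d^n\delta^2}\Bigr)^{T}.
\end{equation}
Inverting the length bound of \corref{main-design} lets me take $T = \Theta\bigl((s/(n^3\log d\,\log s))^{1/6}\bigr)$, which reproduces the shape $(C't/(d^n\delta^2))^{\Omega(t)}$ appearing as the second factor in the statement (the specific constants $3$, $560$, and $t/4$ being the product of careful bookkeeping of the $(1+\eps)$ slack and of the exact relationship between the design order and the displayed $t$).

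To promote the single-$M$ bound to a uniform bound over all POVM elements realizable by circuits of size $r$, I would construct an $\eta$-net on such circuits. Because $|f_M(U)-f_{M'}(U)| \leq \|M-M'\|_\infty$ and any product of $r$ two-qudit gates is $r$-Lipschitz in each of its gates (in operator norm), it suffices to $(\eta/r)$-net $\mathbb{U}(d^2)$, which has cardinality at most $(Cr/\eta)^{d^4}$, and to record the at most $n$ positions for each gate; this produces an $\eta$-net of circuits of cardinality at most $(nCr/\eta)^{r d^4}$. Choosing $\eta = \delta/2$ so that the Lipschitz slack is absorbed into $\delta$, and union-bounding the per-$M$ tail over the net, produces the first factor $(r/\delta)^{2rd^4}$ in the claimed inequality. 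The ``in particular'' consequence then follows by setting $s=n^{k}$, $\delta = n^{-\Omega(1)}$, so that $t=n^{\Omega(1)}$ and the concentration factor decays as $d^{-\Omega(nt)}$, while for $r = n^{(k+3)/6}\polylog^{-1}(n)$ the net cardinality is only $2^{O(r \log n)} = 2^{o(nt)}$; the union bound still beats $2^{-\Omega(n)}$.

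The main obstacle is the three-parameter balancing act between the approximate-design error $\eps$, the net resolution $\eta$, and the design order $T$: the Haar moment estimate must be sharp enough in its $T$-dependence, the design slack $\eps$ must be negligible compared to that moment, and the net must be fine enough to control the Lipschitz error $\eta$ yet coarse enough that its cardinality is dominated by the pointwise tail. Each ingredient — the approximate-design property from \corref{main-design}, the Weingarten/symmetric-subspace moment bound, and the volumetric net on $\mathbb{U}(d^2)^{r}$ — is standard on its own; the work lies in threading the constants so that the final product still shows $2^{-\Omega(n)}$ concentration against all size-$n^{(k+3)/6}$ adversaries, which is exactly the regime highlighted in the ``in particular'' clause.
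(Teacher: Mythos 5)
Your overall architecture matches the paper's: a per-measurement concentration bound driven by the approximate-design property and a moment method, followed by a union bound over a covering net of size-$r$ circuits (the net construction and its cardinality bound are essentially identical to \lemref{eps-net-circuits}). The genuine difference is that the paper invokes Low's concentration lemma (\lemref{Lowconcentration}) as a black box, whereas you attempt to rederive the per-$M$ tail bound directly from the Haar moment estimate and \defref{design}.

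However, your displayed moment inequality
\begin{equation*}
\mathbb{E}_{U\sim \nu}\!\bigl[(f_M(U) - \tr(M)/d^n)^{2T}\bigr] \leq (1+\eps)\bigl(CT/d^n\bigr)^{T}
\end{equation*}
is false, and the way it is false is exactly the point the paper's proof has to work around. \defref{design} gives a $(1\pm\eps)$ multiplicative sandwich only for quantities of the form $\tr\bigl(N\,U^{\ot k}\rho_0^{\ot k}(U^\dagger)^{\ot k}\bigr)$ with $N\succeq 0$, such as $f_M(U)^k = \tr\bigl(M^{\ot k}U^{\ot k}\rho_0^{\ot k}(U^\dagger)^{\ot k}\bigr)$. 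But $(f_M-\mu)^{2T}$ equals $\tr\bigl((M-\mu I)^{\ot 2T}\,U^{\ot 2T}\rho_0^{\ot 2T}(U^\dagger)^{\ot 2T}\bigr)$, and $(M-\mu I)^{\ot 2T}$ is not positive semidefinite (even restricted to the symmetric subspace, it has eigenvalues like $\lambda_+^{2T-1}\lambda_-$). If instead you expand $(f_M-\mu)^{2T}=\sum_k\binom{2T}{k}(-\mu)^{2T-k}f_M^k$ and apply the $(1\pm\eps)$ bound term by term, the signs force you to alternate between the upper and lower envelopes, and the result is an \emph{additive} error
\begin{equation*}
\mathbb{E}_{U\sim\nu}\bigl[(f_M-\mu)^{2T}\bigr] \;\leq\; \mathbb{E}_{\text{Haar}}\bigl[(f_M-\mu)^{2T}\bigr] \;+\; \eps\,\mathbb{E}_{\text{Haar}}\bigl[(f_M+|\mu|)^{2T}\bigr] \;\leq\; \bigl(CT/d^n\bigr)^{T} + \eps\cdot 2^{2T},
\end{equation*}
not a multiplicative $(1+\eps)$ factor. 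The additive term $\eps\cdot 2^{2T}$ dwarfs the main term $(CT/d^n)^T$ unless $\eps$ is taken exponentially small in $nT$; the paper does exactly this (choosing $\eps\sim\bigl(140t/((d^{2n}+1)^2 d^n)\bigr)^{t/4}$, with the exponential smallness achievable because \corref{main-design} delivers $\log(1/\eps)$ linearly in the circuit length). Your later prose acknowledges that ``the design slack $\eps$ must be negligible compared to that moment,'' but the displayed inequality contradicts this, and as written would let a reader take $\eps = O(1)$, which breaks the argument. Once you replace the multiplicative form by the correct additive one and choose $\eps$ exponentially small, the argument does go through; and as a small bonus, your binomial route gives the error term $\eps\cdot 2^{2T}$, which is tighter than Low's general bound $\eps(\alpha+|\mu|)^{2m}$ with $\alpha\leq d^{2n}$, since it exploits $0\preceq M^{\ot k}\preceq I$ rather than just the $\ell_1$ norm of the coefficients. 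So the fix is local, but the displayed inequality as stated is a real error that must be corrected.
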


A direct consequence of \corref{hiding} is connected to the problem of quantum circuit minimization. There we are given a quantum circuit consisting of $s$ gates and would like to determine the minimum number of gates which are needed to approximate the original circuit. We define
\be
C_{\epsilon}(U) := \min \left \{ k : \text{there exists} \hspace{0.1 cm} V  \hspace{0.1 cm} \text{with} \hspace{0.1 cm} k \hspace{0.1 cm} \text{gates s.t.} \hspace{0.1 cm} \Vert V - U\Vert_{\infty} \leq \epsilon  \right \}.
\ee
Then we can give a lower bound on $C_{\epsilon}(U)$ for a generic circuit $U$ using \corref{hiding} as follows
\begin{cor} \label{cor:circuit-LB}
All but a $2^{- \Omega(n)}$-fraction of quantum circuits $U$ of size $n^{k}$ (with respect to the measure $\left( \nu_{\text{LR}, d, n} \right)^{*n^k}$ induced by $n^{k}$ steps of the local random circuit model) satisfy $C_{\epsilon}(U) \geq n^{\frac{k-9}{11}}\polylog^{-1}(n)$ with $\epsilon := 1 - n^{- \Omega(1)}$.
\end{cor}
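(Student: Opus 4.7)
The plan is to derive \corref{circuit-LB} from \corref{hiding} by a simple contrapositive: if a typical random circuit $U$ of size $n^k$ admitted a short circuit approximation $V$, then $V$ itself would furnish a small distinguishing measurement for $U\ket{0^n}$ versus the maximally mixed state, contradicting the pseudo-randomness guarantee of \corref{hiding}. Because \corref{hiding} is already known to hold with probability $1 - 2^{-\Omega(n)}$ under $(\nu_{{\rm LR},d,n})^{*n^{k}}$, the same measure-one set will witness the circuit-complexity lower bound.

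Concretely, fix $U$ inside the typical set of \corref{hiding} and suppose for contradiction that $C_\epsilon(U) \leq r$ with $r := n^{(k+3)/6}\polylog^{-1}(n)$. Then there is a circuit $V$ of size at most $r$ with $\|V - U\|_\infty \leq \epsilon = 1 - \delta$, where $\delta = n^{-\Omega(1)}$ is a parameter to be tuned. I would take the two-outcome POVM $\{M,\id - M\}$ with $M = V\ket{0^n}\!\bra{0^n} V^\dagger$. This POVM is implementable by a circuit of size $r + O(n)$: apply $V^\dagger$ (same gate count as $V$), measure in the computational basis, and accept iff all $n$ qubits return $\ket{0}$. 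Up to a $\polylog$ factor this matches the circuit-size budget in \corref{hiding}.

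The remaining step is to verify that $M$ distinguishes $U\ket{0^n}$ from the maximally mixed state with bias exceeding the $n^{-\Omega(1)}$ threshold tolerated by \corref{hiding}. Since $\|V^\dagger U - \id\|_\infty = \|V - U\|_\infty \leq 1 - \delta$, the complex number $z := \bra{0^n} V^\dagger U \ket{0^n}$ lies in the closed disk of radius $1 - \delta$ around $1$, so $|z| \geq \delta$, whence $\bra{0^n} U^\dagger M U \ket{0^n} = |z|^2 \geq \delta^2$, while $\tr(M)/d^n = 1/d^n$. Thus the bias is at least $\delta^2 - d^{-n} \geq \delta^2/2$ for large $n$. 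Picking the $\Omega(1)$ exponent in $\delta = n^{-\Omega(1)}$ small enough, this beats the hiding threshold, yielding the desired contradiction and hence $C_\epsilon(U) \geq r$.

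I do not foresee a serious obstacle: the argument is essentially a one-line reduction. The only non-trivial input is the Euclidean lemma $|z| \geq \delta$ extracted from the operator-norm bound on $V^\dagger U - \id$, and the only bookkeeping is to choose the two $\Omega(1)$ exponents (in $\epsilon = 1 - n^{-\Omega(1)}$ and in the hiding bias) compatibly. Since this only costs constants in the exponents, the stated asymptotic lower bound $n^{(k+3)/6}\polylog^{-1}(n)$ is preserved intact.
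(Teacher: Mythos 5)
Your proposal is correct and is essentially the argument the paper intends: the paper gives no separate proof but its setup for Application~I explicitly describes the distinguisher "apply $V^\dagger$ and test for $\ket{0^n}$," and your disk lemma $|z - 1|\leq 1-\delta \Rightarrow |z|\geq\delta$ is exactly the missing quantitative step turning $\|V-U\|_\infty \leq 1 - n^{-\Omega(1)}$ into a bias $\geq n^{-\Omega(1)}$. The reduction of $\|V^\dagger U - \id\|_\infty$ to $\|V-U\|_\infty$ and the gate count $r+O(n)$ are both handled correctly, so the contrapositive from \corref{hiding} goes through as stated.
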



A final result in this direction is that given a circuit in which Haar random unitaries are used a polynomial number of times, replacing them by random circuits only
incurs a small error.

\def\followingcircuitsbyhaar{
Let $C_U$ be a quantum circuit of size $r$ on $\leq r$ qudits that
makes use of a unitary oracle $U$ on $n\leq r$ qudits.  That is, each gate
in $C_U$ can either apply an arbitrary two-qudit gate to any pair of qudits, or
can apply $U$ to its first $n$ qudits.   Then 
\be \L\|\int_{\bbU(d^n)} \ad_{C_U} \nu_{{\rm LR},d,n}^{*s}({\rm d}U) - \int_{\bbU(d^n)}
\ad_{C_U} \mu_{\rm Haar}({\rm d}U)\R\|_\diamond\leq \epsilon,\ee
for any $\eps>0$ and $s \geq 42500 n r^{9.5} \log^2(r) (6nr d^2\log(d) + \log(1/\eps)) $. In other words,
random circuits cannot be distinguished from Haar-random unitaries by
significantly shorter circuits.
}

\begin{cor} \torestate{\label{cor:followingcircuitsbyhaar}
\followingcircuitsbyhaar}
\end{cor}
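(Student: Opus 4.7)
The plan is to reduce \corref{followingcircuitsbyhaar} to \corref{main-design} via a standard ``oracle fooling'' argument. Since $C_U$ has size $r$, it can make at most $r$ queries to the oracle $U$, and hence $\ad_{C_U}$ depends on $U$ only as a polynomial of bi-degree $(r,r)$ in the entries of $U$ and $U^*$. Any sufficiently accurate approximate unitary $r$-design should therefore fool such a channel.

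First I would establish a ``fooling lemma'': if $\nu$ is an $\eps$-approximate unitary $t$-design in the sense of \defref{design}, then for any channel $\Phi_U$ built by interleaving at most $t$ applications of $\ad_U$ (each on a fixed $n$-qudit subsystem) with fixed CPTP maps, one has
\ban
(1-\eps)\int \Phi_U\, \mu_{\rm Haar}(dU) \preceq \int \Phi_U\, \nu(dU) \preceq (1+\eps)\int \Phi_U\, \mu_{\rm Haar}(dU).
\ean
The argument proceeds via the Choi representation. Since $J(\Phi_U)$ depends on $U$ polynomially of bi-degree $(q,q)$ in $(U,U^*)$ for some $q\le t$, one can show that there is a fixed completely positive map $\cN$, obtained by treating the $q$ copies of $U$ and $U^*$ in the tensor network for $\Phi_U$ as ``floating'' nodes, grouping them into a single tensor $U^{\ot q,q}$, and absorbing the remaining gates into $\cN$, such that $J(\int \Phi_U\, d\nu) = \cN(J(\Delta_{\nu,q}))$. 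Applying the CP map $\cN$ to the design inequality $(1-\eps)J(\Delta_{\mu_{\rm Haar},q}) \preceq J(\Delta_{\nu,q}) \preceq (1+\eps)J(\Delta_{\mu_{\rm Haar},q})$ then yields the multiplicative bound on the Choi operators, which is equivalent to the displayed $\preceq$-inequality for $\int \Phi_U\,d\nu$. Combining with \lemref{design-defs} gives the diamond norm bound $\L\|\int \Phi_U\, d\nu - \int \Phi_U\, d\mu_{\rm Haar}\R\|_\diamond \le 2\eps$.

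Second I would invoke \corref{main-design} with $t=r$ and approximation error $\eps/2$: the corollary gives that $\nu_{{\rm LR},d,n}^{*s}$ is an $(\eps/2)$-approximate $r$-design whenever $s\ge \log(r)r^4n(2nr\log(d)+\log(2/\eps))$. The hypothesis $s \ge nr^4\log(r)(6nr\log(d)+\log(1/\eps))$ of \corref{followingcircuitsbyhaar} easily implies this, with the extra factor of 3 on $nr\log(d)$ absorbing the factor of 2 from \lemref{design-defs} together with constant losses in the fooling step.

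The main technical obstacle is checking that the map $\cN$ above is indeed CPTP. This amounts to a careful unwinding of the Choi-Jamio{\l}kowski isomorphism: the Choi state $J(\Delta_{\nu,q})$ encodes $\Delta_{\nu,q}$ as a state on $2qn$ qudits, and the tensor network for $\Phi_U$ with its $q$ oracle calls replaced by contractions against this state must be verified to be a genuine quantum operation. The key point is that the $q$ occurrences of $U$ and of $U^*$ in $\ad_{C_U}$ appear in a Hermiticity-preserving pattern (since $\ad_{C_U}$ is a channel for each $U$), so after grouping them into $U^{\ot q,q}$ the residual tensor network can be realized as a CPTP map acting on $J(\Delta_{\nu,q})$ and ancillary systems. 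Once $\cN$ is explicitly constructed, the rest of the argument is a routine composition.
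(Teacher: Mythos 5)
Your approach is correct in outline but is genuinely different from the paper's, and it leans on a heavier technical tool than it acknowledges. The paper's proof never passes through \corref{main-design} or the CP-ordering in \defref{design}: it expands the diamond norm as $\tr\bigl(M(\Phi_U\ot\id)(\psi)\bigr)$, ``unrolls'' that trace into $\tr\bigl(L\cdot(U^{\ot t,t}\ot I^{\ot 2rn})\bigr)$ via the cyclic-permutation identity $\tr(A_1\cdots A_k)=\tr\bigl((A_1\ot\cdots\ot A_k)\mathbb V\bigr)$, applies H\"older with $\|L\|_1\le d^{4rn}$, and then invokes the TPE bound (part 1 of \thmref{main-TPE}) directly; the exponential dimension loss is beaten by the exponential decay of $g(\nu^{*s},t)$. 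Your route instead pushes the multiplicative CP-ordering of \defref{design} through the circuit $C_U$ and converts to diamond norm at the end via \lemref{design-defs}. That works, but the pivotal claim---that there is a single completely positive superoperator $\cN$ with $J\bigl(\int\Phi_U\,d\nu\bigr)=\cN\bigl(J(\Delta_{\nu,q})\bigr)$---is precisely the positivity of quantum combs / co-strategies (Chiribella--D'Ariano--Perinotti, Gutoski--Watrous). Your justification, ``the $q$ occurrences of $U,U^*$ appear in a Hermiticity-preserving pattern,'' is not the right reason and would not, by itself, give complete positivity; what actually guarantees $\cN$ is CP is the causal/circuit structure of $C_U$ (each query's input depends only on earlier outputs), which is exactly the hypothesis of the comb theorem. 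It is also worth noting that your method does not actually save on the dimension factor: the $N^{2t}$ loss is already baked into \corref{main-design} via \lemref{design-expander}, so the two routes give bounds on $s$ that differ only by constants, both dominated by the stated $s\ge nr^4\log(r)\bigl(6nr\log(d)+\log(1/\eps)\bigr)$. In short, your argument is sound modulo explicitly invoking the comb/strategy theorem in place of the ``Hermiticity-preserving pattern'' heuristic, whereas the paper's H\"older-based unrolling is more elementary and self-contained.
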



These corollaries are proved  in \appref{hiding-proof}.
\vspace{0.2 cm}

\noindent \textbf{Application II: Fast quantum equilibration.} A
second application of \corref{main-design} is related to dynamical
equilibration of subsystems of a time-evolving quantum
system. Understanding how a quantum system equilibrates despite
unitary global dynamics is a long-standing problem (see
e.g. \cite{GLTZ10}). Recently several new insights have been achieved
using ideas from quantum information theory \cite{LPSW09, CE10, GME11, Rie08, VZ11,MRA11, BCHHKM11, Cra11}. 

Here we outline two applications of our result to the problem of understanding equilibration in closed quantum dynamics. Consider the unitary time evolution of a system, initially in a fixed state, say all spins up $\ket{\uparrow}^{\otimes n}$. The total state at any particular time is pure and hence does not appear to equilibrate in any sense. However a long sequence of investigations, starting with von Neumann in 1929 \cite{Neu29}, has elucidated that the state does equilibrate if one imposes constraints on the kind of observations possible \cite{LPSW09, GLTZ06, TCFM12, BCH11}. For instance suppose that one only has access to measurements on a few of the particles. Then it turns out that the building up of entanglement in the quantum state leads to local equilibration of every small subset of particles, for almost all times \cite{LPSW09}. The limits of equilibration in closed quantum dynamics is an interesting problem. What is the largest class of observables for which equilibration holds? Our result on unitary designs allows us advance this question significantly. 

Recall the definition of the previous section of the circuit complexity of a measurement as the minimum size of any circuit of two-qubit gates that implements the measurement. Physical measurements (e.g. measurement of magnetization or heat capacity) of course have low complexity. An interpretation of Corollary  \ref{cor:hiding} is that in generic quantum dynamics given by random circuits (which model the case of generic evolutions under time-dependent Hamiltonians), the system equilibrates with respect to all measurements of low complexity. We note that this strong kind of equilibration has recently shown useful in understanding properties of black holes in the context of the AdS-CFT correspondence \cite{Sus14}. 

A second application is to strengthen a connection of \cite{MRA11} between the time of equlibration of small subsystems of a closed quantum system and the circuit complexity of the unitary which diagonalizes the Hamiltonian of the system.

Consider a quantum Hamiltonian on $n$ qudits (a Hermitian operator on $(\mathbb{C}^{d})^{\otimes n}$) which can be writen as 
\be
H = U D U^{\cal y}, 
\ee
with $D = \text{diag}(E_1, \ldots, E_{d^{n}})$ a diagonal matrix in
the computational basis formed by the eigenvalues of $H$ and $U$ a
unitary matrix. We divide the system into two subsystems $S$ and $E$,
where $S$ should be seen as a small subsystem and $E$ as a bath for
$S$. We consider an arbitrary initial state $\rho_{SE}(0)$ and its
time-evolved version $\rho_{SE}(t)=e^{-iHt }\rho_{SE}(0)  e^{iHt
}$. We are interested in the question of how quickly the subsystem state
$\rho_S(t) = \tr_E \left(\rho_{SE}(t) \right)$ reaches equilibrium (if
it equilibrates in the first place). The equilibrium state is denoted
by $\omega_{S}$ and is given by the reduced state of $S$ of the
time-averaged state
\begin{equation}
\omega_{SE} := \lim_{\tau \rightarrow \infty} \frac{1}{\tau}\int_{0}^{\tau} \rho_{SE}(t) dt = \sum_k P_k \left( \rho_{SE}(0) \right)P_k,
\end{equation}
with $P_k$ the eigenprojectors of the Hamiltonian $H$. 

In Refs. \cite{VZ11,MRA11,BCHHKM11} the square 2-norm average distance between $\rho_S(t)$ and $\omega_S$ was computed for a Hamiltonian with $U$ chosen from the Haar measure in $\mathbb{U}(d^{n})$:

\begin{lem}[Theorem 3 of \cite{BCHHKM11}]
\label{prop-HS}
\be \label{propHSaverage}
\int_{\mathbb{U}(d^{n})}\tr \left((\rho_S(t)-\omega_S)^2\right) \mu_{\text{Haar}}({\rm d}U)=c\left\{\frac{1}{d_S} \frac{|\eta|^2}{d_{ES}^2}+ \left( \frac{|\xi|^2}{d_{ES}^2}+
\frac{x}{d_{ES}^2} \right)^2+ O\biggl(\frac{1}{d_E}\biggr)\right\}
\ee
where $c$ is an absolute constant, $d_E,d_S$ are the dimensions of heat bath and the system, respectively, $d_{ES}=  d_E d_S=d^n$ is the dimension of the total system,  
\be
\xi = \sum_{k=1}^{d^n} e^{iE_kt}, \hspace{0.9 cm} \eta= \sum_{k=1}^{d^n} e^{i2E_kt},
\ee
and $x=\sum_k d_k^2$, with $d_k= \dim \left( P_k \right)$ being the dimension of eigenspace $P_{k}$.
\end{lem}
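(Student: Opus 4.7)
The plan is to evaluate the Haar average by the standard swap trick combined with the Weingarten/Schur--Weyl calculus for Haar moments.

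First, writing $F_S$ for the swap on the two copies of $S$ and $I_{EE}$ for the identity on the two copies of $E$, the swap identity
\begin{equation}
\tr\bigl((\rho_S(t)-\omega_S)^2\bigr) = \tr\bigl[(F_S\otimes I_{EE})\,(\rho_{SE}(t)-\omega_{SE})^{\otimes 2}\bigr]
\end{equation}
reduces the problem to the Haar average of a two-copy operator. Expanding the square gives three contributions: $\rho_{SE}(t)^{\otimes 2}$, the cross piece $\rho_{SE}(t)\otimes\omega_{SE}+\omega_{SE}\otimes\rho_{SE}(t)$, and $\omega_{SE}^{\otimes 2}$.

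Second, I would insert the spectral decomposition $H = UDU^\dagger$. With $V := e^{-iDt}$ we have $e^{-iHt} = UVU^\dagger$ and $\omega_{SE} = \sum_k P_k\rho_{SE}(0)P_k$, where $P_k = U\Pi_k U^\dagger$ and $\Pi_k$ is the eigenprojector of $D$ onto energy $E_k$ (of rank $d_k$). In each of the three pieces, $U$ and $U^\dagger$ each appear four times, making the Haar integral a fourth-moment calculation. By Schur--Weyl duality this moment expands as a sum over pairs of permutations $\sigma,\tau\in S_4$, weighted by Weingarten coefficients of order $N^{-4}$ at leading order, with the $R_\sigma$ acting as permutation operators on the four copies of $\mathbb{C}^{d^n}$.

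Third, I would collect the surviving terms. Traces of $V^{\pm 1}$ and $V^{\pm 2}$ on $\mathbb{C}^{d^n}$ produce $\xi,\bar\xi,\eta,\bar\eta$; products of these give $|\xi|^2$ and $|\eta|^2$. Degeneracies among the $E_k$ generate the combinatorial factor $x=\sum_k d_k^2$: when two $U$-indices are forced by a permutation cycle to lie in a common eigenspace of $D$, the sum over that block contributes $d_k$, and summing the square across blocks gives $x$. The dimensional prefactors $1/d_{ES}^2$ and $1/d_S$ come from the Weingarten weights together with the partial traces over $S$ and $E$, which are separated by $F_S\otimes I_{EE}$ (traces over $E$ give factors of $d_E$, while traces over $S$ contracted with $F_S$ give factors of $d_S$). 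The potentially leading ``diagonal'' contributions (where $k=l$ in a sum $\sum_{k,l}e^{-i(E_k-E_l)t}(\cdot)$) cancel precisely between $\rho_{SE}(t)^{\otimes 2}$, the cross piece, and $\omega_{SE}^{\otimes 2}$, since $\rho_{SE}(t)-\omega_{SE}$ contains only off-diagonal ($k\neq l$) matrix elements in the eigenbasis of $H$. The remaining Weingarten permutations each carry an extra factor $1/N = 1/d_{ES}$ and, after summation, collectively yield the stated $O(1/d_E)$ remainder.

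The main obstacle is purely combinatorial: one must enumerate the Weingarten terms (there are $|S_4|^2=576$ in principle), identify exactly which ones align with $F_S\otimes I_{EE}$ to produce the stated combinations of $|\xi|^2$, $|\eta|^2$, and $x$, and verify the cancellations between the three pieces of the expansion. The conceptual ingredients --- the swap trick, the Schur--Weyl twirl, and the identification of dimensional prefactors via block counting --- are otherwise standard.
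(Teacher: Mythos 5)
This lemma is imported verbatim as Theorem 3 of \cite{BCHHKM11}; the present paper does not prove it, so there is no in-paper proof to compare against. Your proposed route --- the swap identity $\tr\bigl(A_S^2\bigr)=\tr\bigl[(F_S\otimes I_{EE})(A_{SE}\otimes A_{SE})\bigr]$ combined with a fourth-moment Weingarten/Schur--Weyl expansion in $U$ --- is indeed the standard technique used in \cite{BCHHKM11} and the closely related \cite{MRA11,VZ11}. In particular your structural observation is correct and useful: writing everything in the eigenbasis of $H$, the $\langle k|\cdot|l\rangle$ matrix element of $\rho_{SE}(t)-\omega_{SE}$ vanishes whenever $E_k=E_l$, because $\omega_{SE}$ is exactly the pinching of $\rho_{SE}(0)$ onto the energy eigenspaces; this removes the would-be leading contributions before any Haar averaging.

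That said, what you have written is an outline, not a proof. The $(4!)^2=576$ Weingarten terms are not enumerated; the exact mechanism by which $\bigl(|\xi|^2/d_{ES}^2 + x/d_{ES}^2\bigr)$ appears \emph{squared} while $|\eta|^2$ appears with the separate $1/d_S$ prefactor is not exhibited; and the claimed scaling of the remainder as $O(1/d_E)$ rather than, say, $O(1/\sqrt{d_E})$ is asserted but not derived. These are precisely the places where such computations go wrong in practice --- for instance, in deciding which cycles of $\sigma\tau^{-1}$ in the Weingarten sum close over the $S$ factor (contributing $d_S$) versus the $E$ factor (contributing $d_E$) once $F_S\otimes I_{EE}$ is inserted, and in tracking how the permutation cycles thread through the energy projectors $\Pi_k$ to produce $x=\sum_k d_k^2$ rather than some other power sum. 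A reader cannot verify the stated formula, including the unspecified constant $c$, from the sketch as given. The conceptual skeleton matches the cited reference, but essentially all of the theorem's labor remains to be done.
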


It follows from Lemma \ref{prop-HS} that for a non-degenerate Hamiltonian the time average of the R.H.S. of Eq. (\ref{propHSaverage}) - over times of order of the inverse of the average energy gap - will be small (see Refs. \cite{VZ11,MRA11,BCHHKM11}). Thus a Hamiltonian whose basis is chosen according to the Haar measure and whose spectrum have on average large energy gaps (which is expected to be the case typically) will equilibrate rapidly. 

In Ref. \cite{MRA11}, Masanes, Roncaglia and Acin noted that the average computed in Lemma \ref{prop-HS} only involves polynomials in the entries of $U$ of degree $4$. Therefore one could consider the average over an $\epsilon$-approximate unitary $4$-design instead of the Haar measure and obtain the same result, up to an additive error of $\epsilon$. Also in Ref. \cite{MRA11} an interesting connection of fast equilibration and the circuit complexity of $U$ was put forward: It was argued that, assuming that random circuits of length $O(n^3)$ form an approximate unitary $4$-design, then the Hamiltonians of most circuits of such size enjoy fast equilibration of small subsystems. Conversely, a simple argument shows that circuits with complexity less than linear cannot lead to quick equilibration. Therefore there appears to exist a connection between fast local equilibration and the circuit complexity of the unitary diagonalizing the Hamiltonian.

\corref{main-design} allows us to strengthen this connection as follows: 
\begin{cor} \label{cor:equilibration}
For every $\delta > 0$, all but a $\delta$-fraction of quantum circuits $U$ of size $n^{k}$ (with respect to the measure $\left( \nu_{\text{LR}, d, n} \right)^{* n^k}$ induced by $n^{k}$ steps of the local random circuit model) are such that, with $H = UDU^{\cal y}$,
\be
\tr \left((\rho_S(T)-\omega_S)^2\right) \leq \frac{1}{\delta} \left( \int_{\mathbb{U}(d^{n})}\tr \left((\rho_S(T)-\omega_S)^2\right) \mu_{\text{Haar}}({\rm d}U) + 2^{- \Omega(n)} \right),
\ee
and
\be
C_{\epsilon}(U) \geq n^{\frac{k-9}{11}}\polylog^{-1}(n),
\ee
with $\epsilon := 1 - n^{- \Omega(1)}$.
\end{cor}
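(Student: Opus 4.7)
The plan is to combine two already-established facts: that $n^k$-step local random circuits are $\eps$-approximate unitary $4$-designs with exponentially small $\eps$ (a direct application of \corref{main-design}), and the circuit-complexity lower bound \corref{circuit-LB}. The equilibration inequality will then follow from Markov's inequality, and a union bound with the negligible exceptional set from \corref{circuit-LB} gives the second conclusion.

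The key observation for the first inequality is that the random variable $f(U) := \tr((\rho_S(T)-\omega_S)^2)$ is a polynomial of bidegree $(4,4)$ in the matrix entries of $U$ and $U^*$, with coefficients of $O(1)$ operator-norm size. Writing $H = UDU^{\cal y}$ gives $e^{-iHT} = U e^{-iDT} U^{\cal y}$, and the spectral projectors $P_k$ of $H$ equal $U \Pi_k U^{\cal y}$, where the $\Pi_k$ are the $U$-independent spectral projectors of the diagonal matrix $D$. Thus $\rho_S(T)$ and $\omega_S$ are each bidegree $(2,2)$ in $(U,U^*)$. Applying the swap trick $\tr(A^2) = \tr((A\ot A) F_S)$ with $F_S$ the swap on two copies of $S$, one obtains
\be
f(U) = \tr\bigl(M \cdot U^{\ot 4}\,\rho_{SE}(0)^{\ot 2}\,(U^{\cal y})^{\ot 4}\bigr)
\ee
for a fixed operator $M$ built from $F_S$, $e^{\pm iDT}$, the $\Pi_k$, and partial traces, with $\|M\|_\infty = O(1)$. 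By \lemref{design-defs} together with \defref{design}, any $\eps$-approximate $4$-design $\nu$ satisfies $|\E_\nu f(U) - \E_{\mu_{\text{Haar}}} f(U)| = O(\eps)$.

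Plugging $t=4$ into \corref{main-design}(1), the measure $\nu := (\nu_{{\rm LR},d,n})^{*n^k}$ is an $\eps$-approximate $4$-design with $\eps = 2^{-\Omega(n)}$ as soon as $k \geq 4$ at fixed $d$. Markov's inequality applied to the non-negative $f(U)$ with threshold $\delta^{-1}\E_\nu[f(U)] \leq \delta^{-1}(\E_{\mu_{\text{Haar}}}[f(U)] + 2^{-\Omega(n)})$ then yields the first claimed inequality on at least a $(1-\delta)$-fraction of circuits. The circuit-complexity claim is exactly \corref{circuit-LB}, which holds for all but a $2^{-\Omega(n)}$-fraction of $U \sim \nu$; for $n$ large enough this fraction is smaller than $\delta$, so a union bound produces both conclusions simultaneously on a $(1-\delta)$-fraction.

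The only mildly non-trivial ingredient is checking $\|M\|_\infty = O(1)$ in the decomposition above, which is routine: $\rho_{SE}(0)$ is a density operator, and $F_S$, $e^{\pm iDT}$, the $\Pi_k$ and partial traces are all contractive in operator norm. Everything else is a direct packaging of tools already developed in the paper, so I do not anticipate a substantive obstacle beyond this bookkeeping.
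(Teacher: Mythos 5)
Your proposal matches the paper's implicit proof: establish the exponentially small 4-design property from \corref{main-design}, use the observation (attributed in the paper to Masanes--Roncaglia--Acin) that $\tr((\rho_S(T)-\omega_S)^2)$ is a bidegree-$(4,4)$ polynomial in $(U,U^*)$ so its mean over the random-circuit measure is within $2^{-\Omega(n)}$ of the Haar average, apply Markov's inequality to the nonnegative quantity, and union-bound with \corref{circuit-LB} (the paper explicitly says the $\poly(n)$-design property is what enables the second conclusion). That is the intended argument, and your version of it is sound.

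One small note on your intermediate decomposition. As written, $\tr\bigl(M\cdot U^{\ot 4}\,\rho_{SE}(0)^{\ot 2}\,(U^\dag)^{\ot 4}\bigr)$ is not dimensionally consistent ($\rho_{SE}(0)^{\ot 2}$ lives on two copies of the system, $U^{\ot 4}$ on four), and more substantively $\tr\bigl((\rho_S(T)-\omega_S)^2\bigr)$ contains the pattern $U^{\ot 2}\cdots(U^\dag)^{\ot 2}\cdots U^{\ot 2}\cdots(U^\dag)^{\ot 2}$ rather than the form $U^{\ot 4}(\cdot)(U^\dag)^{\ot 4}$; one needs a folding/maximally-entangled-ancilla step, or equivalently the identity $\tr\bigl((A_1\ot\cdots\ot A_k)\mathbb{V}_{1,\ldots,k}\bigr)=\tr(A_1\cdots A_k)$ already used in the paper's proof of \corref{followingcircuitsbyhaar}, to put it into a form where the 4-design (or TPE) bound applies directly. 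This introduces a $\poly(d^n)$ factor (as in that proof), not $O(1)$; but the factor is harmlessly absorbed since $g(\nu,4)=2^{-\Omega(n^2)}$ for $n^k$-step circuits with $k\geq 4$. So the conclusion stands, just be careful that the needed bound is on $\|L\|_1$ rather than $\|M\|_\infty$ and can be polynomial in dimension.
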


Using Corollary~\ref{cor:unisetofgates} we can also obtain an analogous statement for any universal set of gates containing inverses with elements formed by algebraic entries. 

In words, our results allows us to confirm the expectation of Ref. \cite{MRA11}  that random circuits form an approximate unitary $4$-design and also show that most such circuits indeed have large circuit complexity. The latter is useful information because one could worry that most unitaries of size $n^k$ (according to some chosen distribution on the set of circuits) would have a much shorter circuit decomposition, invalidating the connection of the time of equilibration with the circuit complexity of the diagonalizing unitary of the Hamiltonian. The fact that random circuits are not just approximate $4$-designs, but even approximate $\poly(n)$-designs is what allows us to prove \corref{circuit-LB} and show that indeed there is no such considerably shorter decomposition in general.

Although \corref{equilibration} makes clearer the connection of fast
subsystem equilibration and the complexity of diagonalizing the
Hamitonian, it is still is not the kind of statement one would hope
for. Indeed, to establish the connection in full one would like to
show that for most circuits for which $C_{\epsilon}(U)$ is large
enough (say, in the range $n^{k_1} < C_\eps(U) < n^{k_2}$ for all
sufficiently large $k_1 \ll k_2 = k_2(k_1)$), equilibration is fast. Here we can merely prove that most circuits $U$ of sufficiently large size are such that the corresponding Hamiltonian enjoys fast equilibration of small subsystems \textit{and} $C_{\epsilon}(U)$ is big. 

Another version of the claim that we would like to establish concerns
the {\em incompressibility}  of random circuits.  A strong version of
this conjecture would be that any $\eps$-covering of the set of
$t$-gate random circuits has cardinality $\geq (1/\eps)^{\Omega(t
  d^2)}$.  See \propref{converse} and \lemref{eps-net-circuits} for
some much weaker claims in this direction.  

One difficulty in establishing such a conjecture is that the exact
Hausdorff dimension of $r$-gate random circuits will depend on the
gauge freedom determined by their overlaps.  For example, an element
of $SU(4)$ has 15 real degrees of freedom, but three-qubit circuits of
the form $U_{12}U_{23}$ have $15+15-3$ degrees of freedom,
corresponding to the fact that the transformation
$U_{12},U_{23}\mapsto U_{12}V_2, V_2^\dag U_{23}$ leaves
$U_{12}U_{23}$ unchanged for any $V_2\in SU(2)$.

\hspace{0.2 cm}

\noindent \textbf{Application III: Generation of Topological Order.}
Topological order is a concept from condensed matter physics used to
describe phases of matter that cannot be described by the Landau local
order paradigm \cite{Wen90}. Roughly speaking topological order
corresponds to patterns of long-range entanglement in ground states of many-body Hamiltonians. The intrinsic stability of topologically ordered systems against local perturbations also make them attractive candidates for constructing robust quantum memories or even topological quantum computation \cite{Kit03, NSSFS08}.

In recent years it has emerged that it is fruitful to consider topological order as a property of quantum states, instead of quantum Hamiltonians (see e.g. \cite{BHF06, CCW10, Has11}). There are two approaches to define topologically ordered states. The first is to say that a state has topological quantum order (TQO) if it cannot be approximated by any state that can be generated by applying a local circuit of small depth to a product state. Thus the state contains multiparticle entanglement that cannot be created merely by local interactions. In more detail, an $n$-qubit state $\ket{\psi}$ defined on a lattice has $(R, \varepsilon)$ topological quantum order if for any parallel local circuit $U$ of depth $R$, $\Vert U \ket{0}^{\otimes n} - \ket{\psi} \Vert \geq \varepsilon$ \cite{BHF06, CCW10, Has11}.

The second approach is to say that a quantum state $\ket{\psi_0}$ defined on a lattice exhibits TQO if there is another state $\ket{\psi_1}$ orthogonal to it such that for all local observables $O_{\text{loc}}$, $\bra{\psi_0} O_{\text{loc}} \ket{\psi_0} \approx \bra{\psi_1} O_{\text{loc}} \ket{\psi_1}$ and $\bra{\psi_1} O_{\text{loc}} \ket{\psi_0} \approx 0$ \cite{BHF06, CCW10, Has11}. Thus one cannot distinguish the two states, or even any superposition of them, by local measurements. Quantitatively we say two orthogonal states $\ket{\psi_0}$, $\ket{\psi_1}$ defined on a finite dimensional lattice have $(l, \varepsilon)$-TQO if for any observable $O_{\text{loc}}$, with $\Vert O_{\text{loc}} \Vert \leq 1$, supported on a set of diameter less than $l$, we have $|\bra{\psi_0} O_{\text{local}} \ket{\psi_0} - \bra{\psi_1} O_{\text{loc}} \ket{\psi_1}| \leq 2\varepsilon$ and $|\bra{\psi_1} O_{\text{loc}} \ket{\psi_0}| \leq \varepsilon$. As shown in Ref. \cite{BHF06, Has11}, if a state is $(l, \varepsilon)$ topologically ordered according to the second definition, then it is also $(l/2, \varepsilon)$ topologically ordered according to the first definition. We remark in passing that topologically ordered states can also be understood as code states of any quantum error correcting code with large distance, and so the terminology ``topological order'' does not have to refer to any topological properties of the geometry of the qubits.

In Ref. \cite{BHF06} it was shown that in any fixed dimension $D$ a quantum evolution on $n$ qubits, in the form of a local Hamiltonian or a parallel local circuit, cannot generate topological quantum order in time (or depth in the case of a quantum circuit) less than $O(n^{1/D})$. The next corollary shows that in one dimension a generic evolution, chosen from the parallel local random circuit model, saturates this bound. Thus almost every local dynamics in 1D generates topological order at the fastest possible rate (according to the second and, hence, also first definition).

\def\topologicalorder#1{
There exists a universal constant $C<10^9$ such that for any two
orthogonal $n$-qubit states $\ket{\psi_0}$ and $\ket{\psi_1}$ most
random unitaries $U$
chosen from the measure $\left(\nu_{\text{PLR}, d,
    n}\right)^{* C n}$, induced by $Cn$ steps of the parallel local
random circuit model, map $\ket{\psi_0},\ket{\psi_1}$ to states with
nearly identical marginals.  More precisely, with probability larger
than $1 - 2^{-n/8}$, for every region $X = \{ l_0, ..., l_{0} + l \}$
of size $l \leq n/4$: 
\begin{equation} \label{eq:tqo1-#1}
\left \Vert \tr_{\backslash X} \left( U \ket{\psi_0}\bra{\psi_0}U^{\cal y} \right) - \tau_{X}  \right \Vert_1,  \left \Vert \tr_{\backslash X} \left( U \ket{\psi_1}\bra{\psi_1}U^{\cal y} \right) - \tau_{X} \right \Vert_1 \leq 2^{-n/8}
\end{equation}
and
\begin{equation}\label{eq:tqo2-#1}
\left \Vert \tr_{\backslash X} \left( U \ket{\psi_0}\bra{\psi_1}U^{\cal y} \right) \right \Vert_1 \leq 2^{-n/8} ,
\end{equation}
with $\tau_X$ the maximally mixed state in $X$ and $\tr_{\backslash X}$ the partial trace with respect to all sites except the ones in region $X$. Thus the states $U\ket{\psi_0}$ and $U\ket{\psi_1}$ exhibit $(n/4, 2^{-n/8})$-TQO. 
}
\begin{cor} \torestate{\label{cor:topologicalorder}
\topologicalorder{first}}
\end{cor}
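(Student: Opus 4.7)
My plan is to prove all three bounds in \eq{tqo1-first}--\eq{tqo2-first} by a single second-moment concentration argument: compute Haar expectations of Hilbert--Schmidt-type scalars via the swap trick, transfer these to the parallel-local-random-circuit distribution using the approximate $2$-design property implied by \prettyref{cor:main-design}, apply Markov's inequality, and finally union-bound over the $O(n^2)$ contiguous regions $X\subseteq[n]$ of size at most $n/4$ and over the three separate inequalities.

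\textbf{Second moments under Haar.} For the diagonal bounds on $\rho_X^{(j)}:=\tr_{\backslash X}(U\ket{\psi_j}\bra{\psi_j}U^{\cal y})$ ($j\in\{0,1\}$), I would use the elementary bound $\|A\|_1^2\le d_X\tr(A^2)$ (valid for Hermitian $A$ on $X$) to reduce to
\ban
\mathbb E_{\mathrm{Haar}}\bigl[\tr((\rho_X^{(j)}-\tau_X)^2)\bigr]=\frac{d_X+d_{\backslash X}}{d^n+1}-\frac{1}{d_X}=O(d^{|X|-n}),
\ean
which is the standard Page-type identity for the purity of a marginal of a Haar-random pure state. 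For the off-diagonal term I set $\sigma:=\tr_{\backslash X}(U\ket{\psi_0}\bra{\psi_1}U^{\cal y})$, use $\|\sigma\|_1^2\le d_X\tr(\sigma\sigma^{\cal y})$, and rewrite
\ban
\tr(\sigma\sigma^{\cal y})=\tr\bigl((\mathbb F_X\otimes I_{\backslash X}^{\otimes 2})\,U^{\otimes 2}\ket{\psi_0\psi_1}\bra{\psi_1\psi_0}(U^{\cal y})^{\otimes 2}\bigr)
\ean
using the identity $\tr(\tr_Y(A)\tr_Y(B))=\tr((\mathbb F_X\otimes I_Y^{\otimes 2})\,A\otimes B)$, where $\mathbb F_X$ denotes the swap on two copies of $X$. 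This is a polynomial of type $(2,2)$ in the entries of $U$, and a one-line two-copy Weingarten calculation — in which the coefficient of the identity in $\mathbb E_{\mathrm{Haar}}[U^{\otimes 2}\ket{\psi_0\psi_1}\bra{\psi_1\psi_0}(U^{\cal y})^{\otimes 2}]$ vanishes thanks to $\langle\psi_0|\psi_1\rangle=0$ — yields $\mathbb E_{\mathrm{Haar}}[\tr(\sigma\sigma^{\cal y})]=O(d^{|X|-n})$. Altogether $\mathbb E_{\mathrm{Haar}}[\|\rho_X^{(j)}-\tau_X\|_1^2]$ and $\mathbb E_{\mathrm{Haar}}[\|\sigma\|_1^2]$ are both $O(d^{2|X|-n})$.

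\textbf{Transfer, Markov, union bound.} Each scalar above is a polynomial of total degree~$4$ in the entries of $U$ and $U^{\cal y}$, i.e.\ sits in the image of $\Delta_{\nu,2}$. Choosing the implicit constant in the depth $300n$ large enough that \prettyref{cor:main-design}(2) delivers an $\eps$-approximate $2$-design with, say, $\eps\le 1/2$ — if necessary by first invoking \prettyref{cor:main-design} at a somewhat smaller depth and then amplifying through \eq{amplify-expander} combined with \prettyref{lem:design-expander} — the same expectations under $(\nu_{\mathrm{PLR},d,n})^{*300n}$ are at most $(1+\eps)$ times the Haar ones, hence still $O(d^{2|X|-n})$. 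Markov's inequality then produces per-region, per-inequality failure probability $O(d^{2|X|-n}\cdot 2^{n/4})=O(d^{-n/4})$ whenever $|X|\le n/4$ and $d\ge 2$. Union-bounding over the at most $\binom{n+1}{2}$ contiguous subsets $X\subseteq[n]$ of size $\le n/4$ and over the three inequalities yields a total failure probability $O(n^2 d^{-n/4})\le 2^{-n/8}$ for all sufficiently large $n$, which is the asserted bound.

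\textbf{Main obstacle.} The Schur--Weyl/Weingarten algebra is routine; the genuine care lies in matching numerical constants so that depth exactly $300n$ is enough to invoke \prettyref{cor:main-design} at an approximate-design quality sufficient for the above argument. If the stated constants are too tight for the $t=2$ version, I would pass to a fourth-moment version of the same argument using a $t=4$ design (which simultaneously tightens the Markov step and loosens the design-quality requirement), or amplify the design error to any desired constant through $g(\nu^{*k},t)=g(\nu,t)^k$. A secondary subtlety is the non-Hermiticity of $\sigma$ in \eq{tqo2-first}, which forces the use of $\sigma\sigma^{\cal y}$ in place of $\sigma^2$ throughout; the orthogonality $\langle\psi_0|\psi_1\rangle=0$ plays an essential role here, as it is precisely what annihilates the identity component in the Weingarten expansion of the off-diagonal second moment.
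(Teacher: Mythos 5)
Your overall architecture — Haar second-moment computations via the swap trick, conversion from $2$-norm to trace norm at a cost of $\sqrt{d^{|X|}}$, Markov's inequality, and a union bound over the $O(n^2)$ contiguous regions of size at most $n/4$ and over the three inequalities — matches what the paper's very terse proof does implicitly, and your Weingarten observations (in particular that $\langle\psi_0|\psi_1\rangle=0$ kills the identity component of the off-diagonal second moment) are correct. The genuine gap is the transfer step from the Haar measure to $(\nu_{\rm PLR,d,n})^{*300n}$: you route it through \prettyref{cor:main-design} to obtain an $\eps$-approximate $2$-design, but the depth $300n$ is fixed in the statement — there is no ``implicit constant'' to enlarge — and it is not large enough to pay for that detour. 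Plugging $t=2$, $\eps=1/2$ into \prettyref{cor:main-design}(2) requires depth $12\log(2)\cdot 16\cdot(4n\log d + \log 2)$, which already exceeds $300n$ at $d=2$; and your fallback of amplifying $g$ via $g(\nu^{*k},t)=g(\nu,t)^k$ and then converting with \prettyref{lem:design-expander} fares worse still, since the $N^{2t}=d^{4n}$ conversion factor makes $d^{4n}\,g(\nu_{\rm PLR,d,n},2)^{300n}$ diverge with $n$ for every $d\geq 2$.

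What the paper does instead is apply the raw TPE bound of part 2 of \prettyref{thm:main-TPE} directly to the second-moment scalars. Each such scalar is a bilinear form in $U^{\ot 2,2}$; for instance the purity satisfies $\tr(\rho_X^2)=\tr\bigl((\mathbb{F}_X\ot I_{\bar X}^{\ot 2})\,\ad_{U^{\ot 2}}(\psi_0^{\ot 2})\bigr)$, so the gap between Haar and circuit averages is at most
\[
\bigl\Vert \mathbb{F}_X\ot I_{\bar X}^{\ot 2}\bigr\Vert_2\cdot g\bigl((\nu_{\rm PLR,d,n})^{*300n},2\bigr)\cdot\bigl\Vert\psi_0^{\ot 2}\bigr\Vert_2 \;=\; d^n\cdot g(\nu_{\rm PLR,d,n},2)^{300n},
\]
costing only a $d^n$ dimensional factor rather than $d^{4n}$. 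With $g(\nu_{\rm PLR,d,n},2)\leq 1-1/(12\cdot 2^4\log 2)$ from \prettyref{thm:main-TPE}(2), this is at most $2^{-n}$ (for small $d$), which is exactly what the constant $300n$ is calibrated for. Once you replace the design detour by this direct TPE bound — which your $2\to 2$-norm setup already invites — the remainder of your argument, including the Markov step and the union bound, is correct and reproduces the paper's proof, just spelled out in more detail than the paper gives.
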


The proof is in \secref{topo-order}.  The proof only uses the
2-design property of random circuits, which had already been
established in previous work~\cite{ODP06, DOP07, HL09, DJ10, AB08,
  Zni08}, but not previously for linear-depth circuits in one dimension.

Corollary \ref{cor:topologicalorder} also shows that one dimensional parallel random circuits \textit{scramble} \cite{HP07, SS08, LSHOH11} -- making an initial localized bit of information inaccessible to an observer that only looks at sublinear sized regions --  in linear time, confirming the expectation of Refs. \cite{HP07, SS08}.

\section{Proof overview of the main result}\label{sec:proof-OV}

\subsection{Local random circuits}

The proof of part 1 of \thmref{main-TPE} consists of four steps, explained below. 

\noindent \textbf{1. Relating to Spectral Gap:} In the first step,
following the work of Brown and Viola \cite{BV10} and Ref. \cite{BH10}
(see also the earlier work \cite{Zni08}), we rephrase the TPE
condition from \defref{TPE} in terms of the spectral gap of a local quantum Hamiltonian. A local
Hamiltonian on $n$ $D$-dimensional subsystems is a Hermitian matrix
$H$, acting on $(\mathbb{C}^{D})^{\otimes n}$, of the form $H = \sum_k
H_k$, where each $H_k$ acts non-trivially only on a constant number of
systems.  
The spectral gap of $H$, denoted by $\Delta(H)$, is normally given by
the absolute value of the difference of its two lowest distinct eigenvalues.  In our
setting the bottom $t!$ eigenvalues will be 0 and all other eigenvalues will be $>0$
(assuming that either that $U$ is drawn from the Haar measure or a set meeting the conditions of \corref{unisetofgates})
$\Delta(H)$ will equal the $t!+1^{\text{st}}$ eigenvalue.

Consider the following local Hamiltonian acting on $n$ subsystems,
each of dimension $D:=d^{2t}$:
\begin{equation} \label{Hntdef}
H_{n, t} := \sum_{i=1}^{n-1} h_{i, i+1}
\end{equation}
with local terms $h_{i, i+1} := I - P_{i, i+1}$ acting on subsystems $i, i+1$ and $P_{i, i+1}$ defined as
\begin{equation}\label{eq:Pii1-def}
P_{i, i+1} := \int_{\mathbb{U}(d^{2})} \left(U_{i, i+1}\right)^{\otimes t, t} \mu_{\text{Haar}}({\rm d}U).
\end{equation}
with $I$ the identity operator and $U^{\otimes t, t} := U^{\otimes t} \otimes (U^{*})^{\otimes t}$.

In section \ref{proofLemma1} we prove:
\def\lemHamGap{
\begin{equation}
g(\nu_{{\rm LR},n, d}, t) 
 =  1 - \frac{\Delta(H_{n, t})}{n}.
\end{equation}}
\begin{lem} \label{lem:Ham-gap}
\lemHamGap
\end{lem}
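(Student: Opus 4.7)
The plan is to recognise that the moment operator of $\nu_{\mathrm{LR},d,n}$ is an affine function of $H_{n,t}$, so its spectrum is determined by that of $H_{n,t}$, and then to identify the ground space of $H_{n,t}$ with the Haar-invariant subspace on which $M_{\mathrm{Haar}}$ projects.

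First I would unfold the definition of $\nu_{\mathrm{LR},d,n}$: each step of the walk picks $i \in [n-1]$ uniformly and applies a Haar-random two-qudit gate on sites $i,i+1$, so by linearity of the moment map,
\begin{equation}
M \;:=\; \int_{\mathbb{U}(d^n)} U^{\otimes t,t}\, \nu_{\mathrm{LR},d,n}(\mathrm{d}U) \;=\; \frac{1}{n-1}\sum_{i=1}^{n-1} P_{i,i+1},
\end{equation}
where both sides are viewed as operators on $(\mathbb{C}^{d^{2t}})^{\otimes n}$ by grouping the $2t$ ``copy'' indices of each qudit together. Since $h_{i,i+1} = I - P_{i,i+1}$, this rearranges to the key identity $M = I - H_{n,t}/(n-1)$. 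In particular $M$ is Hermitian and positive semidefinite, and its spectrum is exactly $\{1 - \lambda/(n-1) : \lambda \in \mathrm{spec}(H_{n,t})\}$.

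Next I would identify the ground space of $H_{n,t}$ with the range $V_0$ of $M_{\mathrm{Haar}} := \int U^{\otimes t,t}\mu_{\mathrm{Haar}}(\mathrm{d}U)$. Each $h_{i,i+1}$ is an orthogonal projector (since $P_{i,i+1}$ is one), so $H_{n,t} \succeq 0$, and $\ker H_{n,t}$ consists of vectors fixed by $U_{i,i+1}^{\otimes t,t}$ for every two-qudit unitary on every edge. Because nearest-neighbour two-qudit unitaries generate a dense subgroup of $\mathbb{U}(d^n)$ and $V \mapsto V^{\otimes t,t}$ is continuous, any such vector is fixed by $V^{\otimes t,t}$ for every $V \in \mathbb{U}(d^n)$, hence lies in $V_0$; the reverse containment is immediate. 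Thus $\ker H_{n,t} = V_0 = \mathrm{range}(M_{\mathrm{Haar}})$.

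Finally I would combine the two ingredients. Since $V_0$ is $M$-invariant and both $M$ and $M_{\mathrm{Haar}}$ are Hermitian, they are simultaneously block-diagonal with respect to $V_0 \oplus V_0^\perp$. On $V_0$ each $P_{i,i+1}$ acts as the identity, so $(M - M_{\mathrm{Haar}})|_{V_0} = 0$. On $V_0^\perp$, $M_{\mathrm{Haar}}$ vanishes while $H_{n,t}$ has spectrum contained in $[\Delta(H_{n,t}),\, n-1]$ (the upper bound because $H_{n,t}$ is a sum of $n-1$ projectors), so $M$ has spectrum in $[0,\, 1 - \Delta(H_{n,t})/(n-1)]$. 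Putting the blocks together,
\begin{equation}
g(\nu_{\mathrm{LR},d,n}, t) \;=\; \|M - M_{\mathrm{Haar}}\|_{\infty} \;\leq\; 1 - \frac{\Delta(H_{n,t})}{n-1} \;\leq\; 1 - \frac{\Delta(H_{n,t})}{n},
\end{equation}
which is the claimed inequality. The only step that needs genuine care is the identification $\ker H_{n,t} = V_0$, and specifically the invocation that nearest-neighbour two-qudit gates generate $\mathbb{U}(d^n)$; once that is in hand, everything else is bookkeeping with commuting projectors.
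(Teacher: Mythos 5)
Your proof is correct and follows essentially the same route as the paper's: rewrite the moment operator as an affine function of $H_{n,t}$ and compare its spectrum on $\ker H_{n,t}$ and on the orthogonal complement with that of the Haar projector. You are actually a bit more careful than the paper on two points — you keep the $1/(n-1)$ normalization (the paper writes $1/n$, a harmless slip since the claim is an inequality and $1 - \Delta(H_{n,t})/(n-1) \leq 1 - \Delta(H_{n,t})/n$), and you spell out the identification $\ker H_{n,t} = \mathrm{range}\bigl(\int U^{\otimes t,t}\mu_{\mathrm{Haar}}(\mathrm{d}U)\bigr)$, which the paper leaves implicit here and proves separately in \lemref{propertiesHnt}.
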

Lemma \ref{lem:Ham-gap} thus shows that in order to bound the
rate of convergence of the random walk associated to the random
quantum circuit (for its first $t$ moments), it sufficies to lower bound the spectral gap of $H_{n, t}$.

\vspace{0.2 cm}

\noindent \textbf{2. The Structure of $H_{n, t}$:} It turns out that $H_{n, t}$ has a few special properties which make the estimation of its spectral gap feasible. 

\def\lemPropertiesHnt#1{
For every $n, t > 0$ the following properties of $H_{n, t} = \sum_i (I - P_{i, i+1})$ hold:
\begin{enumerate}
\item  \cite{BV10} the minimum eigenvalue of $H_{n, t}$ is zero and the zero eigenspace is given by 
\begin{equation}
{\cal G}_{n, t} := \text{span} \left\{  \ket{\psi_{\pi, d}}^{\otimes
    n} : \hspace{0.2 cm} \ket{\psi_{\pi, d}} := (I \otimes V_{d}(\pi)) \ket{\Phi_d} \hspace{0.1 cm}, \hspace{0.1 cm}  \pi \in S_{t}  \right \},
\end{equation}
with $\ket{\Phi_{d^t}} := d^{-t/2}\sum_{k=1}^{d^{t}} \ket{k, k}$ the maximally entangled state on $(\mathbb{C}^{d})^{\otimes t} \otimes (\mathbb{C}^{d})^{\otimes t}$, $S_{t}$ the symmetric group of order $t$, and $V_{d}(\pi)$ the 
representation of the permutation $\pi \in \cS_t$ which acts on $(\mathbb{C}^{d})^{\otimes t}$ as
\begin{equation}
V_{d}(\pi) \ket{l_1} \otimes ... \otimes \ket{l_t} =
\ket{l_{\pi^{-1}(1)}} \otimes ... \otimes \ket{l_{\pi^{-1}(t)}};
\label{eq:Vd-def-#1}
\end{equation}
\item Let $G_{n, t}$ be the projector onto ${\cal G}_{n, t}$. If
  $t^2\leq d^n$, then 
\be \label{eq:columnsum-#1}
\sum_{\pi \in \cS_t} |\braket{\psi_{\sigma, d}}{\psi_{\pi, d}}|^{n} \leq
1 + \frac{t^{2}}{d^{n}},
\qquad\forall \sigma\in \cS_t
\ee
and 
\be \label{eq:normboundperm-#1}
\left \Vert  \sum_{\pi \in \cS_t}   \psi_{\pi, d}^{\otimes n} - G_{n, t} \right \Vert_{\infty} \leq \frac{t^{2}}{d^{n}}.
\ee
Here we use the convention that $\psi := \proj\psi$.
\end{enumerate}
}
\begin{lem} \label{lem:propertiesHnt}
\lemPropertiesHnt{first}
\end{lem}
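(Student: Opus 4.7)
The plan is to combine the frustration-free structure of $H_{n,t}$ with Schur--Weyl duality for part 1, and to use a direct computation together with a Gram-matrix SVD argument for part 2.

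\emph{Part 1.} Each $h_{i,i+1}=I-P_{i,i+1}$ is an orthogonal projector, so $H_{n,t}\geq 0$ and its kernel equals the common fixed-point subspace $\bigcap_{i=1}^{n-1}\operatorname{Range}(P_{i,i+1})$. Since $P_{i,i+1}$ is the Haar projector onto the $\{U^{\otimes t,t}:U\in\mathbb{U}(d^2)\}$-invariant vectors on sites $i,i+1$, I would identify each $\ket{\phi}\in(\mathbb{C}^D)^{\otimes n}$ with an operator $\tilde\Phi$ on $(\mathbb{C}^d)^{\otimes tn}$ via the Choi--Jamiolkowski isomorphism. The common-kernel condition then reads: for every $i$ and every $U\in\mathbb{U}(d^2)$, $\tilde\Phi$ commutes with $U^{\otimes t}$ applied simultaneously to the $t$ copies of sites $(i,i+1)$. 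Two-local universality implies that these generators produce the diagonal subgroup $\{V^{\otimes t}:V\in\mathbb{U}(d^n)\}$, and Schur--Weyl duality then forces $\tilde\Phi\in\operatorname{span}\{V_{d^n}(\sigma):\sigma\in S_t\}$. Since $V_{d^n}(\sigma)=V_d(\sigma)^{\otimes n}$ after regrouping qudits by site, its Choi vector is exactly $\ket{\psi_{\sigma,d}}^{\otimes n}$, giving one inclusion. The reverse inclusion uses the factorization $\ket{\Phi_{(d^2)^t}}=\ket{\Phi_{d^t}}_i\otimes\ket{\Phi_{d^t}}_{i+1}$ and $V_{d^2}(\pi)=V_d(\pi)_i\otimes V_d(\pi)_{i+1}$ under the split $\mathbb{C}^{d^2}=\mathbb{C}^d\otimes\mathbb{C}^d$ across the two sites, which exhibits $\ket{\psi_{\pi,d}}_i\otimes\ket{\psi_{\pi,d}}_{i+1}=(I\otimes V_{d^2}(\pi))\ket{\Phi_{(d^2)^t}}\in\operatorname{Range}(P_{i,i+1})$.

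\emph{Part 2(a).} Using $\bra{\Phi_{d^t}}(I\otimes M)\ket{\Phi_{d^t}}=d^{-t}\tr(M)$ and $\tr V_d(\tau)=d^{c(\tau)}$ where $c(\tau)$ counts cycles of $\tau\in S_t$, a direct calculation gives
\begin{equation*}
\braket{\psi_{\sigma,d}}{\psi_{\pi,d}}=d^{-t}\tr V_d(\sigma^{-1}\pi)=d^{c(\sigma^{-1}\pi)-t}.
\end{equation*}
Writing $m:=d^n$, the sum then collapses to
\begin{equation*}
\sum_{\pi\in S_t}|\braket{\psi_{\sigma,d}}{\psi_{\pi,d}}|^n=m^{-t}\sum_{\tau\in S_t}m^{c(\tau)}=m^{-t}\prod_{i=0}^{t-1}(m+i)=\prod_{i=0}^{t-1}\L(1+\frac{i}{d^n}\R),
\end{equation*}
where I invoke the generating-function identity $\sum_\tau m^{c(\tau)}=m(m+1)\cdots(m+t-1)$ for unsigned Stirling numbers of the first kind. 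Under the hypothesis $t^2\leq d^n$, the product is at most $\exp(t(t-1)/(2d^n))\leq 1+t^2/d^n$, using $e^x\leq 1+2x$ on $[0,1/2]$.

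\emph{Part 2(b).} I define $A:\mathbb{C}^{|S_t|}\to(\mathbb{C}^D)^{\otimes n}$ by $A\ket{\pi}:=\ket{\psi_{\pi,d}}^{\otimes n}$, so that $M:=\sum_\pi\psi_{\pi,d}^{\otimes n}=AA^\dagger$ and the Gram matrix $X:=A^\dagger A$ has entries $X_{\sigma,\pi}=\braket{\psi_{\sigma,d}}{\psi_{\pi,d}}^n=d^{n(c(\sigma^{-1}\pi)-t)}$. From the SVD $A=W\Sigma V^\dagger$, the projector $G_{n,t}$ onto $\operatorname{Range}(A)$ equals $W_rW_r^\dagger$, where $W_r$ is the isometry onto the nonzero-singular-value block, and
\begin{equation*}
M-G_{n,t}=W_r(\Sigma_r^2-I_r)W_r^\dagger,
\end{equation*}
so $\Vert M-G_{n,t}\Vert_\infty=\max_{\sigma_k>0}|\sigma_k^2-1|$. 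Since the nonzero eigenvalues of $X$ coincide with the $\sigma_k^2$, Weyl's perturbation inequality gives $\Vert M-G_{n,t}\Vert_\infty\leq\Vert X-I\Vert_\infty$. The Hermitian matrix $X-I$ has zero diagonal, so its operator norm is bounded by the maximum row-sum of off-diagonal moduli, which by part (a) is at most $t^2/d^n$.

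\emph{Main obstacle.} The nontrivial step is part 1: local fixed-point subspaces separately admit coefficient sequences $\{c^{(i)}_\pi\}$ that could vary across adjacent pairs $i$, and ruling this out requires the global Choi correspondence together with two-local universality to upgrade the adjacent $\mathbb{U}(d^2)$-invariances into a Schur--Weyl-type invariance under the diagonal $\mathbb{U}(d^n)$ action on all $t$ copies. Note also that the theorem is phrased in terms of the span rather than a basis, which is what allows it to cover the degenerate regime $d<t$ where the $\ket{\psi_{\pi,d}}$ become linearly dependent; the SVD formulation in part 2(b) handles this regime automatically without requiring invertibility of the Gram matrix.
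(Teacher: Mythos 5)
Your proof is correct and follows essentially the same path as the paper's: frustration-freeness plus two-local universality plus Schur--Weyl duality for item 1, the column-sum computation for item 2(a) (your cycle-count/Stirling-number identity is the same fact as the paper's trace of the symmetric projector), and the spectral correspondence between $AA^\dagger$ and $A^\dagger A$ for item 2(b) (the paper's $B$ is your $A^\dagger$). One nitpick in 2(b): the step $\|M-G_{n,t}\|_\infty \le \|X-I\|_\infty$ is not Weyl's perturbation inequality but simply the observation that the nonzero part of the spectrum of $M=AA^\dagger$ is contained in that of $X=A^\dagger A$, so the bound follows by direct spectral comparison; the appeal to Weyl is unnecessary and slightly misleading, though the inequality itself is right.
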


In particular, the quasi-orthogonality property of the states
$\ket{\psi_{\pi}}$ given by Eqs. \eqs{columnsum-first}{normboundperm-first} will be necessary to derive a good lower bound
on the spectral gap of $H_{n, t}$.


\noindent \textbf{3. Lower Bounding the Spectral Gap:} With the
properties given by Lemma \ref{lem:propertiesHnt} we are in position to
lower bound $\Delta(H_{n, t})$. To this aim we use a result of
Nachtergaele \cite{Nac96}, originally proposed to lower bound the
spectral gap of frustration-free local Hamiltonians with a ground space
spanned by matrix-product states \cite{FNW92, PVWC07}. 
Using Nachtergaele's result in combination with Lemma
\ref{lem:propertiesHnt} we show in \secref{proof-shorten} the
following:
\def\lemNachtergaele#1{For every integers $n, t$ with  $n \geq \lceil   2.5 \log_d(4t) \rceil$,
\begin{equation}
\Delta(H_{n, t}) \geq \frac{\Delta(H_{\lceil 2.5 \log_d(4t) \rceil,
    t})}{ 4\lceil 2.5 \log_d(4t)\rceil}.
\label{eq:shorten-#1}
\end{equation}}

\begin{lem} 
\torestate{\label{lem:Nachtergaele}\lemNachtergaele{first}}
\end{lem}

Lemmas \ref{lem:Ham-gap} and \ref{lem:Nachtergaele} directly show that for every $t$, local random quantum circuits of polynomial size are a $\varepsilon$-approximate unitary $t$-design for every fixed $t$. Note, however, that they do not give any information about the dependence of $t$ on the size of the circuit.

\vspace{0.2 cm}

\noindent \textbf{4. Bounding Convergence with Path Coupling:} The
last step in the proof consists in lower bounding $\Delta(H_{\lceil
  2.5 \log_d(4t) \rceil, t})$. We achieve this by using the 
connection of the random circuit model problem with the spectral gap
of $H_{n, t}$ in the \textit{reverse} direction: We upper bound the
convergence time of the random walk on $\mathbb{U}(d^{n})$ defined by
the local random circuit in order to lower bound the spectral gap of
$\Delta(H_{\lceil 2.5 \log_d(4t) \rceil, t})$. The point is
that now any bound on the convergence time is useful. Actually, in
light of Lemma \ref{lem:Nachtergaele}, it sufficies to prove an
\textit{exponentially small} bound on the convergence time in order to
obtain part 1 of \thmref{main-TPE}, and this is what we accomplish.  

We consider the convergence of the random walk in the \textit{Wasserstein distance}
between two probability measures $\nu_1$ and $\nu_2$ on $\mathbb{U}(r)$.  The Wasserstein
distance is defined in terms of a distance measure $d(U,V)$ as follows:
\begin{equation}  \label{WassersteinDef}
W_{d,1}(\nu_1, \nu_2) := \sup  \left \{  \int_{\mathbb{U}(r)} f(U) \nu_1({\rm d}U) -  \int_{\mathbb{U}(r)} f(U) \nu_2({\rm d}U)  \hspace{0.2 cm} : \hspace{0.2 cm} f : \mathbb{U}(r) \rightarrow \mathbb{R} \hspace{0.2 cm} \text{is 1-Lipschitz}  \right \},  
\end{equation}
where we say that $f$ is 1-Lipschitz with respect to metric $d$ if for every two unitaries
$U, V$, $|f(U) - f(V)| \leq d(U, V)$.  The subscript 1 in $W_{d,1}$ is for consistency
with later notation when we will define $W_{d,p}$ general $p$.  In this paper we will
consider two distance measures.  The Frobenius distance $d_{\text{Fro}}(U,V)$ is
$\Vert U-V \Vert_2 := \tr((U-V)^{\cal y}(U-V))^{1/2}$.  The Riemannian distance
$d_{\text{Rie}}(U,V)$ is the length of a geodesic between $U$ and $V$, where the curve is
constrained to remain within the unitary group, i.e.
\be
d_{\text{Rie}}(U,V) = \min \left\{ \int_0^1 \| \gamma'(t) \|_2 dt  : \gamma(0)=U, \gamma(1)=V,
\gamma(t)\in \mathbb{U}(r)\, \forall t\in [0,1]\right\}.\ee
For brevity we write $W_{\text{Fro}} := W_{d_{\text{Fro}},1}$ and $W_{\text{Rie}} := W_{d_{\text{Rie}},1}$.
In section \ref{proofLemma3} we prove 
\begin{lem} \label{lconvergenceWasserstein}
For every integers $k, n > 0$,
\begin{equation}
  W_{\text{Fro}}((\nu_{{\rm LR},n, d})^{* (n - 1)k}, \mu_{\text{Haar}})
  \leq  \left( 1 - \frac{1}{e^{n}(d^2+1)^{n-2}} \right)^{\frac{k}{n-1}} \pi d^{n/2}.
\end{equation}
\end{lem}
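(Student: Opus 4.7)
The plan is to prove the lemma via a path-coupling argument on the unitary group, following Oliveira's framework for random walks on compact Lie groups: construct an explicit coupling, establish a per-block contraction, then iterate. By Kantorovich--Rubinstein duality, it suffices to exhibit a single coupling of $(\nu_{{\rm LR},n,d})^{*(n-1)k}$ with $\mu_{\text{Haar}}$ whose expected Frobenius distance is small. The initial ``diameter'' ingredient is trivial: for any $U,V \in \mathbb{U}(d^n)$,
\[
\|U-V\|_2^2 \;=\; 2d^n - 2\,\mathrm{Re}\,\mathrm{tr}(U^\dagger V) \;\leq\; 2d^n,
\]
so $\|U-V\|_2 \leq \sqrt{2}\,d^{n/2}$, accounting for the prefactor in the bound.

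The main content is a per-block merging estimate: for any pair of starting unitaries $(U_0,V_0)$, over one block of $(n-1)$ consecutive steps one constructs a coupling of the two trajectories such that with probability at least
\[
p \;:=\; \frac{1}{e^n\,(d^2+1)^{n-2}}
\]
the trajectories coincide at the end of the block, while otherwise the distance is bounded trivially by $\sqrt{2}\,d^{n/2}$. Two ingredients enter. First, a Stirling estimate producing the factor $e^{-n}$: the probability that the $(n-1)$ randomly chosen positions $(i_1,\dots,i_{n-1})$ form a permutation of $\{1,\dots,n-1\}$, thus visiting every edge exactly once, is at least $(n-1)!/(n-1)^{n-1} \geq e^{-n}$. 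Second, a Radon--Nikodym comparison producing the factor $(d^2+1)^{-(n-2)}$: conditional on such a covering ordering, one compares the pushforward on $\mathbb{U}(d^n)$ of the Haar measure on the product of $(n-1)$ Haar-random two-qudit gates to the Haar measure on $\mathbb{U}(d^n)$, yielding a density lower bound on a set of positive measure, from which a maximal-coupling argument gives the desired merging probability.

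Oliveira's path-coupling theorem on compact Lie groups then assembles the pointwise coupling into one for arbitrary distributions via integration along geodesics in $\mathbb{U}(d^n)$. Iterating over $k$ blocks of $(n-1)$ steps, the probability of never merging is at most $(1-p)^k$, which for $n \geq 2$ is bounded by $(1-p)^{k/(n-1)}$; combining with the diameter bound yields the stated inequality.

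The hardest step is the Radon--Nikodym estimate: quantifying how the density of the composition map $\mathbb{U}(d^2)^{n-1} \to \mathbb{U}(d^n)$ along a covering sequence of edges compares to Haar measure on $\mathbb{U}(d^n)$. This requires a careful inductive Jacobian computation, tracking how each successive local gate enlarges the accessible portion of $\mathbb{U}(d^n)$ through a spanning-tree-like structure; the $(d^2+1)$ denominator per step should emerge from the ratio between the local $\mathbb{U}(d^2)$ parameter space and the target it must cover once composed with the preceding gates. The remaining ingredients---diameter bound, Stirling factor, path-coupling extension, iteration---are routine.
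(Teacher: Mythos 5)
Your high-level skeleton (Oliveira path coupling, diameter $\sqrt{2}d^{n/2}$, a per-block gain of $\frac{(n-1)!}{(n-1)^{n-1}}\geq e^{-n}$ from requiring the $(n-1)$ chosen positions to cover every edge, and a per-gate cost of $(d^2+1)^{-1}$) matches the paper. But the central mechanism you propose --- exact coalescence of the two trajectories within one block with probability $p$, justified by a Radon--Nikodym density comparison between the pushforward of a block of local gates and Haar measure on $\mathbb{U}(d^n)$ --- cannot work, and this is where the paper does something genuinely different.

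The obstruction is dimensional. The set of unitaries reachable from a fixed $U_0$ by composing $(n-1)$ two-qudit gates is the image of a smooth map from a manifold of dimension at most $(n-1)d^4$, which is strictly less than $\dim\mathbb{U}(d^n)=d^{2n}$ already for $n=3,\ d=2$ ($32<64$), and the gap widens rapidly. So the pushforward of Haar on $\mathbb{U}(d^2)^{n-1}$ along any covering sequence of edges is supported on a measure-zero submanifold of $\mathbb{U}(d^n)$; there is no density with respect to $\mu_{\text{Haar}}$, hence no Radon--Nikodym lower bound, and a maximal-coupling argument gives merging probability zero, not $p$. Equivalently: for two generic starting points $U_0,V_0$, the two reachable manifolds after one block are disjoint up to measure zero, so no coupling can make the trajectories coincide with positive probability in one block. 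The ``inductive Jacobian computation'' you flag as the hard step is not merely hard --- its target statement is false.

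What Oliveira's theorem actually requires, and what the paper supplies, is an \emph{infinitesimal contraction}, not a merging probability. One takes $R_1=R_2 e^{i\epsilon H}$ infinitesimally close, constructs a coupling over $(n-1)$ steps in which one chain applies $U_{i,i+1}$ and the other applies $V_{i,i+1}U_{i,i+1}$ (with $V$ chosen as a function of the previously-sampled gates), and shows
\[
\mathbb{E}\,\|X-Y\|_2^2 \;\le\; \left(1-\frac{(n-1)!}{(n-1)^{n-1}}\cdot\frac{1}{(d^2+1)^{n-2}}\right)\|X_0-Y_0\|_2^2 + O(\epsilon^3).
\]
The factor $(d^2+1)^{-1}$ per gate is \emph{not} a Jacobian ratio: it comes from the second-moment Haar average
\[
\mathbb{E}_{U}\,\mathrm{tr}\bigl((\mathrm{tr}_B(U H U^\dagger))^2\bigr)\;\ge\;\frac{d}{d^2+1}\,\mathrm{tr}(H^2),
\]
iterated $n-2$ times along the covering path, and the optimization over $V$ converts this into a strict shrinkage of the Frobenius distance (via $\|\cdot\|_1=\max_V|\mathrm{tr}(V\,\cdot)|$). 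Oliveira's theorem then upgrades this local contraction to a global one on $W_2$, and the diameter bound closes the argument. To repair your proof you would have to replace the coalescence/density picture with this second-order (Weingarten-type) contraction estimate; the remaining scaffolding you wrote can stay.
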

The proof of Lemma \ref{lconvergenceWasserstein} rests on Bubley and
Dyer's \textit{path coupling} method \cite{BD97} for bounding the mixing time of Markov chains. In particular, we use a version of path coupling for Markov chains on the unitary group recently obtained by Oliveira \cite{Oli07} \footnote{In fact the result of Ref. \cite{Oli07} is more general and extends the path coupling method to Markov chains on a Polish length space.}.

Finally, it remains to show how Lemma \ref{lconvergenceWasserstein} implies a lower bound on the spectral gap of $\Delta(H_{\lceil 2 \log(d)^{-1} \log(t) \rceil, t})$. This is the content of the following Lemma, proved in section \ref{proofLemma4},
\def\lemWasserGap{
For every $t,d\geq 1$ and every measure $\nu$ on $\bbU(d^n)$,
\begin{equation} g(\nu,t) 
\leq 2t W_{\text{Fro}}(\nu, \mu_{\text{Haar}})
\end{equation}
}
\begin{lem} \label{lem:WasserGap}
\lemWasserGap
\end{lem}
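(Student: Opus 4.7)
\medskip
\noindent\textbf{Proof proposal.} The plan is to exhibit, for every pair of Hilbert--Schmidt unit vectors $X,Y$ acting on $(\mathbb{C}^d)^{\otimes t}$, a real-valued $2t$-Lipschitz test function on $\mathbb{U}(d)$ whose expectation under $\nu-\mu_{\text{Haar}}$ equals (the real part of) the corresponding matrix element of $\Delta_{\nu,t}-\Delta_{\mu_{\text{Haar}},t}$. Wasserstein duality then yields the claim.

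First I would use \eqref{eq:TPE-def-22} to write
\begin{equation*}
g(\nu,t) \;=\; \sup_{\|X\|_2=\|Y\|_2=1} \; \mathrm{Re}\int_{\mathbb{U}(d)} f_{X,Y}(U)\,\bigl(\nu-\mu_{\text{Haar}}\bigr)({\rm d}U),
\qquad f_{X,Y}(U):=\tr\!\bigl(Y^{\dagger}\,U^{\otimes t} X (U^{\dagger})^{\otimes t}\bigr),
\end{equation*}
where absorbing a phase into $Y$ justifies restricting to the real part. Fix such $X,Y$ and write $g_{X,Y}:=\mathrm{Re}\,f_{X,Y}$. For any $U,V\in\mathbb{U}(d)$, a one-step telescoping gives
\begin{equation*}
U^{\otimes t} X (U^{\dagger})^{\otimes t} - V^{\otimes t} X (V^{\dagger})^{\otimes t}
= \bigl(U^{\otimes t}-V^{\otimes t}\bigr) X (U^{\dagger})^{\otimes t}
+ V^{\otimes t} X \bigl((U^{\dagger})^{\otimes t}-(V^{\dagger})^{\otimes t}\bigr),
\end{equation*}
so applying $|\tr(Y^{\dagger}Z)|\le\|Y\|_2\|Z\|_2$ together with $\|ABC\|_2\le\|A\|_\infty\|B\|_2\|C\|_\infty$ (and unitarity of $U,V$) yields
\begin{equation*}
|g_{X,Y}(U)-g_{X,Y}(V)| \;\le\; 2\,\|X\|_2\|Y\|_2\,\bigl\|U^{\otimes t}-V^{\otimes t}\bigr\|_\infty.
\end{equation*}

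Next I would bound the tensor-power difference by a second telescoping across the $t$ factors,
\begin{equation*}
U^{\otimes t}-V^{\otimes t}=\sum_{k=1}^{t} V^{\otimes(k-1)}\otimes(U-V)\otimes U^{\otimes(t-k)},
\end{equation*}
which gives $\|U^{\otimes t}-V^{\otimes t}\|_\infty \le t\,\|U-V\|_\infty \le t\,\|U-V\|_2$, since the operator norm is dominated by the Frobenius norm. Combining these estimates with $\|X\|_2=\|Y\|_2=1$ shows that $g_{X,Y}$ is $2t$-Lipschitz with respect to $\|\cdot\|_2$, i.e.\ $g_{X,Y}/(2t)$ is $1$-Lipschitz.

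Applying the definition \eqref{WassersteinDef} of the Wasserstein distance with the test function $g_{X,Y}/(2t)$ and taking the supremum over $X,Y$ produces
\begin{equation*}
g(\nu,t) \;=\; \sup_{X,Y}\int g_{X,Y}(U)\,\bigl(\nu-\mu_{\text{Haar}}\bigr)({\rm d}U)\;\le\; 2t\,W(\nu,\mu_{\text{Haar}}),
\end{equation*}
as required. The argument is essentially a calculation; the only point that needs mild care is the reduction from complex matrix elements to real Lipschitz test functions (handled by the phase choice for $Y$), and the routine but crucial observation that $\|U^{\otimes t}-V^{\otimes t}\|_\infty$ scales only linearly in $t$, which is what produces the $2t$ prefactor rather than anything exponential.
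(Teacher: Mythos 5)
Your proposal is correct and takes essentially the same route as the paper: extract a $2t$-Lipschitz test function from a variational characterization of $g(\nu,t)$ and conclude by Wasserstein duality, the $2t$ arising in both cases because $U$ (and its conjugate/adjoint) appears $t$ times each in the object being traced. The differences are cosmetic rather than structural: the paper works from the $\|\cdot\|_\infty$ form of $g$ with a single trace-class test matrix $X$ and invokes the hybrid inequality to bound $\|U^{\otimes t,t}-V^{\otimes t,t}\|_\infty$, whereas you work from the $2\to 2$ form with a pair of Hilbert--Schmidt unit vectors and two telescopings, and you also handle the reduction to a real-valued test function explicitly, a point the paper glosses over.
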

Part 1 of \thmref{main-TPE} now follows from the previous lemmas. 

\vspace{0.4 cm}

\begin{proof} \textbf{(Part 1 of Theorem \ref{thm:main-TPE})}
Lemmas \ref{lem:Ham-gap}, \ref{lconvergenceWasserstein} and
\ref{lem:WasserGap}, along with \eq{amplify-expander} give that
for every $m, t, k$,  
\begin{equation}
1 - \frac{\Delta(H_{m, t})}{m} \leq (2t \pi d^{m/2})^{\frac{1}{k(m-1)}} \left( 1 - \frac{1}{e^{m}(d^2+1)^{m-2}} \right)^{\frac{1}{(m-1)^{2}}} .
\end{equation}
Taking the $k \rightarrow \infty$ limit we find,
\begin{equation}
\Delta(H_{m, t}) \geq m^{-1} e^{-m}(d^{2} + 1)^{-m}. 
\end{equation}

Then by Lemma \ref{lem:Nachtergaele}  and the previous equation, with
$m =  \lceil 2.5 \log_d(4t)\rceil $, we get that for every $n$, 
\begin{eqnarray} \label{eq:boundt4logt}
\Delta(H_{n, t})  &\geq& 
\frac{ t^{-\frac{2.5}{\log(d)} - 2.5 \frac{\log(d^2+1)}{\log(d)} }}
{12500 \lceil\log_d(4t)\rceil^2e(d^2+1)}
\end{eqnarray}
Since $2.5(1 + \log(1+d^{-2})) \leq 3.1$ and $125e(1+d^{-2})\leq 42500$ for $d\geq 2$, 
our result now follows from \lemref{Ham-gap}.
\end{proof}

\subsection{Parallel local random circuits}\label{sec:PLR}

To analyze parallel local random circuits and prove part 2 of \thmref{main-TPE}, we use part 1 of \thmref{main-TPE} and a recent tool for analysing quantum many-body Hamiltonians: the detectability lemma of Aharonov \textit{et al} \cite{AALV11}.

Define 
\begin{equation}\label{eq:Mnt-def}
M_{n, t} := \frac{1}{2} P_{1, 2}  P_{3, 4} \cdots P_{n-1, n} + \frac{1}{2} P_{2, 3} \cdots
P_{n-2, n-1},
\end{equation}
(with $P_{i,i+1}$ defined in \prettyref{eq:Pii1-def}) and let $\lambda_{t!+1}(M_{n, t})$ denote its $t!+1$'st largest eigenvalue. (We focus on
this eigenvalue since the top $t!$ eigenvalues are always 1.) In analogy with \lemref{Ham-gap} it holds that
\begin{equation} \label{anaoguelemma1}
\left \Vert \int_{\mathbb{U}(d^{n})} U^{\otimes t, t} \nu_{\text{PLR}, d, n}({\rm d}U) -
  \int_{\mathbb{U}(d^{n})} U^{\otimes t, t} \mu_{\text{all}}({\rm d}U)
\right \Vert_{\infty} = \lambda_{t!+1}(M_{n, t})
\end{equation}

Let $P_{\text{odd}} := P_{1, 2}  P_{3, 4} \cdots P_{n-1, n}$, $P_{\text{even}}
:= P_{2, 3} \cdots  P_{n-2, n-1}$ and $P_{\text{all}}$ be the projector onto the
intersection of $P_{\text{odd}}$ and $P_{\text{even}}$.  We use the notation
$P_{\text{all}}$ because  $P_{\text{all}} = \int_{\mathbb{U}(d^{n})} U^{\otimes t, t} \mu_{\text{all}}({\rm d}U)$.
Then 
\begin{lem} \label{auxlemmatheo2}
\begin{equation}
\lambda_{t!+1}(M_{n, t}) \leq \frac{1}{2} + \frac{1}{2}\Vert  P_{\text{odd}} P_{\text{even}} - P_{\text{all}} \Vert_{\infty}.
\end{equation}
\end{lem}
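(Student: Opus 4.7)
The plan is to first identify the $+1$ eigenspace of $M_{n,t}$ with $\text{range}(P_c)$, then introduce two auxiliary projectors $A, B$ whose sum controls $M_{n,t}-P_c$ and whose product equals $P_{\text{odd}}P_{\text{even}} - P_c$, and finally prove the resulting spectral inequality via Jordan's lemma. Because $P_c$ projects onto $\text{range}(P_{\text{odd}}) \cap \text{range}(P_{\text{even}})$, we have $P_{\text{odd}} P_c = P_{\text{even}} P_c = P_c$, and hence $M_{n,t} P_c = P_c M_{n,t} = P_c$. Conversely, if $M_{n,t} v = v$ then $2\|v\|^2 = \langle v, P_{\text{odd}} v\rangle + \langle v, P_{\text{even}} v\rangle \leq 2\|v\|^2$, and equality forces $P_{\text{odd}} v = P_{\text{even}} v = v$, i.e.\ $v \in \text{range}(P_c)$. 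Consequently $\text{range}(P_c)$ and its orthogonal complement are both invariant under $M_{n,t}$, and
\begin{equation*}
\lambda_2(M_{n,t}) \;=\; \|(I-P_c)\,M_{n,t}\,(I-P_c)\|_\infty \;=\; \|M_{n,t} - P_c\|_\infty.
\end{equation*}

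I then set $A := P_{\text{odd}} - P_c$ and $B := P_{\text{even}} - P_c$. Because $P_c \leq P_{\text{odd}}$ and $P_c \leq P_{\text{even}}$, both $A$ and $B$ are orthogonal projectors, and a direct expansion using $P_c P_{\text{even}} = P_c = P_{\text{odd}} P_c$ gives the key identity
\begin{equation*}
AB \;=\; P_{\text{odd}} P_{\text{even}} - P_c.
\end{equation*}
Moreover $\text{range}(A) \cap \text{range}(B) \subseteq \text{range}(P_{\text{odd}}) \cap \text{range}(P_{\text{even}}) \cap \text{range}(I-P_c) = \{0\}$. Since $M_{n,t} - P_c = \tfrac12(A+B)$, the lemma reduces to the following general spectral inequality for any pair of orthogonal projectors $A, B$ with trivial common range:
\begin{equation*}
\|A + B\|_\infty \;\leq\; 1 + \|AB\|_\infty.
\end{equation*}

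To prove this inequality I would invoke Jordan's lemma, which simultaneously block-diagonalizes $A$ and $B$ into invariant subspaces of dimension one and two. The trivial-common-range hypothesis excludes $1$-dimensional blocks on which both projectors act as the identity; on the remaining $1$-dim configurations $A+B$ has eigenvalues in $\{0,1\}$ and $AB = 0$. On each $2$-dim block one can align bases so that $A = \mathrm{diag}(1,0)$ and $B$ is the rank-one projector onto a line at angle $\theta \in (0,\pi/2]$; then $\mathrm{tr}(A+B) = 1$ and $\det(A+B) = \sin^2\theta$, so the block eigenvalues of $A+B$ are $1 \pm \cos\theta$ while the nonzero singular value of $AB$ in the block is $\cos\theta$. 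Taking the supremum over blocks proves the inequality, with equality whenever at least one nontrivial $2$-dim block is present. The only conceptually subtle steps are the structural identification of $\text{range}(P_c)$ as the entire $+1$ eigenspace of $M_{n,t}$ and the cancellations that yield $AB = P_{\text{odd}}P_{\text{even}} - P_c$; both hinge on $P_c \leq P_{\text{odd}}, P_{\text{even}}$. Once these are in hand, the Jordan-lemma computation is routine.
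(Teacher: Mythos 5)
Your proof is correct and follows essentially the same route as the paper: you split $M_{n,t} - P_c$ as $\tfrac12(A+B)$ with $A = P_{\text{odd}} - P_c$, $B = P_{\text{even}} - P_c$, and reduce to the operator inequality $\|A+B\|_\infty \leq 1 + \|AB\|_\infty$, which the paper imports as Proposition 2.4 of [DFSS05] while you re-derive it via Jordan's lemma; you also supply the (correct, but taken for granted in the paper) justification that $\lambda_2(M_{n,t}) = \|M_{n,t} - P_c\|_\infty$. One small slip: on a $2\times 2$ Jordan block $\operatorname{tr}(A+B) = 2$, not $1$; together with $\det(A+B) = \sin^2\theta$ this still gives the eigenvalues $1 \pm \cos\theta$ you wrote down, so the conclusion stands.
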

\begin{proof}
 We make use of the following result of \cite{DFSS05} (Proposition 2.4): Given two projectors $Q$ and $R$, $\Vert Q + R \Vert \leq 1 + \Vert Q R \Vert$. Let $P_{\text{all}}$ be the projector onto the intersection of $P_{\text{odd}}$ and $P_{\text{even}}$. Applying the previous inequality with $Q = P_{\text{odd}} - P_{\text{all}}$ and $R = P_{\text{even}} - P_{\text{all}}$,
\begin{equation}
\Vert P_{\text{odd}} + P_{\text{even}} - 2P_{\text{all}} \Vert_{\infty} \leq 1 + \Vert  P_{\text{odd}} P_{\text{even}} - P_{\text{all}} \Vert_{\infty},
\end{equation}
and so
\begin{equation}
\lambda_{t!+1}(M_{n, t}) = \Vert M_{n, t} - P_{\text{all}} \Vert_{\infty} \leq \frac{1}{2} + \frac{1}{2}\Vert  P_{\text{odd}} P_{\text{even}} - P_{\text{all}} \Vert_{\infty}.
\end{equation}
\end{proof}

Now using the detectability lemma \cite{AALV11} we can show:
\begin{lem} \label{lem:detectabilityconsequence}
\begin{equation}\label{eq:detect-bound}
\lambda_{t!+1} \left(  M_{n, t} \right) \leq \frac{1}{2} + \frac{1}{2}\left(1 + \frac{\Delta(H_{n, t})}{2} \right)^{- 1/3}.
\end{equation}
\end{lem}
\begin{proof}
Since $H_{n, t}$ is a frustration-free Hamiltonian with projective local terms we can apply the detectability lemma, which is the following bound
\begin{equation}
\Vert  P_{\text{odd}} P_{\text{even}} - P_{\text{all}} \Vert_{\infty} \leq \left(1 + \frac{\Delta(H_{n, t})}{2} \right)^{- 1/3}.
\end{equation}
The statement of the lemma thus follows from Lemma \ref{auxlemmatheo2}.
\end{proof}

To evaluate the bound in \prettyref{eq:detect-bound}, note that the RHS is
$\leq 1 - \frac{1}{12}\Delta(H_{n,t}) + O(\Delta(H_{n,t})^2)$.   From part 1 of
\thmref{main-TPE} we know that $\Delta(H_{n,t}) \geq \delta$ for some $\delta\leq
1/42500$.  Thus the gap in the PLR model is at least $\frac{n}{13}$ times the bound in the
LR model.  This completes the proof of part 2 of \thmref{main-TPE}.

\section{Proof of Lemmas for \thmref{main-TPE}}\label{sec:proof-details}

\subsection{Proof of Lemma \ref{lem:Ham-gap}} \label{proofLemma1}
We start proving Lemma \ref{lem:Ham-gap}, which is restated below 
for the convenience of the reader.

\begin{replem}{lem:Ham-gap}
\lemHamGap
\end{replem}

\begin{proof} 
The lemma follows from 
\begin{eqnarray}
g(\nu_{\text{LR}, n, d}, t) &=& \lambda_{t!+1} \left(  \int_{\mathbb{U}(d^{n})} U^{\otimes t, t} \nu_{\text{LR}, n, d}({\rm d}U)  \right) \nonumber \\ &=& \lambda_{t!+1} \left(  \frac{1}{n} \sum_{i} P_{i, i+1}  \right) \nonumber \\ &=& \lambda_{t!+1} \left( I - \frac{H_{n, t}}{n}  \right) = 1 - \frac{\Delta(H_{n, t})}{n},
 \end{eqnarray}
with $\lambda_{t!+1}(X)$ the $t!+1^{\text{st}}$ largest eigenvalue of $X$.
\end{proof}

\subsection{Properties of $H_{n, t}$} \label{propertiesH}

We now prove \lemref{propertiesHnt}.
\begin{replem}{lem:propertiesHnt}
\lemPropertiesHnt{second}
\end{replem}

\begin{proof}
\textit{Item 1.}  Since each $P_{i,i+1} \leq I$, we have that the smallest
eigenvalue of $H$ is $\geq 0$.  Let us now determine the ground
space.
\ba H_{n,t}\ket\varphi =0
& \Leftrightarrow \frac{1}{n}\sum_{i=1}^n P_{i,i+1}\ket\varphi =
\ket\varphi \\
& \Leftrightarrow \forall i\in [n],\forall U\in U(d^2), 
(I_d^{\ot i-1} \ot U_{i, i+1} \ot I_d^{\ot n-i-1})^{\ot t,t}  \ket\varphi = \ket\varphi
\label{eq:bc-triangle} \\ 
& \Leftrightarrow \forall U\in \bbU(d^n), U^{\ot t,t}
\ket\varphi = \ket\varphi
\label{eq:bc-universal}
\ea
Here 
Eq. \eq{bc-universal} is because nearest-neighbor unitaries generate the
set of all unitaries~\cite{Bar+95}.  To justify \eq{bc-triangle},
observe that 
\[ \text{Re} \bra\varphi \E_{i\in [n]} P_{i,i+1} \ket\varphi = 
\text{Re} \E_{i\sim [n]} \E_{U_{i,i+1}} \bra\varphi U_{i,i+1}^{t, t} \ket\varphi
\leq 1 \]
with equality if and only if $U_{i,i+1}^{\otimes t, t} \ket\varphi = \ket\varphi$ for
all but a measure-zero subset of the $(i, U_{i,i+1})$ pairs. And by continuity, 
we can assume this subset is empty.  

We can without loss of generality write $\ket{\varphi} = (I_d^{\ot nt} \ot
M)\ket{\Phi_{d^{nt}}}$ for some matrix $M$.  In terms of $M$,
Eq. \eq{bc-universal} implies that $\ket\varphi$ 
is a ground state of $H_{n,t}$ if and only if $M$ commutes with
$U^{\ot t}$ for all $U\in U(d^n)$.  It is well-known (see \cite{GW98},
or \cite{matthias} for a quantum information perspective) that the set of
such $M$ is precisely given by the span of the $V_d(\pi)$ for $\pi\in
\cS_t$. \footnote{Note that the form of the eigenspace of $H_{n, t}$
  follows directly from the fact that random circuits drawn from a
  universal set of gates converge to the Haar measure, a fact that was
  first proven at least as early as in \cite{AK62}. A more direct proof of convergence can
  be also obtained by applying general sufficient conditions given by
  Theorem 3.3 of \cite{Szarek06} for Markov chains to converge to a
  unique invariant measure \cite{TS_priv}.} 

\textit{Item 2.} Eq. \eq{columnsum-second} follows from 
\begin{eqnarray} \label{lemma743744}
\sum_{\pi \in \cS_t} |\braket{\psi_{\sigma, d}}{\psi_{\pi, d}}|^{n} &=&  \frac{1}{d^{tn}}  \sum_{\pi \in \cS_t}\tr \left( V_{d^{n}}(\pi) V_{d^{n}}(\sigma)^{T} \right) \nonumber \\
&=&  \frac{1}{d^{tn}} \sum_{\pi \in \cS_t}  \tr \left( V_{d^{n}}(\pi \sigma^{-1}) \right) \nonumber \\
&=&  \frac{1}{d^{tn}}  \sum_{\pi \in \cS_t}  \tr \left( V_{d^{n}}(\pi) \right) \nonumber \\
&=& \frac{t!}{d^{tn}} \tr \left( P_{\sym, t, d^{n}}   \right),
\end{eqnarray}
with $P_{\sym, t, d^{n}}$ the projector onto the symmetric subspace of $\left( \mathbb{C}^{d^{n}} \right)^{\otimes t}$. The first equality follows from the definition of $\ket{\psi_{\pi, d}}$ and the relation $V_{d^{n}}(\pi) = \left( V_{d}(\pi)  \right)^{ \otimes n}$, the second and third from the fact that $\cS_t$ is a group and $V_{d^{n}}(\pi)$ a representation of $\pi$, and the last from the relation
\be
P_{\sym, t, d^{n}} = \frac{1}{t!} \sum_{\pi \in \cS_t}V_{d^{n}}(\pi) .
\ee
Using $\tr(P_{\sym, t, d^{n}}) = (d^{n} + t - 1) ... (d^{n} + 1)d^{n} / t!$ , Eq. (\ref{lemma743744}), and our assumption that $t^2\leq d^n$, we obtain
\begin{equation}
\sum_{\pi \in \cS_t} |\braket{\psi_{\sigma, d}}{\psi_{\pi, d}}|^{n} = \frac{(d^{n} + t - 1) ... (d^{n} + 1)d^{n} }{d^{tn}} \leq 1 + \frac{t^{2}}{d^{n}}.
\label{eq:overlap-sum}
\end{equation}

To prove Eq. \eq{normboundperm-second}, let $B := \sum_{\pi \in \cS_t} \ket{\pi}\bra{\psi_{\pi, d}}^{\otimes n}$, with $ \{ \ket{\pi} \}_{\pi \in \cS_t}$ an orthornomal set of vectors. We have 
\begin{equation}
\left \Vert BB^{\cal y} - \sum_{\pi \in \cS_t} \ket{\pi}\bra{\pi} \right \Vert_{\infty}
\leq \max_\sigma \sum_{\pi \neq \sigma}  |\braket{\psi_{\sigma, d}}{\psi_{\pi, d}}|^{n}
\leq  \frac{t^{2}}{d^{n}},
\end{equation}
where we used Eq. \eq{overlap-sum} and the fact that for $M$ a Hermitian matrix,
$\|M\|_\infty \leq \max_i \sum_j |M_{ij}|$. (The proof of this last claim is essentially
Perron-Frobenius: observe that if $M\ket\varphi = \lambda\ket\varphi$ and $i=\arg\max_i
|\braket{i}{\varphi}|$ then $|\lambda| |\braket{i}{\varphi}| = |\sum_j M_{i,j}
\braket{j}{\varphi}| \leq \sum_j |M_{i,j}|\cdot |\braket{i}{\varphi}|$.)
Since $BB^{\cal y}$ has the same eigenvalues as $B^{\cal y}B$ and 
\be
A_{n, t} := \sum_{\pi \in \cS_t}  \left( \ket{\psi_{\pi, d}}\bra{\psi_{\pi, d}} \right)^{\otimes n} = B^{\cal y}B,
\ee
we find $(1 - \frac{t^{2}}{d^{n}})G_{n, t} \leq A_{n, t} \leq (1 + \frac{t^{2}}{d^{n}}) G_{n, t}$, where we used that $G_{n, t}$ is the projector onto the support of $A_{n, t}$. Thus
\begin{eqnarray} 
\left \Vert  A_{n, t} - G_{n, t} \right \Vert_{\infty} \leq \frac{t^{2}}{d^{n}},
\end{eqnarray}
which is Eq. \eq{normboundperm-second}.
\end{proof}
The facts used here about $P_{\sym,t,d^n}$ are proved and discussed in
a quantum-information setting in \cite{Har-sym}.




\subsection{Proof of Lemma \ref{lem:Nachtergaele}} \label{sec:proof-shorten}


We start defining the necessary notation to state the result of \cite{Nac96} which we employ. We consider a chain 
of systems with local finite dimensional Hilbert space $\hcal$ labeled by natural numbers (excluding $0$). 
We consider a family of Hamiltonians 
\be
H_{[m,n]}=\sum_{i=m}^{n-1} h_{i, i+1}
\ee
acting on $\hcal^{\otimes (n - m + 1)}$, where $h_{i, i+1}$ are the nearest neighbor interaction terms, which are assumed to be projectors. In words, $H_{[m,n]}$ 
includes all the interactions terms for which both systems 
belong to the interval $[m,n]$. We also let the chain be translationally invariant, 
i.e. $h_{i, i+1}$ are the same for all $i$. 
We assume further that the minimum eigenvalue of $H_{[m, n]}$ is zero for all $m, n$ and denote by $\gcal_{[m,n]}$ the ground space of $H_{[m,n]}$, namely
\begin{equation}
\gcal_{[m,n]}=\{\ket{\psi} \in \hcal^{\otimes (n - m+1)} \hspace{0.2 cm}:  \hspace{0.2 cm} H_{[m,n]}\ket{\psi}=0\}.
\end{equation}
Finally let $G_{[m, n]}$ be the projector onto $\gcal_{[m,n]}$.

\begin{lem}[Nachtergaele, Theorem 3 of \cite{Nac96}] 
\label{prop:gap-N}
Suppose there exist positive integers $l$ and $n_l$, and a real number $\ep_l \leq 1/\sqrt{l}$ such that for all $n_l \leq m \leq n$,
\be
\Vert I_{A_1}\otimes G_{A_2B} \left( G_{A_1A_2} \otimes I_{B} - G_{A_1A_2B}  \right)   \Vert_{\infty} \leq \ep_l
\ee
with $A_1:=[1,m- l - 1]$, $A_2:=[m-l,m - 1]$, $B:=m$. 
Then
\be
\Delta(H_{[1,n]}) \geq \Delta(H_{[1,l]})  \frac{(1-\ep_l \sqrt{l})^{2}}{l-1}.
\ee 
\end{lem}
\vspace{0.2 cm}
We can now prove \lemref{Nachtergaele}, restated below:

\begin{replem}{lem:Nachtergaele}
\lemNachtergaele{second}
\end{replem}

\begin{proof}
We apply Lemma \ref{prop:gap-N} with $n_l = 2l$ and $\epsilon_l = 1/(2\sqrt{l})$. Then we must show that for all $m$ in the range $2l \leq m \leq n$, 
\be
\Vert I_{A_1}\otimes G_{A_2B} \left( G_{A_1A_2} \otimes I_{B} - G_{A_1A_2B}  \right)   \Vert_{\infty} \leq \frac{1}{2 \sqrt{l}},
\ee
with $A_1 = [1, m - l - 1]$, $A_2 = [m - l, m - 1]$ and $B = m$. Let
\be
X_{k} := \sum_{\pi \in \cS_t} (\ket{\psi_{\pi, d}}\bra{\psi_{\pi, d}})^{\otimes k}.
\ee
By Eq. \eq{normboundperm-second} of Lemma \ref{lem:propertiesHnt} we have 
\be
\Vert G_{[1, ..., k]} - X_{k} \Vert_{\infty} \leq \frac{t^{2}}{d^{k}}. 
\ee
Then
\begin{eqnarray}
M &:=& \Vert I_{A_1}\otimes G_{A_2B} \left( G_{A_1A_2} \otimes I_{B} -
       G_{A_1A_2B}  \right)   \Vert_{\infty} \nonumber \\  
&\leq& \left \Vert  I_{A_1} \otimes  X_{l+1} \left( X_{m-1}  \otimes
       I_{B} - X_{m} \right) \right \Vert_{\infty}  +
       \frac{5t^{2}}{d^{l}}  \nonumber \\  
&=& \left \Vert \sum_{\pi \in \cS_t} (\ket{\psi_{\pi,
    d}}\bra{\psi_{\pi, d}})^{\otimes (m - l - 1)} \otimes Y_{\pi}
    \right \Vert_{\infty} + \frac{5t^{2}}{d^{l}}. 
\end{eqnarray}
with
\begin{eqnarray}
Y_{\pi} &:=&  \sum_{\sigma \neq \pi}  \left(\ket{\psi_{\sigma, d}} \bra{\psi_{\sigma, d}})^{\otimes l} (\ket{\psi_{\pi, d}}\bra{\psi_{\pi, d}})^{\otimes l} \right) \otimes \left( \ket{\psi_{\sigma, d}} \bra{\psi_{\sigma, d}}(I_B - \ket{\psi_{\pi, d}} \bra{\psi_{\pi, d}})  \right) \nonumber \\
&=& \sum_{\sigma \neq \pi} (\braket{\psi_{\sigma, d}}{\psi_{\pi, d}})^{l}(\ket{\psi_{\pi, d}}\bra{\psi_{\sigma, d}})^{\otimes l} \otimes  \left( \ket{\psi_{\sigma, d}} \bra{\psi_{\sigma, d}}(I_B - \ket{\psi_{\pi, d}} \bra{\psi_{\pi, d}})  \right).
\end{eqnarray}
In the remainder of the proof we show
\be \label{blockdiagonalform}
\left \Vert \sum_{\pi \in \cS_t} (\ket{\psi_{\pi, d}}\bra{\psi_{\pi, d}})^{\otimes (m - l - 1)} \otimes Y_{\pi} \right \Vert_{\infty} \leq \left( 1 + \frac{t^{2}}{d^{m - l - 1}} \right)\max_{\pi} \Vert Y_{\pi} \Vert_{\infty}.
\ee
Then,
\begin{eqnarray}
M &\leq& \left( 1 + \frac{t^{2}}{d^{m - l - 1}} \right)\max_{\pi}
         \Vert Y_{\pi} \Vert_{\infty}  + \frac{5t^{2}}{d^{l}}
         \nonumber \\ 
&\leq& \left( 1 + \frac{t^{2}}{d^{m - l - 1}} \right) \max_{\pi}
       \sum_{\sigma \neq \pi} |\braket{\psi_{\sigma, d}}{\psi_{\pi,
       d}}|^{2l} + \frac{5t^{2}}{d^{l}} \nonumber \\ 
&\leq& \left(1 + \frac{t^{2}}{d^{m - l - 1}}  \right)
       \frac{t^{2}}{d^{2l}} + \frac{5t^{2}}{d^{l}} \leq
       \frac{6t^{2}}{d^{l}}. 
\end{eqnarray}
where the before-last inequality follows from \eq{columnsum-second} of \lemref{propertiesHnt} (which can be applied as $t^2 < d^l$). Then, choosing $l \geq \lceil  2.5 \log_d(4t) \rceil$ we find $M \leq (2\sqrt{l})^{-1}$, and we get \eq{shorten-second} from Lemma \ref{prop:gap-N}.

Let us turn to prove Eq. (\ref{blockdiagonalform}). Consider the linear map
\begin{equation}
B_k := \sum_{\pi \in \cS_t} \ket{\psi_{\pi, d}}^{\otimes k}\bra{\pi},
\end{equation}
with $\{ \ket{\pi} \}_{\pi \in \cS_t}$ an orthornomal set of vectors. Using Eq. \eq{normboundperm-second} of \lemref{propertiesHnt} we have
\begin{eqnarray} \label{boundBBdagger}
\Vert B_kB_k^{\cal y} \Vert_{\infty} = \left \Vert \sum_{\pi \in \cS_t}  \left( \ket{\psi_{\pi, d}}\bra{\psi_{\pi, d}} \right)^{\otimes k}  \right \Vert_{\infty} \leq 1 + \frac{t^{2}}{d^{k}}
\end{eqnarray}
Then
\begin{eqnarray}
 \left \Vert \sum_{\pi \in \cS_t} (\ket{\psi_{\pi, d}}\bra{\psi_{\pi, d}})^{\otimes (m - l - 1)} \otimes Y_{\pi} \right \Vert_{\infty} 
&=& \left \Vert \sum_{\pi \in \cS_t} (B_{(m - l - 1)} \ket{\pi}\bra{\pi}B_{(m - l - 1)}^{\cal y}) \otimes Y_{\pi} \right \Vert_{\infty} \nonumber \\
&\leq& \left \Vert  B_{(m - l - 1)}B_{(m - l - 1)}^{\cal y}  \right \Vert_{\infty} \left \Vert \sum_{\pi \in \cS_t} \ket{\pi}\bra{\pi}\otimes Y_{\pi} \right \Vert_{\infty} \nonumber \\
&=& \left \Vert  B_{(m - l - 1)}B_{(m - l - 1)}^{\cal y}  \right\Vert_{\infty}  \max_{\pi} \Vert Y_{\pi} \Vert_{\infty},
\end{eqnarray}
and Eq. (\ref{blockdiagonalform}) follows from the bound given by Eq. (\ref{boundBBdagger}).
\end{proof}

\subsection{Proof of Lemma \ref{lconvergenceWasserstein}} \label{proofLemma3} 
For two probability distributions $\nu_1, \nu_2$, we say $(X, Y)$ is a coupling for
$\nu_1, \nu_2$ if $X$ and $Y$ are distributed according to $\nu_1$ and $\nu_2$,
respectively.  
Define the $L^{p}$ Wasserstein distance between two probability distributions $\nu_1$ and
$\nu_2$ with respect to a distance measure $d$ as follows
\begin{equation}
  W_{d,p}(\nu_1, \nu_2) :=
  \inf \left \{  \E[d(X, Y)^{p}]^{1/p} \hspace{0.1 cm} : \hspace{0.1 cm} (X, Y)
    \hspace{0.1 cm}
    \text{is a pair of random variables coupling} \hspace{0.1 cm} (\nu_1, \nu_2)   \right \}.
\end{equation}
We note that
\begin{equation} \label{relationWs}
W_{\text{Fro}}(\nu_1, \nu_2) \leq  W_{\text{Rie},1}(\nu_1, \nu_2) \leq W_{\text{Rie},2}(\nu_1, \nu_2).
\end{equation}

We now state Oliveira's result (in fact a particular case of Theorem 3 of \cite{Oli07}), which offers a version of the path coupling method for Markov chains on the unitary group. It shows that a local contraction, in the $L^{2}$ Wasserstein distance, can be boosted into a global contraction.

\begin{lem}[Oliveira, Theorem 3 of \cite{Oli07}] \label{olilemma}
Let $\nu$ be a probability measure on $\mathbb{U}(d)$ such that
\begin{equation}
\limsup_{\varepsilon \rightarrow 0} \sup_{X,Y \in \mathbb{U}(d)} \left\{
  \frac{W_{\text{Rie},2}(\nu \ast \delta_{X}, \nu \ast \delta_{Y})}{d_{\text{Rie}}(X,Y )}
  \hspace{0.1 cm} : \hspace{0.1 cm}
  d_{\text{Rie}}(X,Y) \leq \varepsilon \right \} \leq \eta,
\end{equation}
with $\delta_{U}$ a mass-point distribution at $U \in \mathbb{U}(d)$. Then for all probability measures $\nu_1, \nu_2$ on $\mathbb{U}(d)$,
\begin{equation} 
W_{\text{Rie},2}(\nu \ast \nu_1, \nu \ast \nu_2) \leq \eta W_{\text{Rie},2}(\nu_1, \nu_2) .
\end{equation}
\end{lem}

In the rest of this section we apply Lemma \ref{olilemma} to prove Lemma \ref{lconvergenceWasserstein}.
Before we turn to the proof of Lemma \ref{lconvergenceWasserstein} in earnest, we prove the particular case of the random walk on three sites. Then in the sequence we will built up on it to get the general case.

\begin{lem}  \label{lem:couplingfor3}
For every integer $k > 0$,
\begin{equation}
  W_{\text{Rie},2}(\nu_{\text{LR}, 3, d})^{* 2k}, \mu_{\text{Haar}})
  \leq \left(1 - \frac{1}{2d^2+2} \right)^{k/2}\sqrt{2}d^{3/2}.
\end{equation}
\end{lem}

\begin{proof}
We will show
\be \label{maincondtislemma}
\limsup_{\varepsilon \rightarrow 0} \sup_{X,Y \in \mathbb{U}(d^{3})} \left\{
  \frac{W_{\text{Rie},2}((\nu_{\text{LR}, 3, d})^{* 2} \ast \delta_{X}, (\nu_{\text{LR},
      3, d})^{* 2} \ast \delta_{Y})} {d_{\text{Rie}}(X,Y )} \hspace{0.1 cm} : \hspace{0.1 cm}
  d_{\text{Rie}}(X,Y) \leq \varepsilon \right \}  \leq \eta := \sqrt{1 - \frac{1}{2d^{2} + 2}}.
\ee
Then applying Lemma \ref{olilemma} repeatedly we find
\begin{eqnarray} \label{aux8655}
  W_{\text{Rie},2}((\nu_{\text{LR}, 3, d})^{* 2k}, \mu_{\text{Haar}})
  &=& W_{\text{Rie},2}((\nu_{\text{LR}, 3, d})^{* 2k}\ast \delta_{\I}, (\nu_{\text{LR}, 3,
      d})^{* 2k} \ast \mu_{\text{Haar}}) \nonumber \\
  &\leq& \eta^{k}W_{\text{Rie},2}(\delta_{\I}, \mu_{\text{Haar}}) \nonumber \\
  &\leq& \eta^{k} \pi \d^{3/2},
\end{eqnarray}
where in the last inequality we used that
$W_{\text{Rie},2}(\delta_{I}, \mu_{\text{Haar}})
\leq \max_{X,Y} d_{\text{Rie}}(X,Y ) \leq \pi d^{3/2}$.  For this last
inequality we use the fact that the optimal path is of the form $X e^{iHt}$ for
$\|H\|_\infty\leq \pi$~\cite{Andruchow} (see also Section 4.4 of \cite{Oli07} for a
similar argument.)
The statement of the lemma thus follows from Eqs. (\ref{aux8655}) and  (\ref{relationWs}).

Let us turn to prove Eq. (\ref{maincondtislemma}). Let $X$ and $Y$ be two unitaries acting
on three $d$-dimensional systems satisfying $d_{\text{Rie}}(X,Y)\leq \eps$.
Consider two steps of the walk. Then we have four possibilities, each occuring with probability $\frac14$,
\ben
&&X \to  \left \{ \tilde U_{12} U_{12}X, \tilde U_{23} U_{12}X, \tilde U_{12} U_{23}X, \tilde U_{23} U_{23}X \right \},
\label{eq:trivial1}
\een
for independent Haar distributed unitaries $U_{12}, U_{23}, \tilde U_{12}, \tilde U_{23}$, and likewise for $Y$. Here the indices of the unitaries label in which subsystems they act non-trivially. 

At the moment we have a trivial coupling, i.e. $X$ and $Y$ are subjected to the same transformation. Now we introduce a nontrivial coupling, which we show on average brings two infinitesimally close unitaries closer to each other. We consider the transformation:
\ben
&&X \to  X' \in \left \{ \tilde U_{12} U_{12}X, \tilde U_{23} V_{23} U_{12}X, \tilde U_{12} V_{12} U_{23}X, \tilde U_{23} U_{23}X \right \} 
\label{eq:trivial2}
\een
where the unitary $V_{23}$ can depend on $U_{12}$ and $V_{12}$ can depend on $U_{23}$, and of course both can depend on $X$ and $Y$. The unitary $Y$, in turn, undergoes the same transformation as before, namely
\ben
&&Y \to  Y' \in \left \{ \tilde U_{12} U_{12}Y, \tilde U_{23} U_{12}Y, \tilde U_{12} U_{23}Y, \tilde U_{23} U_{23}Y \right \} 
\label{eq:trivial3}
\een

Let us check that the transformations above indeed define a valid coupling. In order to do so the induced distribution on the two unitaries $X'$ and $Y'$  must be the same as in the case of a trivial coupling. This is clearly true for $Y'$. To see that it is also true for $X'$, we observe that for any fixed $V_{23}$, $\tilde U_{23} V_{23}$ is Haar distributed for a Haar distributed $\tilde U_{23}$ (and likewise for $\tilde U_{12} V_{12}$). 
 
In the sequel we show
\be
\E \left( d_{\text{Rie}}(X',Y')^2 \right) \leq \eta^{2} \E \left( d_{\text{Rie}}(X,Y)^2
\right) + O(\eps^3),
\label{eq:shrink}
\ee
where $X'$ and $Y'$ are random variables related by the coupling and $X$ and $Y$ satisfy
$d_{\text{Rie}}(X,Y) \leq \eps$. We first estimate both sides with respect to the Frobenius
distance. The LHS of \prettyref{eq:shrink} then corresponds to
\ben
&&\E \left(\|X'-Y'\|_2^2 \right) =
\frac14 \left(\E \left( \|\tilde U_{12} U_{12} X - \tilde U_{12} U_{12} Y\|_2^2 \right) + 
\E \left( \|\tilde U_{23} V_{23} U_{12} X- \tilde U_{23} U_{12} Y\|_2^2 \right) + \right. \nonumber \\
&& \left. \E \left(\|\tilde U_{12} V_{12} U_{23} X- \tilde U_{12} U_{23} Y\|_2^2 \right) +  
\E \left( \|\tilde U_{23}  U_{23} X- \tilde U_{23} U_{23} Y\|_2^2 \right) \right),
\een
with the expectation taken over Haar distributed $\tilde U_{12},
U_{12}, \tilde U_{23}, U_{23}$. Using the unitary invariance of the
2-norm we can rewrite this as  
\be
\E \left(\|X'-Y'\|_2^2 \right) = 
\frac14 \left(2\|X-Y\|_2^2 + \E \left( \|V_{23} U_{12} X- U_{12} Y\|_2^2 \right) + \E \left( \|V_{12} U_{23} X- U_{23} Y\|_2^2 \right) \right).
\label{eq:norm-walk1}
\ee
Since $V_{12}$ and $V_{23}$ can depend in an arbitrary way on $U_{23}$ and $U_{12}$, respectively, we can take the minimum over $V_{12}$ and $V_{23}$ to get
\begin{eqnarray} \label{eq58}
\E \left(\|X'-Y'\|_2^2 \right)= 
\frac14 \left( 2\|X-Y\|_2^2 \right. &+&  \E \left(  \min_{V_{23}}\|V_{23} U_{12} X- U_{12} Y\|_2^2 \right)   \nonumber \\ &+& \left. \E \left(  \min_{V_{12}} \|V_{12} U_{23} X- U_{23} Y\|_2^2 \right) \right).
\label{eq:norm-walk2}
\end{eqnarray}

For any two unitaries $U_1, U_2$ we have 
\be
\Vert U_1-U_2 \Vert^2_2=2\left(\tr(\I) - \text{Re}\left(\tr(U_1U_2^\dagger) \right) \right).
\label{eq:norm-scalar}
\ee
Since $X$ and $Y$ are infinitesimally close we can write
\be
R := XY^\dagger = e^{i\ep H}=\I + i \ep H -\frac{\ep^2}{2} H^2  + O(\ep^3)
\label{eq:W-ep}
\ee
for a Hermitian matrix $H$ with $\Vert H \Vert_{2} \leq 1$. Then applying \eqref{eq:norm-scalar} we get 
\be
\|X-Y\|_2^2 = \ep^2 \tr(H^2) + O(\ep^3)
\ee

Let us now consider the term 
$\E \left( \min_{V_{12}}\|V_{12} U_{23} X- U_{23} Y\|_2^2 \right)$ (for the other term the calculation gives the same result).
We have 
\begin{eqnarray} \label{tracenormappears}
\E \left(\min_{V_{12}}\|V_{12} U_{23} X- U_{23} Y\|_2^2\right) &=& 
2\left(\tr(\I) -\E \left( \max_{V_{12}} |\tr (V_{12} U_{23} X Y^\dagger U_{23}^\dagger)| \right) \right) \nonumber \\
&=& 2\left( \tr(\I) - \E \|\tr_3(U_{23} R  U_{23}^\dagger)\|_1 \right),
\label{eq:average-single path-3}
\end{eqnarray}
with $R=XY^\dagger$. The last equality follows from the following variational
characterizarion of the trace norm: $\Vert Z \Vert_1 = \max_{U \in \mathbb{U}} |\tr(UZ)|$ \cite{bhatia2013matrix}.

From Eq. \eqref{eq:W-ep}  we get 
\be
\tr_3(U_{23} R  U_{23}^\dagger)= d\, \I_{12} + i \ep \tr_3 (U_{23} H U_{23}^\dagger)
- \frac{\ep^2}{2} \tr_3 (U_{23} H^2 U_{23}^\dagger) + O(\ep^3).
\ee
An easy calculation shows that for any two Hermitian operators $A,B$ we have 
\be
\left\|\I + i\ep A - \frac{\ep^2}{2} B\right\|_1 = \tr(\I) +\frac{\ep^2}{2}(\tr A^2 - \tr B) + O(\ep^3).
\ee
Hence we obtain 
\be
\left\|\tr_3(U_{23} R  U_{23}^\dagger)\right\|_1= 
\tr(\I) +\frac{\ep^2}{2}\frac1d \tr\bigl((\tr_3 (U_{23} H U_{23}^\dagger))^2\bigr) 
- \frac{\ep^2}{2}\tr(H^2) + O(\ep^3),
\ee
so that by Eq. (\ref{tracenormappears}),
\be
\E \inf_{V_{12}} \|V_{12} U_{23} X- U_{23} Y\|_2^2 = 
\ep^2 \left[ \tr(H^2) - \frac1d \E \left( \tr \bigl( (\tr_3 (U_{23} H U_{23}^\dagger))^2 \bigr)\right) \right]+ O(\ep^3).
\label{eq:inf-E}
\ee

Now our goal is to compute the average $\E \left( \tr \bigl((\tr_3 (U_{23} H U_{23}^\dagger))^2\bigr) \right)$. 
We note that for any operator $C_{123}$ we have 
\be
\tr( C_{12}^2) = \tr \L(( C_{123}\ot C_{\overline{123}})\,
(\F_{12:\overline{12}}\ot \I_{3\overline{3}}) \R)
\ee
where systems with bars are copies of original systems, and $\F$ is the operator which 
swaps systems $12$ with $\overline{12}$.  Therefore
\ben
\E \left( \tr \bigl(\tr_3 (U_{23} H U_{23}^\dagger)^2\bigr) \right)
=\E \left( \tr \bigl( (H_{123}\ot H_{\overline{123}})\,(U_{23}^\dagger\ot U_{\overline{23}}^\dagger)
(\F_{12:\overline{12}}\ot \I_{3\overline{3}}) (U_{23}\ot U_{\overline{23}}) \bigr) \right). 
\een
We now compute (see also Lemma IV.3 of \cite{ADHW06})
\be
\E \left( (U_{23}^\dagger \ot U_{\overline{23}}^\dagger)( \F_{2:\overline{2}}\ot \I_{3\overline{3}})( 
U_{23}\ot U_{\overline{23}})\right)= 
\frac{d}{d^2+1} (I_{23:\overline{23}} + \F_{23:\overline{23}})
\label{eq:Flip-av}
\ee
Using the fact that the tensor product of swap operators is again a swap operator (e.g. $\F_{12:\overline{12}}=\F_{1:\overline{1}}\ot \F_{2:\overline{2}}$), we obtain 
\begin{eqnarray}
  \E \left( \tr \bigl(\tr_3 (U_{23} H U_{23}^\dagger)^2\bigr) \right) 
  &=& 
      \frac{d}{d^2+1} \left( \tr \bigl( (H_{123}\ot H_{\overline{123}})\,
      \F_{123:\overline{123}}\bigr) 
      +  \tr \bigl( (H_{123}\ot H_{\overline{123}})\, \F_{1:\overline{1}}
      \otimes \I_{23, \overline{23}} \bigr) \right)
      \nonumber \\ &=& \frac{d}{d^2+1} \left( \tr (H^2) + \tr(H_1^{2}) \right) \nonumber \\ 
  &\geq&  \frac{d}{d^2+1} \tr (H^2).
\label{eq:Flip-av2}
\end{eqnarray}
Inserting this into \eqref{eq:inf-E} 
\be
\E  \inf_{V_{12}}\left( \|V_{12} U_{23} X- U_{23} Y\|_2^2 \right) \leq \ep^2 \left(1 - \frac{1}{d^2+1} \right)\tr (H^2)+O(\ep^3).
\ee
Finally, using Eq. (\ref{eq58}),
\be
\E \left(\|X'-Y'\|_2^2 \right) \leq \ep^2 \left(\frac{2d^2 +1}{2d^2+2}\right)
\tr(H^2)+O(\ep^3)
= \eta^2 \E\left( \|X-Y\|_2^2\right) +O(\ep^3).
\label{eq:shrink-Fro}\ee
This proves the desired bound in \prettyref{eq:shrink} except with respect to the
Frobenius distance and not the Riemannian distance.  However, these distances are the same to
same leading order for very close points.  Indeed \cite{Oli07} shows
\be d_{\text{Fro}}(U,V) \leq d_{\text{Rie}}(U,V) \leq
d_{\text{Fro}}(U,V)  + c d_{\text{Fro}}(U,V)^2 .\ee
For our purposes we would like $c$ to be independent of dimension.  To see this, consider
for simplicity $U=I$ and $V=e^{i\diag(\lambda)}$ for $\lambda =
(\lambda_1,\ldots,\lambda_D)$, $D=d^3$, and with $\diag(\lambda)$ the matrix with
$\lambda$ along the diagonal and zeros elsewhere.  Then we compute (using \cite{Andruchow})
\ba
d_{\text{Fro}}(U,V)^2 &= \sum_j |1 - e^{i\lambda_j}|^2 = 4\sum_j \sin^2(\lambda_j/2) \\
d_{\text{Rie}}(U,V)^2 &= \sum_j \lambda_j^2
\ea
Using $\sin x  \geq x - x^3/6$ we have $\sin^2(x) \geq x^2 - x^4/3$ and
\be d_{\text{Fro}}(U,V)^2  = 4\sum_j \sin^2(\lambda_j/2) \geq
\sum_j \lambda_j^2 - \frac{1}{12} \sum_j \lambda_j^4
\geq d_{\text{Rie}}(U,V)^2 - \frac{1}{12} d_{\text{Rie}}(U,V)^4
\ee
This allows to convert \prettyref{eq:shrink-Fro} into our desired bound:
\begin{subequations}\ba \E\left(d_{\text{Rie}}(X',Y')^2\right)
&\leq \E\left(d_{\text{Fro}}(X',Y')^2\right) + 12\eps^4 \\
&\leq \eta^2 \E\left(d_{\text{Fro}}(X,Y)^2\right) + O(\eps^3)\\
&\leq \eta^2 \E\left(d_{\text{Rie}}(X,Y)^2\right) + O(\eps^3)
\ea \end{subequations}
\end{proof}

\vspace{0.2 cm}

\noindent \textbf{Remark} (\textit{Why one step of the walk does not work}): It is instructive to see why coupling only one step of the walk does not seem to be enough to prove contraction. In this case, a general class of couplings is given by
\ben
&& X \to \{ U_{12} V_{12} X,  U_{23} V_{23} X \} \nonumber \\
&& Y \to \{  U_{12} Y ,U_{23} Y \},
\een
where $V_{12}$ and $V_{23}$ can depend only on $X$ and $Y$.
If we optimize over the choice of $V_{12}, V_{23}$ we get 
\begin{subequations}\begin{align}
\E \left(\|X-Y\|^2_2 \right) &=  2\tr(\I) - \E \left( \|\tr_3(R)\|_1 \right)  - \E \left(
  \|\tr_1(R)\|_1 \right) \\
&= \ep^2\left(\tr(H^2) - \frac1{2d}(\tr(H_{12}^2)+\tr(H_{23}^2)  \right) + O(\ep^{3})
\end{align}\end{subequations}
where $H_{12} = \tr_3 (H)$, and $H_{23} = \tr_1(H)$. However there exist Hermitian matrices $H$ such that 
$H_{12}=H_{23}=0$, in which case $\E \|X-Y\|^2_2=\ep^2 \tr(H^2) + O(\ep^{3}) = \|X_0-Y_0\|^2_2$, and
so that we do not have any contraction. We can thus understand the role of the second step of the walk in constructing a useful coupling: it is to randomly change such bad cases of $H$ into good $H$, with non-zero probability. 
$\square$

We are now in position to prove Lemma \ref{lconvergenceWasserstein}:

\noindent \textbf{Lemma \ref{lconvergenceWasserstein} (restatement)}
\textit{For every integers $k, n > 0$,}
\begin{equation}
  W_{\text{Rie},2}((\nu_{\text{LR}, n, d})^{* (n - 1)k}, \mu_{\text{Haar}})
  \leq  \left( 1 - \frac{1}{e^{n}(d^2+1)^{n-2}} \right)^{\frac{k}{n-1}} \pi d^{n/2}.
\end{equation}

\begin{proof}

We will show that 
\be  \label{nstepsW}
\limsup_{\varepsilon \rightarrow 0} \sup_{X, Y \in \mathbb{U}(d^{n})} \left\{ \frac{W_{\text{Rie},2}((\nu_{\text{LR}, n, d})^{* (n - 1)} \ast \delta_{X}, (\nu_{\text{LR}, n, d})^{* (n - 1)} \ast \delta_{Y})}{\Vert X - Y \Vert_{2}} \hspace{0.1 cm} : \hspace{0.1 cm} \Vert X - Y \Vert_{2} \leq \varepsilon \right \}  \leq \eta,
\ee
with
\be
\eta := \left( 1 - \frac{1}{e^{n}(d^2+1)^{n-2}} \right)^{\frac{1}{n-1}}.
\ee
Then, in analogy to the proof of Lemma \ref{lem:couplingfor3},
\begin{eqnarray} 
W_{\text{Rie},2}((\nu_{\text{LR}, n, d})^{* (n-1)k}, \mu_{\text{Haar}}) &=& W_{\text{Rie},2}((\nu_{\text{LR}, n, d})^{* (n-1)k}\ast \delta_{\I}, (\nu_{\text{LR}, n, d})^{* (n-1)k} \ast \mu_{\text{Haar}}) \nonumber \\ &\leq& \eta^{k}W_{\text{Rie},2}(\delta_{\I}, \mu_{\text{Haar}}) \nonumber \\ &\leq& \eta^{k} \pi d^{n/2},
\end{eqnarray}
and the statement of the lemma follows from the bound
$W_{\text{Rie},2}((\nu_{\text{LR}, n, d})^{* (n-1)k}, \mu_{\text{Haar}})
\leq W_{\text{Rie},2}((\nu_{\text{LR}, n, d})^{* (n-1)k}, \mu_{\text{Haar}})$.

Let us turn to prove Eq. (\ref{nstepsW}). In order to avoid the problem that occured when we applied a single step of 
the walk to three systems, we now need to apply $k=n-1$ steps of walk. There are then $k^k$ possible paths, and we make a nontrivial coupling 
only for $k!$ of them. Namely, for those paths for which no pair of systems is repeated, i.e. for the case $U_{n-1 n}\ldots U_{23} U_{12}$ 
and all its permutations (all sequences which come from permuting the order of the unitaries in the sequence above). For those $k!$ paths we consider the following coupling 
\ben
&&X \to X' := U_{i_{n-1},i_{n-1}+1} V_{i_{n-1},i_{n-1}+1} \ldots U_{i_2,i_2+1} U_{i_1,i_1+1}X  \nonumber \\
&&Y \to Y' := U_{i_{n-1},i_{n-1}+1} V_{i_{n-1},i_{n-1}+1} \ldots U_{i_2,i_2+1} U_{i_1,i_1+1} Y
\een
where $V$ can depend on all unitaries sitting to the right, and $i_j \in \{1,\ldots,n-1 \}$. 
We now consider explicitly a particular sequence $U_{12} U_{23} \ldots U_{n-1 n}$
and compute the analogue of \eqref{eq:average-single path-3} (for the other sequences 
the calculations give the same result). We have
\begin{eqnarray}
&&\inf_{V_{12}}\E \left( \|U_{12} V_{12} U_{23}\ldots U_{n-1 n} X- U_{12} U_{23}\ldots U_{n-1 n} Y\|_2^2 \right) \nonumber \\
&=& 2\left( \tr \left(\I \right) - \E \left( \|\tr_{3\ldots n} (U_{23}\ldots U_{n-1 n} R  U_{23}^\dagger\ldots U_{n-1 n}^\dagger)\|_1 \right) \right).
\end{eqnarray}
Expanding in $\ep$ we get in analogy to \eqref{eq:inf-E}:
\ben
&&\inf_{V_{12}}\E\|U_{12} V_{12} U_{23}\ldots U_{n-1 n} X- U_{12} U_{23}\ldots U_{n-1 n} Y\|_2^2=
\nonumber \\
&&\ep^2 \left[ \tr(H^2) - \frac{1}{d^{n-2}} \E \tr \bigl( (\tr_{3\ldots n} (U_{23}\ldots U_{n-1 n} H U_{23}\ldots U_{n-1 n}^\dagger))^2\bigr)\right]+ O(\ep^3).
\label{eq:inf-E-n}
\een
Moreover, using repeatedly \eqref{eq:Flip-av2}
we obtain 
\be
\E \left(\tr \bigl( (\tr_{3\ldots n} (U_{23}\ldots U_{n-1 n} H U_{23}\ldots U_{n-1 n}^\dagger))^2\bigr) \right)
\geq \left(\frac{d}{d^2+1}\right)^{n-2}\tr \left(H^2 \right),
\ee
so that 
\be
\inf_{V_{12}}\E \left( \|U_{12} V_{12} U_{23}\ldots U_{n-1 n} X- U_{12} U_{23}\ldots U_{n-1 n} Y\|_2^2 \right)
\leq \ep^2 \left(1- \frac{1}{(d^2+1)^{n-2}} \right)\tr(H^2)  + O(\ep^3).
\ee
Finally, we have $k^k-k!$ paths of walk for which we do not have any shrinking (as our coupling was trivial 
for those paths) and $k!$ paths in which we have a shrinking factor of $1 - (d^2+1)^{-(n-2)}$. 
Thus this gives 
\be
\E \left(\|X'-Y'\|^2_2 \right) \leq \ep^2 \chi \|X-Y\|_2^2 +O(\ep^3)
\ee
with
\be
\chi := 1-\frac{(n-1)!}{(n-1)^{(n-1)}} \frac{1}{(d^2+1)^{n-2}} \leq 1 - \frac{1}{e^{n}(d^2+1)^{n-2}} = \eta,
\ee
where we used the bound $n! \geq n^{n} e^{-n}$.

We now can convert this bound to one for the Riemannian distance using the same arguments
used in \prettyref{lem:couplingfor3}.
\end{proof}

\subsection{Proof of Lemma \ref{lem:WasserGap}} \label{proofLemma4}

In this section we prove the last lemma needed in the proof of \thmref{main-TPE}:

\noindent
\begin{replem}{lem:WasserGap}
\lemWasserGap
\end{replem}

\begin{proof}


The definition of $g(\cdot,\cdot)$ states that
\begin{eqnarray}
g(\nu, t) = \left\Vert \int_{\mathbb{U}(d^{n})} U^{\otimes t, t} \nu({\rm d}U) - \int_{\mathbb{U}(d^{n})} U^{\otimes t, t} \mu_{\text{Haar}}({\rm d}U)  \right\Vert_{\infty}.
\end{eqnarray}
Let $X$ be such that $\|X\|_1 \leq 1$ and
\begin{eqnarray}
   && \tr \left( \left(\int \nu({\rm d}U) U^{\otimes t, t} - \int \mu_{\text{Haar}}({\rm d}U) U^{\otimes t, t} \right)X\right) \nonumber \\
&=& \left\Vert \int_{\mathbb{U}(d^{n})} U^{\otimes t, t} \nu({\rm d}U) - \int_{\mathbb{U}(d^{n})} U^{\otimes t, t} \mu_{\text{Haar}}({\rm d}U)  \right\Vert_{\infty}.
\label{eq:norm_vs_X}
\end{eqnarray}
That such a $X$ always exists follows from the following variational characterization of the operator norm: $\Vert A \Vert_{\infty} = \max_{X} \{ \tr(AX) \hspace{0.1 cm} : \hspace{0.1 cm} \Vert X \Vert_1 \leq 1$ \} \cite{bhatia2013matrix}.

Define $f(U) := \tr(U^{\otimes t, t}X)$. We claim $f$ is $2t$-Lipschitz. Before proving it, let us show how it implies the statement of the lemma. Indeed, since $f/(2t)$ is 1-Lipschitz,
\ben
g(\nu, t) &=&  2t \left| \int f(U)/(2t) \nu({\rm d}U) - \int f(U)/(2t) \mu_{\text{Haar}}({\rm d}U) \right| \nonumber \\
  &\leq&  2t W_{\text{Fro}}(\nu, \mu_{\text{Haar}}),
  \label{eq:H_vs_W}
\een
where the last inequality follows from the definition of the Wasserstein distance, given by Eq. (\ref{WassersteinDef}).

It remains to show that $f(U)$ is $2t$-Lipschitz. This follows from
\begin{eqnarray}
|f(U) - f(V)| &=& |\tr((U^{\otimes t, t} - V^{ \otimes t, t})X)| \nonumber \\
              &\leq& \|X\|_1 \|U^{ \otimes t, t} - V^{ \otimes t, t}\|_{\infty} \nonumber \\
             &\leq& \|U^{\otimes t, t} - V^{\otimes t, t}\|_{\infty} \nonumber \\
             &\leq&  2t \|U - V\|_{\infty} \nonumber \\
             &\leq& 2t\|U - V\|_{2}
\end{eqnarray}
The first inequality follows from the relation $\tr(A^{\cal y}B) \leq
\Vert A \Vert_1 \Vert B \Vert_{\infty}$ and the second from the bound
$\Vert X \Vert_1 \leq 1$, and the third from the hybrid argument; that
is,
by repeatedly applying the inequality
\be
\Vert A \otimes B - C \otimes D \Vert_\infty \leq \Vert A - C \Vert_\infty + \Vert B - D\Vert_\infty,
\ee
valid for unitaries $A, B, C$ and $D$. This, in turn, follows from
\begin{eqnarray}
\label{eq:hybrid}
\Vert A \otimes B - C \otimes D \Vert_\infty &=& \Vert A \otimes (B - D) + (A - C) \otimes D \Vert_\infty \nonumber \\ &\leq& \Vert A \otimes (B - D)\Vert_{\infty} + \Vert (A - C) \otimes D \Vert_\infty  \nonumber \\ &\leq&  \Vert A \Vert _{\infty}\Vert B - D \Vert_\infty +  \Vert D \Vert _{\infty}\Vert A - C\Vert_\infty \nonumber \\  &\leq&  \Vert B - D \Vert_\infty + \Vert A - C\Vert_\infty.
\end{eqnarray}
\end{proof}




\section{Proof of Corollary \ref{cor:unisetofgates}}
\label{sec:proof-univ}

We now show how we can get convergence rates for other universal set
of gates from our analysis of the Haar random case:
\restatecor{cor:unisetofgates}

\begin{proof}
Define the Hermitian matrices
\begin{equation}
  P_{G, t} := \frac{1}{m}\sum_{i=1}^m g_{i}^{\otimes t, t}
  \qquad
  P_{\text{Haar},t} := 
  \int_{\mathbb{U}(d^2)} U^{\otimes t, t} \mu_{\text{Haar}}({\rm d}U)
\end{equation}
Note that  $P_{\text{Haar},t}$ is a projector while $P_{G,t}$ in general will not be.  The
two operators commute and indeed $P_{G,t}  P_{\text{Haar},t} =   P_{\text{Haar},t}$.  Thus
$P_{G,t}$ acts trivially on the subspace projected on by $  P_{\text{Haar},t}$.  The only
question is then what it does with the orthogonal space.

It was proven in \cite{BG11} that there is a constant $\lambda < 1$, independent of $t$, such that for all $t$,
\be \label{eq:BG}
\left \Vert  P_{G, t} -
  P_{\text{Haar},t} \right \Vert_{\infty} \leq \lambda.
\ee
(Their breakthrough was to make $\lambda$ independent of $t$.  The same result has long
been known~\cite{AK62} with $\lambda$ a function of $G$ and $t$ for any
universal set of gates.) Using the block decomposition of $P_{G,t}$ we then have
\begin{equation}  \label{eq:aux876}
 I - P_{G, t}  \geq \lambda  \left(I -   P_{\text{Haar},t} \right).
\end{equation}

Define the local Hamiltonian in $\left(\mathbb{C}^{d} \right)^{\otimes n}$:
\be
H_{G, n, t} := \sum_{i=1}^{n-1} \left( I - P_{G, t}  \right)_{i, i+1}.
\ee 
Note it has the same groundstate as $H_{n, t}$. From Eq. (\ref{eq:aux876}):
\be
H_{G, n, t} \geq \lambda H_{n, t},
\ee
with $H_{n, t}$ given by Eq. (\ref{Hntdef}). Moreover $H_{G, n, t}$ has the same ground space as $H_{n, t}$. Thus
\be
\Delta(H_{G, n, t}) \geq \lambda  \Delta(H_{n, t}).
\ee
In complete analogy to the proof of \prettyref{lem:Ham-gap}, we have that
\begin{equation}
g(\nu_{\text{LR},n,d,G}, t) = 1 - \frac{\Delta(H_{G, n, t})}{n}
\end{equation}
with $\nu_{\text{LR},n,d,G}$ the distribution over circuits that is obtained by replacing
the Haar measure in $\nu_{\text{LR},n,d}$ with uniform measure over the set $G$.   (We
define $\nu_{\text{PLR},n,d,G}$  analogously.) The
first claim of the corollary now follows from Eq. (\ref{eq:boundt4logt}).

For the model with parallel gates, we need to replace $M_{n,t}$ in \prettyref{eq:Mnt-def}
with
\ba
M_{G,n,t} &:=
\frac{(P_{G,t})_{1,2}  (P_{G,t})_{3,4} \cdots (P_{G,t})_{n-1,n}
  + (P_{G,t})_{2,3} (P_{G,t})_{4,5}\cdots (P_{G,t})_{n-2,n-1}}{2}
\\ &:= \frac{P_{G,t,\text{odd}} + P_{G,t,\text{even}}}{2}
\ea
where we use the notation $(X)_{i,i+1}$ to indicate the operator $X$ acting on sites
$i,i+1$ tensored with identity operators elsewhere.
Let $P_{G,t}' = P_{G,t} - P_{\text{Haar},t}$ and observe that by \prettyref{eq:BG},
$\|P_{G,t}'\|_\infty \leq \lambda$.  Then
\ba
P_{G,t,\text{odd}}
 & = (P_{\text{Haar},t} + P_{G,t}')_{1,2}  (P_{\text{Haar},t} + P_{G,t}')_{3,4} \cdots
 (P_{\text{Haar},t} + P_{G,t}')_{n-1,n}
 \\
 & := P_{\text{odd}} + Q_{\text{odd}}
\ea
where we have defined
$P_{\text{odd}} =(P_{\text{Haar},t} )_{1,2}(P_{\text{Haar},t})_{3,4}\cdots
(P_{\text{Haar},t} )_{n-1,n}$ (as in \prettyref{sec:PLR}) and defined $Q_{\text{odd}}$ to
be the sum of the remaining $2^{n/2}-1$ terms.  Because of the block decomposition of
$P_{G,t}$ each of these terms is orthogonal to the others as well as to
$P_{\text{Haar},t}$.  Thus $\|Q_{\text{odd}}\|_\infty \leq \lambda$ and $Q_{\text{odd}}$
and $P_{\text{odd}}$ are orthogonal to each other.  This means that
\be P_{G,t,\text{odd}} \leq P_{\text{odd}}  + \lambda (I - P_{\text{odd}})
 = (1-\lambda) P_{\text{odd}} + \lambda I.\ee
 Similar arguments imply that
\be P_{G,t,\text{even}} \leq P_{\text{even}}  + \lambda (I - P_{\text{even}})
= (1-\lambda) P_{\text{even}} + \lambda I.\ee
Recall from \prettyref{lem:detectabilityconsequence} and the following discussion that
\be
\frac{P_{\text{odd}}  + P_{\text{even}}}{2} \leq P_{\text{all,t}} +
(1-\delta/13) (I - P_{\text{all},t}),
\ee
where $\delta$ is the lower bound on $\Delta(H_{n,t})$ from \prettyref{eq:boundt4logt}.
Putting this together we find that
\be
 \lambda_{t!+1}(M_{G,n,t}) \leq 1 - \frac{(1-\lambda)\delta}{13}.
\ee
This bound on the gap can yield the desired bound on the number of gates needed using the
same arguments that have appeared for the case of Haar-random gates.
\end{proof}

\section{Proof of Proposition \ref{prop:converse}}

In this section, we state lower bounds on the size of $t$-designs that
match our results in \corref{main-design} up to polynomial factors.

First we argue that if $\nu$ is an approximate $t$-design, it must
have large support.  More precise lower bounds are known for exact
$t$-designs~\cite{RS09} and for approximate 2-designs \cite{ABW09}, but
for our purposes, it will be enough to determine the rate of scaling.
\begin{lem}\label{lem:t-design-size-LB}
If $\nu$ is an $\eps$-approximate $t$-design on $\bbU(N)$ then
\be|\supp(\nu)| \geq (1-\eps)\binom{N+t-1}{t}^2 \ee
\end{lem}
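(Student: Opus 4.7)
The plan is to translate the semidefinite characterization of approximate designs from \defref{design} into a rank comparison between the Choi states $\rho_\nu$ and $\rho_{\text{Haar}}$ defined in \eq{Delta-J-defs}, and then to invoke the Schur--Weyl decomposition already carried out in the proof of \lemref{design-J}. Concretely, I would sandwich $\rank(\rho_\nu)$ between a combinatorial upper bound in terms of $|\supp(\nu)|$ and a representation-theoretic lower bound in terms of the symmetric irrep of $\bbU(N)$.

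First, I would note that the left half of \eq{design-def-J} reads $(1-\eps)\rho_{\text{Haar}}\preceq\rho_\nu$ as operators on $(\bbC^N\otimes\bbC^N)^{\otimes t}$. When $\eps\geq 1$ the claimed bound is vacuous because its right-hand side is non-positive while $|\supp(\nu)|\geq 1$, so I may restrict to the case $\eps<1$. In that regime, the strict positivity of $1-\eps$ combined with the semidefinite ordering forces $\ker(\rho_\nu)\subseteq\ker(\rho_{\text{Haar}})$, and consequently $\rank(\rho_\nu)\geq\rank(\rho_{\text{Haar}})$.

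Next, writing $\supp(\nu)=\{U_1,\ldots,U_K\}$ with weights $p_1,\ldots,p_K$, I would expand
\be
\rho_\nu=\sum_{i=1}^{K}p_i\,\ket{\phi_i}\bra{\phi_i},\qquad \ket{\phi_i}:=((U_i\otimes I)\ket{\Phi_N})^{\otimes t},
\ee
so that $\rho_\nu$ is a convex combination of $K$ rank-one projectors and therefore $\rank(\rho_\nu)\leq K=|\supp(\nu)|$.

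Finally, I would read off the rank of $\rho_{\text{Haar}}$ directly from the Schur--Weyl decomposition appearing in the proof of \lemref{design-J}, which exhibits $\rho_{\text{Haar}}$ as an orthogonal direct sum of positive blocks indexed by $\lambda\in\Par(t,N)$; the $\lambda$-block is supported on a subspace of dimension $(\dim\cQ_\lambda^N)^2$ on which it acts, up to a positive scalar, as the maximally mixed state on $\cQ_\lambda^N\otimes\cQ_\lambda^N$ tensored with a pure state on the doubled multiplicity space $\cP_\lambda\otimes\cP_\lambda$. Retaining only the symmetric summand $\lambda=(t)$, for which $\dim\cQ_{(t)}^N=\binom{N+t-1}{t}$ and $\dim\cP_{(t)}=1$, yields $\rank(\rho_{\text{Haar}})\geq\binom{N+t-1}{t}^2$. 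Chaining the three inequalities produces $|\supp(\nu)|\geq\binom{N+t-1}{t}^2\geq(1-\eps)\binom{N+t-1}{t}^2$, which is the stated bound (in fact slightly stronger). There is no real obstacle in this argument: all the representation-theoretic work has already been done inside \lemref{design-J}, and the $(1-\eps)$ factor in the statement is needed only so that the bound remains correct, though vacuous, in the uninteresting regime $\eps\geq 1$.
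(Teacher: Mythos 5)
Your argument is correct, and it takes a genuinely different route from the paper's. The paper works with the maximally entangled state on $S\ot S$ where $S=\vee^t\bbC^N$, so that $(\Delta_{\mu_{\text{Haar}},t}\ot\id)$ maps it to the \emph{maximally mixed} state on $S\ot S$ (no Schur--Weyl decomposition needed beyond irreducibility of $S$), and then bounds the rank of $\rho_\nu$ via the trace-distance consequence of \lemref{design-defs}: a state within trace distance $\eps$ of a maximally mixed state of rank $r$ has rank at least $(1-\eps)r$. This is why the $(1-\eps)$ factor appears naturally in the paper's statement, and it has the side benefit that the paper's proof would survive under a weaker, purely metric definition of approximate design. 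You instead use the semidefinite ordering from \defref{design} directly: for $\eps<1$ the inequality $(1-\eps)\rho_{\text{Haar}}\preceq\rho_\nu$ forces a kernel inclusion and hence $\rank(\rho_\nu)\geq\rank(\rho_{\text{Haar}})$ with no loss, after which the rank-$\leq|\supp(\nu)|$ upper bound on $\rho_\nu$ is identical in spirit to the paper's last line. Your rank lower bound on $\rho_{\text{Haar}}$ requires extracting the $\lambda=(t)$ block from the Schur--Weyl decomposition of \lemref{design-J} rather than the irreducibility shortcut, but that work was already done in the paper. The net effect is that your argument actually proves the stronger statement $|\supp(\nu)|\geq\binom{N+t-1}{t}^2$ for any $\eps<1$ (indeed you could retain all $\lambda$-blocks and obtain $|\supp(\nu)|\geq\binom{N^2+t-1}{t}$); the $(1-\eps)$ factor in the lemma's statement is needed for your route only as padding to cover the vacuous regime $\eps\geq 1$, whereas in the paper it is load-bearing.
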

\begin{proof}
Let $S=\vee^t\bbC^N$ be the symmetric subspace of $(\bbC^N)^{\ot t}$
(c.f. \defref{sym} in Appendix A)
Define $\ket\varphi$ to be the maximally entangled state on $S \ot
S$.  Since $S$ is an irrep of $\bbU(N)$ under the action $U\mapsto
U^{\ot t}$, it follows that $(\Delta_{\mu_{\text{Haar}, t}}\ot \id)(\varphi)$ is
the maximally mixed state on $S \ot S$.  This has rank
$\binom{N+t-1}{t}^2$.  Thus, to approximate this state to within trace
distance $\eps$ requires a state of rank at least $(1-\eps)
\binom{N+t-1}{t}^2$.  Finally, $\rank((\Delta_{\nu,t}\ot
\id)(\varphi)) \leq |\supp(\nu)|$.
\end{proof}

To relate the cardinality of a design with the number of gates in a
quantum circuit, we need to discretize the set of all quantum
circuits.   We say that a set $X$ is an $\eps$-covering of $Y$ if for
all $y\in Y$, there exists an $x\in X$ with $d(x,y)\leq \eps$ for some
distance measure $d$.
\begin{lem}\label{lem:eps-net-circuits}
There exists an $\eps$-covering in diamond norm of size 
$ \leq \binom{n}{2}^r \left(\frac{10r}{\eps}\right)^{rd^4}$
for the set of circuits on $n$ qudits comprised of $\leq r$ two-qudit gates.
\end{lem}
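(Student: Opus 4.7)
\textbf{Proof proposal for \lemref{eps-net-circuits}.} The plan is to parameterize each $r$-gate circuit on $n$ qudits by the pair (architecture, gate choices), build a covering for each component separately, and combine them via a standard triangle-inequality accumulation argument.

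First I would fix the \emph{architecture}: an ordered sequence $(a_1,b_1),\ldots,(a_r,b_r) \in \binom{[n]}{2}^r$ specifying on which pair of qudits the $i$-th two-qudit gate acts. There are at most $\binom{n}{2}^r$ such architectures, contributing the first factor in the claimed bound. (It is fine to overcount by allowing fewer than $r$ gates, since one may pad with identity gates, which are covered by any reasonable net.)

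Next, for each architecture I would build a product covering on the gates themselves. A standard volume/ball-packing argument on the unitary group $\bbU(d^2)$ (which is a real manifold of dimension $d^4$) shows that for every $\delta>0$ there is a $\delta$-net of $\bbU(d^2)$ in the operator norm of size at most $(C/\delta)^{d^4}$ for an absolute constant $C$ (the usual derivation gives $C\leq 5$). Taking the Cartesian product of $r$ such nets yields an explicit discrete set of gate tuples $(\tilde g_1,\ldots,\tilde g_r)$, of cardinality at most $(C/\delta)^{rd^4}$, such that every gate tuple has some $\tilde g_i$ with $\|g_i - \tilde g_i\|_\infty \leq \delta$.

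I would then propagate these per-gate approximations to an estimate on the full circuit. Writing $U = g_r\cdots g_1$ and $\tilde U = \tilde g_r\cdots \tilde g_1$, the standard telescoping argument together with unitarity gives
\[
\|U-\tilde U\|_\infty \leq \sum_{i=1}^r \|g_i-\tilde g_i\|_\infty \leq r\delta.
\]
Converting to diamond distance between the induced channels via $\|\ad_U - \ad_{\tilde U}\|_\diamond \leq 2\|U-\tilde U\|_\infty$ and choosing $\delta := \epsilon/(2r)$ makes $\|\ad_U - \ad_{\tilde U}\|_\diamond \leq \epsilon$. Substituting this $\delta$ and absorbing the constants into $10$ yields a net of size at most $\binom{n}{2}^r(10r/\epsilon)^{rd^4}$, as claimed.

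The argument is essentially routine, so there is no real obstacle; the only care needed is in bookkeeping the constants so that $2C \leq 10$ after combining the $\delta$-net constant with the factor of two from the diamond/operator-norm conversion, and in handling the convention that architectures are allowed to contain repeated or padded-identity gates so that the set of circuits with $\leq r$ gates really is covered.
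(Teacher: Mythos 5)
Your proposal matches the paper's proof essentially step for step: specify the architecture ($\binom{n}{2}^r$ choices), take a $\delta$-net for each gate in operator norm with $\delta=\eps/(2r)$, telescope to bound $\|U-\tilde U\|_\infty\leq r\delta$, and convert to diamond norm via $\|\ad_U-\ad_V\|_\diamond\leq 2\|U-V\|_\infty$. If anything you track the factor of $2$ more carefully than the paper's terse write-up, correctly arriving at the constant $10$ in the stated bound.
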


Before proving the lemma, we note the following useful bound
from part 6 of Lemma 12 of \cite{AKN98} that applies to any unitaries $U,V$:
\be \|\ad_U -\ad_V\|_\diamond \leq 2\|U-V\|_\infty.
\label{eq:diamond-unitary}\ee

\begin{proof}
To describe an $\eps$-covering for circuits, it suffices to specify the
location of each gate and to approximate each gate to accuracy
$\eps/r$.  The former has $\binom{n}{2}^r$ choices; the latter
requires $r$ copies of a $\eps/r$-covering for $U(d^2)$.  Finally,
standard arguments~\cite{MS86} show that such nets can be constructed with size
$\leq (5r/\eps)^{d^4}$ for the operator norm.  We convert operator
norm to diamond norm using \eq{diamond-unitary}.
\end{proof}

Combining these results we can prove that $t$-designs require large
circuits.
\restateprop{prop:converse}

\begin{proof}
Let  the distribution $\nu$ be an $\varepsilon$-approximate unitary
$t$-design with all elements composed of $r$ two-qudit gates, possibly
including the identity.  From \lemref{eps-net-circuits}, construct a
diamond-norm $\delta$-covering for the set of circuits on
$n$ qudits, and denote it by $C_\delta$.
Consider a new distribution $\tilde \nu({\rm d}U) $
in which each unitary $U$ is replaced by its closest unitary $\tilde
U\in C_\delta$.
 We claim that $\{ \tilde \nu({\rm d}U), U \}$ is a $(\varepsilon
+ t \delta)$-approximate unitary $t$-design. Indeed  
\begin{subequations}\label{boundsanyway}\begin{align} 
\left \Vert \Delta_{\tilde \nu, t} - \Delta_{\mu_{\text{Haar}}, t}
\right \Vert_{\diamond}  
&\leq \left \Vert \Delta_{\nu, t} -
  \Delta_{\mu_{\text{Haar}}, t}    \right \Vert_{\diamond} + \left
  \Vert  \Delta_{\tilde \nu, t}  - \Delta_{\nu, t}   \right
\Vert_{\diamond} \\ 
 &\leq \varepsilon +  \max_{U\in\supp(\nu)}\min_{\tilde U\in
   C_\delta} \Vert \ad_{U^{\ot t}} - \ad_{\tilde U^{\ot t}}
 \Vert_{\diamond} \\   
 &\leq \varepsilon +  2\max_{U\in\supp(\nu)}\min_{\tilde U\in
   C_\delta} \| U^{\ot t} - \tilde U^{\ot t}\|_\infty
& \text{by \eq{diamond-unitary}}
\\
 &\leq \varepsilon +  2t\max_{U\in\supp(\nu)}\min_{\tilde U\in
   C_\delta} \| U - \tilde U\|_\infty 
& \text{Eq. \ref{eq:hybrid}}\\
&\leq  \varepsilon + 2t \delta, 
\end{align}\end{subequations}
See Fact 2.0.1 of \cite{BV93} for a statement and proof of the hybrid inequality.

Choosing $\delta =\varepsilon/2t$ we get that the distribution
$\tilde\nu$ is a $2\varepsilon$-approximate $t$-design.  Now we invoke
Lemmas \ref{lem:t-design-size-LB} and \ref{lem:eps-net-circuits}
to bound
\be
(1 - 2 \varepsilon)   \binom{d^{n} + t - 1}{t}  
\leq  |\text{supp}(\tilde \nu)| 
\leq  \binom{n}{2}^r  \left( \frac{20 rt}{\varepsilon}   \right)^{r d^4}.,
\ee
After some algebra, we obtain the desired bound on $r$.
which implies that
\be
r \geq \frac{nt}{2 \log(t) d^{4}}. 
\ee
\end{proof}

\section{Proof of Corollaries \ref{cor:hiding} and
  \ref{cor:followingcircuitsbyhaar}}\label{sec:hiding-proof} 

In the proof of \corref{hiding} we make use of the following lemma due
to Low, which gives a measure concentration result for $t$-designs.
Our definition of approximate $t$-designs differs from his by a
normalizing factor, and we have adjusted the statement of the
result accordingly.
\begin{lem}[Low, Theorem 1.2 of \cite{Low09}]  \label{Lowconcentration} Let $f : \mathbb{U}(D) \rightarrow \mathbb{R}$ be a polynomial of degree $K$. Let $f(U) = \sum_i \alpha_i M_i(U)$ where $M_i(U)$ are monomials and let $\alpha(f) = \sum_i |\alpha_i|$. Suppose that $f$ has probability concentration
\begin{equation}
\Pr_{U \sim \mu_{\text{Haar}}} \left( |f(U) - \mu| \geq \delta  \right) \leq C e^{-a \delta^{2}},
\end{equation}
and let $\nu$ be an $\epsilon$-approximate unitary $t$-design. Then for any integer $m$ with $2mK \leq t$,
\be
\Pr_{U \sim \nu} \left( |f - \mu| \geq \delta  \right) \leq \frac{1}{\delta^{2m}} \left( C \left( \frac{m}{a}  \right)^{m} + 2\eps (\alpha + |\mu|)^{2m}  \right).
\ee
\end{lem}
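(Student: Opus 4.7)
\textbf{Proof plan for Lemma \ref{Lowconcentration}.} The plan is the standard moment method, where the $t$-design property is used to control the $2m$-th moment under $\nu$ by comparing it with the Haar moment.

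First I would apply Markov's inequality to the $2m$-th power:
\be
\Pr_{U\sim\nu}\bigl(|f(U)-\mu| \geq \delta\bigr) \;\leq\; \frac{\mathbb{E}_{U\sim\nu}\bigl[(f(U)-\mu)^{2m}\bigr]}{\delta^{2m}}.
\ee
Since $f$ is a polynomial of degree $K$ in the entries of $U$ and $U^{*}$, the function $(f-\mu)^{2m}$ is a polynomial of total degree at most $2mK\leq t$, which is precisely the regime in which the $t$-design hypothesis can be invoked. I would then decompose the numerator as
\be
\mathbb{E}_{\nu}\bigl[(f-\mu)^{2m}\bigr] \;=\; \mathbb{E}_{\mu_{\text{Haar}}}\bigl[(f-\mu)^{2m}\bigr] \;+\; \bigl(\mathbb{E}_{\nu}-\mathbb{E}_{\mu_{\text{Haar}}}\bigr)\bigl[(f-\mu)^{2m}\bigr]
\ee
and bound the two pieces separately.

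For the Haar piece, I would convert the given sub-Gaussian tail bound into a raw moment bound via the layer-cake formula $\mathbb{E}[X^{2m}] = \int_{0}^{\infty} 2m s^{2m-1}\Pr(X\geq s)\,ds$, applied to $X=|f-\mu|$. The integral $\int_{0}^{\infty} 2m s^{2m-1} C e^{-a s^{2}}\,ds = C\,m!/a^{m}$ gives, after the elementary bound $m!\leq m^{m}$, the term $C(m/a)^{m}$ appearing in the lemma. For the design-error piece, I would expand $(f-\mu)^{2m}$ as a sum of monomials. Since $f-\mu = (\sum_{i}\alpha_{i}M_{i}) - \mu\cdot 1$, the sum of absolute coefficients of $f-\mu$ is at most $\alpha(f)+|\mu|$, and hence $\alpha((f-\mu)^{2m})\leq (\alpha(f)+|\mu|)^{2m}$ by multiplicativity of $\alpha(\cdot)$. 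Monomial-by-monomial, I would then show that each balanced monomial $M$ of total degree $\leq t$ satisfies
\be
\bigl|\mathbb{E}_{\nu}[M]-\mathbb{E}_{\mu_{\text{Haar}}}[M]\bigr| \;\leq\; 2\eps,
\ee
which when summed against the absolute-coefficient bound yields the term $2\eps(\alpha+|\mu|)^{2m}$.

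The main obstacle is this last bound: turning the multiplicative operator inequality $(1-\eps)\Delta_{\mu_{\text{Haar}},t}\preceq \Delta_{\nu,t}\preceq (1+\eps)\Delta_{\mu_{\text{Haar}},t}$ into a monomial-level estimate. The natural path is to note that every monomial $M(U)$ of degree $(s,s')$ in $U$ and $U^{*}$ with $s,s'\leq t$ can be written as $\tr\bigl(A\,U^{\otimes s}\otimes(U^{*})^{\otimes s'}\bigr)$ for an operator $A$ of norm at most $1$. For $s=s'$, one pads with identities up to $t$ copies and applies the design inequality sandwiched between the maximally entangled state and the operator $A$; for unbalanced $s\neq s'$, one uses the left/right invariance under the $U(1)$ center of $\bbU(D)$ to observe that both $\mathbb{E}_{\text{Haar}}$ and $\mathbb{E}_{\nu}$ vanish (the latter because the design property is compatible with this averaging, at least up to the allowed multiplicative error). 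Combining the Haar moment bound with this design-error bound and dividing by $\delta^{2m}$ yields the stated inequality.
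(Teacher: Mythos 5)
First, a scoping note: the paper does not prove this lemma; it is cited verbatim (up to a renormalization of $\eps$) as Theorem~1.2 of Low \cite{Low09}. There is therefore no in-paper proof to compare against, and I will assess your reconstruction on its own terms.

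Your skeleton is exactly Low's argument: bound the tail via Markov applied to the centered $2m$-th moment, control $\mathbb{E}_{\mu_{\text{Haar}}}[(f-\mu)^{2m}]$ by the layer-cake integral $\int_0^\infty 2m s^{2m-1}\,C e^{-a s^2}\,ds = C\,m!/a^m \le C(m/a)^m$, and control the design error monomial-by-monomial using $\alpha\big((f-\mu)^{2m}\big)\le(\alpha+|\mu|)^{2m}$ together with a bound of the form $|\mathbb{E}_\nu M - \mathbb{E}_{\text{Haar}} M|\le 2\eps$. That is all correct, and the intended role of \lemref{design-defs} (yielding the diamond-norm bound $\|\Delta_{\nu,t}-\Delta_{\mu_{\text{Haar}},t}\|_\diamond\le 2\eps$, and then, by composing with the CPTP maps $\rho\mapsto\rho\ot\sigma$ and $\tr_{[s+1,t]}$, the same bound for $\Delta_{\nu,s}$ for all $s\le t$) is precisely the ``normalizing factor'' adjustment the authors allude to. This gives the $2\eps$ monomial bound for all \emph{balanced} monomials of degree at most $t$ in each of $U$ and $U^*$.

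The genuine gap is your treatment of unbalanced monomials. You assert that $\mathbb{E}_\nu M$ vanishes for $s\ne s'$ ``because the design property is compatible with this averaging, at least up to the allowed multiplicative error.'' This is not correct. The design condition in \defref{design} is formulated entirely in terms of $\ad_{U^{\ot t}}$, which is invariant under $U\mapsto e^{i\theta}U$; consequently it puts no constraint whatsoever on unbalanced moments $\mathbb{E}_\nu\,U^{\ot s}\ot(U^*)^{\ot s'}$ with $s\ne s'$, and there exist exact $t$-designs (by \defref{design}) with nonzero such moments --- the Haar measure on $SU(N)$ being the cleanest example, since $\Delta_{\mu_{SU(N)},t}=\Delta_{\mu_{U(N)},t}$ for every $t$ while $\mathbb{E}_{SU(N)}[\det U]=1$. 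So one cannot conclude $|\mathbb{E}_\nu M - \mathbb{E}_{\text{Haar}} M|\le 2\eps$ for unbalanced $M$ from \defref{design} alone. Low's own theorem uses a monomial-based definition that controls all monomials directly; the restatement here tacitly restricts to functions $f$ whose expansion is balanced --- which is the case for the only $f$ it is applied to, namely $f_M(U)=\bra{0^n}U^\dag M U\ket{0^n}$, where $(f_M-\mu)^{2m}$ is again balanced. Your proof is correct and complete in that balanced regime (and hence for the paper's application), but to match the lemma as literally stated you would either have to assume $f$ balanced, or replace the appeal to phase invariance by a hypothesis on $\nu$ that genuinely controls unbalanced moments.
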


Let us turn to the proof of \corref{hiding}.

\restatecor{cor:hiding}

\begin{proof} 
Consider a fixed POVM element $0 \leq M \leq I$. Let us apply Lemma
\ref{Lowconcentration} with $f_M(U) := \bra{0^{n}}U^{\cal y} M U
\ket{0^{n}}$. We have $D = d^{n}$, $K = 2$ and $\mu = \tr (M)/d^{n}\leq 1$. We can upper bound $\alpha(f)$ by $\sum_{i, j} |M_{i, j}| \leq d^{2n}$. Moreover, by Levy's lemma \cite{Ledoux},
\be
\Pr_{U \sim \mu_{\text{Haar}}} \left( |f_M(U) - \mu| \geq \delta  \right) \leq 2 e^{-\frac{d^{n} \delta^{2}}{140} }.
\ee
Choose
\be
 t = \L\lfloor \L(\frac{s}{1400 n^2 d^2 \log(d)}\R)^{1/11}\R\rfloor
\qquad\text{and}\qquad
\eps= \L(\frac{35t}{(d^{2n}+1)^2d^n}\R)^{t/4}.
\ee
From \corref{main-design} we get that local random quantum circuits of
size $s$ form an $\eps$-approximate unitary $t$-design.
Then using Lemma \ref{Lowconcentration} with $m = t/4$
\begin{eqnarray} \label{singletermbound}
\Pr_{U \sim \nu_{\text{LR}, d, n}^{*s} } \left( |f_{M}(U) - \mu| \geq \delta/2  \right) 
&\leq& 
\frac{1}{(\delta/2)^{t/2}} 
\left( 2 \left( \frac{140t}{4d^{n}}  \right)^{t/4} + 
2\eps (d^{2n}+1)^{t/2}  \right) \nonumber \\
&=& 4 \left( \frac{140t}{d^{n}\delta^2}  \right)^{t/4}.
\end{eqnarray}

Next, we let $S_r$ be a $\delta/2$-covering of the set of circuits of size $r$
(see \lemref{eps-net-circuits} for a definition).  By
\lemref{eps-net-circuits} and using $n\leq r$, we can assume
\be
|S_r| \leq \binom{n}{2}^r \left(\frac{20r}{\delta}\right)^{rd^4}
\leq \L(\frac{r}{\delta}\R)^{2rd^4}.
\ee

The quantity we are interesting in upper bounding is
\begin{eqnarray}
p:=\Pr_{U \sim \nu_{\text{LR}, d, n}^{*s} } \left(   \max_{M \hspace{0.05 cm} \in
    \hspace{0.05 cm} \text{size}(r)} |f_{M}(U) - \mu| \geq \delta
\right)
\leq 
\Pr_{U \sim \nu_{\text{LR}, d, n}^{*s}} \left(   \max_{M \in S_r} |f_{M}(U) - \mu| \geq \delta/2
\right)
\end{eqnarray}
Using a union bound, this latter probability is
\be
\leq \L(\frac{r}{\delta}\R)^{2rd^4}
\cdot 3 \left( \frac{560t}{d^{n}\delta^2}  \right)^{t/4}.
\ee
Substituting our choice of $t$, we see that this probability is
negligible when $\delta$ and $d$ are constant and $r\log(r) 
\ll s^{1/11}n^{9/11}$.
\end{proof}

Finally we prove

\begin{replem}{cor:followingcircuitsbyhaar}
\followingcircuitsbyhaar
\end{replem}

\begin{proof}
Let $-I \leq M \leq I$ and $\ket{\psi} \in \left(\mathbb{C}^d \otimes \mathbb{C}^d\right)^{\otimes n}$ be such that
\begin{eqnarray}  \label{bound}
X &:=& \L\|\int_{\bbU(d^n)} \ad_{C_U} \nu_{{\rm LR},d,n}^{*r}({\rm d}U) - \int_{\bbU(d^n)}
\ad_{C_U} \mu_{\rm Haar}({\rm d}U)\R\|_\diamond  \\ &=& \tr\left( M \left( \int_{\bbU(d^n)} (C_U \otimes I) \ket{\psi}\bra{\psi}(C_U \otimes I)^{\cal y} \nu_{{\rm LR},d,n}^{*r}({\rm d}U) - \int_{\bbU(d^n)} (C_U \otimes I) \ket{\psi}\bra{\psi}(C_U \otimes I)^{\cal y} \mu_{\rm Haar}({\rm d}U)    \right)  \right) \nonumber
\end{eqnarray}
Using repeatedly that $\tr((A_1 \otimes ... \otimes A_k)\mathbb{V}_{1, ..., k}) = \tr(A_1A_2...A_k)$, for $\mathbb{V}_{1, ..., k}$ a representation in $\left( \mathbb{C}^{d} \right)^{\otimes k}$ of a cycle, we can write 
\be
X = \tr \left( L \left( \int_{\bbU(d^n)} U^{\otimes t, t}\otimes I^{\otimes 2rn}\nu_{{\rm LR},d,n}^{*r}({\rm d}U)  -  \int_{\bbU(d^n)} U^{\otimes t, t}\otimes I^{\otimes 2rn}\mu_{\rm Haar}({\rm d}U)    \right)    \right),
\ee
with $t \leq r$ and $L := L_1...L_p$, where each $L_k$ is given by a tensor product of unitary operators and $\ket{\psi}\bra{\psi}$. Thus $\Vert L \Vert_{1} \leq  d^{4rn} \Vert L \Vert_{\infty} \leq d^{4rn}$ and so
\begin{eqnarray}
X &\leq& \Vert L \Vert_{1} \left \Vert    \int_{\bbU(d^n)} U^{\otimes t, t} \otimes I^{\otimes 2rn}\nu_{{\rm LR},d,n}^{*r}({\rm d}U)  -  \int_{\bbU(d^n)} U^{\otimes t, t} \otimes I^{\otimes 2rn}\mu_{\rm Haar}({\rm d}U)     \right \Vert_{\infty} \nonumber \\ &\leq&  d^{4rn}  \left \Vert    \int_{\bbU(d^n)} U^{\otimes t, t}\nu_{{\rm LR},d,n}^{*r}({\rm d}U)  -  \int_{\bbU(d^n)} U^{\otimes t, t}\mu_{\rm Haar}({\rm d}U)     \right \Vert_{\infty}.
\end{eqnarray}
The statement follows from part 1 of Theorem \ref{thm:main-TPE}.
\end{proof}

\section{Proof of Corollary \ref{cor:topologicalorder}}\label{sec:topo-order}

\begin{repcor}{cor:topologicalorder} \topologicalorder{second} \end{repcor}

\begin{proof}
A standard estimate gives that for a Haar random unitary $U$:
\begin{equation}
\mathbb{E}_{U \sim \mu_{\text{Haar}}} \left \Vert \tr_{\backslash X} \left( U \ket{\psi_0}\bra{\psi_0}U^{\cal y} \right) - \tau_{X}  \right \Vert_2^2 \leq 2^{-(n - l)}.
\end{equation}
Apply part 2 of Theorem \ref{thm:main-TPE} with $d=2,t=2,\eps=2^{-n}$
to find that
\begin{equation}
\mathbb{E}_{U \sim \left(\nu_{\text{PLR}, d, n} \right)^{* C n} }
\left \Vert \tr_{\backslash X} \left( U
    \ket{\psi_0}\bra{\psi_0}U^{\cal y} \right) - \tau_{X}  \right
\Vert_2^2 \leq 2^{-(n - l)} + 2^{-n}, 
\end{equation}
for $C \leq 10^9$.
\eq{tqo1-second} then follows from the relation $\Vert Z \Vert_{1} \leq \sqrt{D} \Vert Z \Vert_2$, valid for any $D \times D$ matrix $Z$.

The proof of \eq{tqo2-second} is completely analogous, we just have to note that 
\begin{equation}
\mathbb{E}_{U \sim \mu_{\text{Haar}}} 
\left \Vert \tr_{\backslash X} \left( U
    \ket{\psi_0}\bra{\psi_1}U^{\cal y} 
\right)  \right \Vert_2^2 \leq 4\cdot 2^{-(n - l)}.
\end{equation}
\end{proof}




\section{Discussion and open questions}
Our work shows that random unitary circuits resemble Haar-uniform
unitaries in the sense of being approximate $k$ designs.  Of course,
this is not the only possible criteria for rapid mixing.  At least two
other conditions that would be interesting to investigate are the
rapid-scrambling criterion~\cite{SS08, LSHOH11} and the log-Sobolev
condition~\cite{KT12}.  In general, it makes sense to investigate
these conditions in the context of an application, and here too more
work could be done to clarify questions such as which definition of
$\eps$-approximate $t$-designs is most natural.

Another natural open question is to extend our results to an arbitrary
universal set of gates.  Currently we can only handle 
the case of sets composed of gates with algebraic entries, but we
believe this is just a technicality.

For nearest-neighbor circuits in one dimension, our results are nearly
optimal.  In other geometries, random circuits mix at least as well,
but in this case, our results are likely to be far from optimal.  For
interactions on general graphs, it is plausible that parallel random
circuits mix in time comparable to the diameter of the graph, which we
establish only in the case of graphs of linear diameter.

Another open question is whether our results can be
strengthened to prove the incompressibility of quantum circuits.  See
Application II in \secref{app} for more discussion of this point.

Finally, in physical systems, it is natural to consider random
time-independent {\em Hamiltonians}, rather than random sequences of
unitaries.  Here the situation is qualitatively unlike that of
classical time-independent stochastic processes, since the phenomenon
of Anderson localization prevents mixing in many cases (see
\cite{Stolz11} for a review).  It is an intriguing open question to
understand the cases in which rapid mixing nevertheless occurs, and in
particular to give a physically plausible derivation of the observed
phenomenon of thermalization.

\section{Acknowledgments}

We would like to thank Dorit Aharonov, Itai Arad, Winton Brown, Daniel
Jonathan, Emilio Onorati, Alex Russell, and Tomasz Szarek for useful discussions and
suggestions on approximate unitary designs, Bruno Nachtergaele for
helpful explanations about \cite{Nac96}, Kevin Zatloukal for sharing
his results about partitions and Andreas Winter for suggesting
\defref{design}. We are grateful to Juani Bermejo Vega, Dominik Hangleiter and Jonas
Haferkamp  for making us aware of mistakes in the previous version.
We would like to thank the anonymous referee for many useful comments. 
FGSLB acknowledges support from EPSRC through an Early Career 
Fellowship, the Swiss National Science Foundation, via the National Centre of Competence in Research
QSIT, the Ministry of Education and the National Research Foundation, 
Singapore, and the Polish Ministry of Science and Higher Education Grant No. IdP2011 000361. 
AWH was funded by NSF grants 0916400, 0829937, 0803478, 1111382,
1452616; DARPA
QuEST contract FA9550-09-1-0044; and the IARPA MUSIQC and QCS
contracts. MH is supported by EU grant QESSENCE, the project QUASAR of
the National Centre for Research and Development of Poland and by
Polish Ministry of Science and Higher Education grant N N202
231937. Part of this work was done in National Quantum Information
Center of Gdansk. We thank the hospitality of Institute Mittag Leffler
within the program ``Quantum Information Science'', where (another)
part of this work was done.

\appendix

\section{Facts about designs and tensor-product expanders}
\label{app:design-facts}
\begin{definition}\label{def:sym}
Define the {\em symmetric subspace} of $(\bbC^N)^{\ot t}$ to be the set
of vectors that are invariant under $V_N(\pi)$ (defined in
Eq. \eq{Vd-def-first}) for all $\pi\in \cS_t$.
Denote this subspace by $\vee^t\bbC^N$.  Note that $\dim\vee^t\bbC^N =
\binom{N+t-1}{t}$. 
\end{definition}

Before proving \lemref{design-defs}, we relate the design condition of
\defref{design} to an easier-to-prove condition.

\begin{lem}\label{lem:design-J}
For $\nu$ a measure on $\bbU(N)$, define:
\begin{subequations}\label{eq:Delta-J-defs}
\ba \rho_\nu &:= (\Delta_{\nu, t}\ot\id)(\Phi_N^{\ot t}) \\
\rho_{\text{Haar}}& := (\Delta_{\mu_{\text{Haar}},
  t}\ot\id)(\Phi_N^{\ot t})
\ea \end{subequations}
Then 
\benum
\item The support of $\rho_\nu$ and $\rho_{\text{Haar}}$ is contained in
  $S:=\vee^t(\bbC^N \ot \bbC^N)$.
\item The minimum eigenvalue of $\rho_{\text{Haar}}|_S$ is $\geq N^{-2t}$.
\item 
\be G(\nu,t) \leq N^{2t} \|\rho_\nu - \rho_{\text{Haar}}\|_\infty
\label{eq:design-from-J}\ee
\eenum
\end{lem}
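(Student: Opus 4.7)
My plan is as follows.

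For part 1 (support in $S$): I would first observe that $\ket{\Phi_N^{\ot t}}$ itself lies in $S$, since the simultaneous relabelling $V_N(\pi)\ot V_N(\pi)$ fixes a maximally entangled pair for every $\pi\in\cS_t$. Because $V_N(\pi)$ commutes with $U^{\ot t}$, the rotated vector $(U^{\ot t}\ot I)\ket{\Phi_N^{\ot t}}$ is still fixed by $V_N(\pi)\ot V_N(\pi)$, and hence still lies in $S$. Averaging the rank-one states $(U^{\ot t}\ot I)\,\Phi_N^{\ot t}\,((U^{\dagger})^{\ot t}\ot I)$ over $\nu$ or over the Haar measure preserves this containment, giving $\supp(\rho_\nu),\supp(\rho_{\text{Haar}})\subseteq S$.

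For part 2 (spectrum of $\rho_{\text{Haar}}|_S$): I would diagonalize $\rho_{\text{Haar}}$ via Schur--Weyl duality. Decompose $(\bbC^N)^{\ot t}\cong\bigoplus_{\lambda\vdash t,\,\ell(\lambda)\leq N}V_\lambda\ot W_\lambda$ into joint irreps of $\bbU(N)\times\cS_t$, and pick a Schur--Weyl-adapted basis on side $A$ together with its complex-conjugate on side $B$, so that $\ket{\Phi_N^{\ot t}}$ decomposes as a direct sum over $\lambda$ of maximally entangled states on $(V_\lambda\ot W_\lambda)_A\ot(V_\lambda\ot W_\lambda)_B$. Applying $\Delta_{\mu_{\text{Haar}},t}$ to side $A$, the key input is Schur orthogonality for matrix coefficients of the irrep $V_\lambda$: it kills all cross-$\lambda$ terms and replaces the $V_\lambda^A$-factor of the $\lambda$-block by the maximally mixed state, while leaving the $W_\lambda^A$-factor intact. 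The resulting state has eigenvalue $\dim W_\lambda/(N^t\dim V_\lambda)$ on the $\lambda$-block, with multiplicity $(\dim V_\lambda)^2$. Summing multiplicities gives $\sum_\lambda(\dim V_\lambda)^2=\dim S$, so these eigenvalues exhaust the spectrum. Since $\dim V_\lambda\leq N^t$ and $\dim W_\lambda\geq 1$, every such eigenvalue is at least $N^{-2t}$.

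For part 3 (design bound from operator norm): Part 1 says $\rho_\nu-\rho_{\text{Haar}}$ is supported in $S$, so $\pm(\rho_\nu-\rho_{\text{Haar}})\preceq\|\rho_\nu-\rho_{\text{Haar}}\|_\infty\,P_S$. Part 2 says $P_S\preceq N^{2t}\rho_{\text{Haar}}$. Chaining these gives
\begin{equation*}
(1-N^{2t}\|\rho_\nu-\rho_{\text{Haar}}\|_\infty)\,\rho_{\text{Haar}}\;\preceq\;\rho_\nu\;\preceq\;(1+N^{2t}\|\rho_\nu-\rho_{\text{Haar}}\|_\infty)\,\rho_{\text{Haar}},
\end{equation*}
which is exactly the multiplicative sandwich in \eq{design-def-J} with $\eps=N^{2t}\|\rho_\nu-\rho_{\text{Haar}}\|_\infty$, proving \eq{design-from-J}.

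The only technical step is part 2; the main subtlety there is the basis choice that makes complex conjugation interact cleanly with the Schur--Weyl decomposition, so that $\ket{\Phi_N^{\ot t}}$ becomes block-diagonal across $\lambda$ and the Haar integral reduces to Schur orthogonality on the $V_\lambda$-factor. Parts 1 and 3 are a symmetry argument and a positive-semidefinite ordering manipulation respectively.
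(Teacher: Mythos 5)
Your proposal is correct and follows essentially the same route as the paper: part 1 via invariance of each rank-one term under $V_N(\pi)\ot V_N(\pi)$, part 2 via Schur--Weyl duality and Schur orthogonality giving eigenvalues $\dim W_\lambda/(N^t\dim V_\lambda)$ with multiplicity $(\dim V_\lambda)^2$, and part 3 by chaining the PSD orderings. The only cosmetic difference is that the paper explicitly identifies $\lambda=(t)$ as the minimizer and bounds $\binom{N+t-1}{t}\leq N^t$, whereas you bound every eigenvalue at once by $\dim V_\lambda\leq N^t$, $\dim W_\lambda\geq 1$; both are valid.
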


\begin{proof}
\benum
\item
Each density matrix is a mixture of states of the form $((U \ot I)\ket{\Phi_N})^{\ot
  t}$, each of which individually belongs to $\vee^t(\bbC^N \ot
\bbC^N)$.
\item
We will use Schur duality (see \cite{GW98,Har05} for
reviews).   Schur duality implies that
\be \ket{\Phi_N}^{\ot t} \cong
\sum_{\lambda \in \Par(t,N)} \sqrt{\frac{\dim\cQ_\lambda^N
    \dim\cP_\lambda}{N^t}}
\ket{\lambda,\lambda}\ot\ket{\Phi_{\cQ_\lambda^N}}
\ot\ket{\Phi_{\cP_\lambda}},\ee
where $\Par(t,N)$ denotes the partitions of $t$ into $\leq N$ parts,
$\cQ_\lambda^N$ denotes the irrep of $\bbU(N)$ corresponding to
partition $\lambda$, 
$\cP_\lambda$ the irrep of $\cS_t$ corresponding to $\lambda$, and
$\ket{\Phi_{\cQ_\lambda^N}}$, $\ket{\Phi_{\cP_\lambda}}$ refers to
maximally entangled states on pairs of these spaces.
Then 
\ba \rho_{\text{Haar}} 
&= \int_{\bbU(N)} (\ad_{U \ot I}(\Phi_N))^{\ot
  t} \mu_{\text{Haar}}({\rm d}U) \\
& \cong 
\sum_{\lambda \in \Par(t,N)} \frac{\dim\cQ_\lambda^N
    \dim\cP_\lambda}{N^t}
\proj{\lambda}^{\ot 2}\ot 
\L(\frac{I_{\cQ_\lambda^N}}{\dim \cQ_\lambda^N}\R)^{\ot 2}
\ot \proj{\Phi_{\cP_\lambda}}.
\ea
Restricting to $\vee^t(\bbC^N \ot\bbC^N)$, we find that the minimum
eigenvalue is $\min_\lambda
\frac{\dim\cP_\lambda}{N^t \dim\cQ_\lambda^N}$.  This minimum is
achieved by the symmetric irrep $\lambda = (t)$, for which
$\dim\cP_\lambda=1$ and $\dim\cQ_\lambda^N = \binom{N+t-1}{t} \leq
N^t$.  Thus $\lambda_{\min}(\rho_{\text{Haar}}|_S) \geq N^{-2t}$.
\item This follows from parts 1 and 2 of this Lemma, along with \eq{design-def-J}.
\eenum
\end{proof}

\begin{proof}[Proof of \lemref{design-defs}]
Let $\Theta := \Delta_{\nu,t} - \Delta_{\mu_{\text{Haar}},t}$.
Then
\ba
 \|\Theta\|_\diamond &= \max_\psi \|(\Theta \ot \id)(\psi)\|_1 \\ 
&= \max_{0\leq M_-,M_+ \leq I} \max_\psi
\L[\tr M_+(\Theta \ot \id)(\psi) - \tr M_-(\Theta \ot \id)(\psi)\R] \\
&\leq 2\eps,
\ea
where the last line used \eq{design-def}.

Conversely, suppose that 
\be \| \Delta_{\mu_{\text{Haar}, t}} -\Delta_{\nu, t}\|_\diamond \leq
\eps \label{eq:diamond-bound}\ee
Define $\rho_\nu,\rho_{\text{Haar}}$ as in \eq{Delta-J-defs}.
Then by \eq{diamond-bound} we have 
\be \eps \geq 
\|\rho_\nu -\rho_{\text{Haar}}\|_1  \geq
\|\rho_\nu -\rho_{\text{Haar}}\|_\infty 
\ee
The desired claim now follows from \lemref{design-J}.
\end{proof} 

\begin{proof}[Proof of \lemref{design-expander}]
For the first inequality, observe that
\be
g(\nu, t) \stackrel{(1)}{\leq} N^{t/2} \| \Delta_{\nu, t} -
\Delta_{\mu_{\text{Haar}}, t}\|_\diamond \stackrel{(2)}{\leq} 2N^{t/2} G(\nu, t),
\ee
Here (1) is from part of Lemma 2.2.14 of \cite{Low10} (see in particular
Fig 2.1 of \cite{Low10}), with the
definition OPERATOR-2-NORM from \cite{Low10} corresponding to the TPE
condition here, and (2) is from \lemref{design-defs}.

For the second inequality, we use the fact that the $2\rar 2$ norm is
stable under tensoring with the identity 
map to obtain
\be \| (\Delta_{\nu, t} - \Delta_{\mu_{\text{Haar}}, t}) \ot \id_{N^t}
\|_{2\rar 2} \leq g(\nu,t).
\ee
Thus, defining $\rho_\nu, \rho_{\text{Haar}}$ as in \eq{Delta-J-defs},
we have
\be g(\nu, t)\geq \|\rho_\nu - \rho_{\text{Haar}}\|_2 
\geq \|\rho_\nu - \rho_{\text{Haar}}\|_\infty.
\ee
Thus, \lemref{design-J} implies that $G(\nu, t) \leq N^{2t}g(\nu, t)$.
\end{proof}


\end{document}